\definecolor{darkred}  {rgb}{0.5,0,0}
\definecolor{darkblue} {rgb}{0,0,0.5}
\definecolor{darkgreen}{rgb}{0,0.5,0}
\pgfmathsetmacro\MathAxis{height("$\vcenter{}$")}
\renewcommand{\polylog}{\mathrm{polylog}}
\renewcommand{\poly}{\mathrm{poly}}
\newcommand{\cnot}{\mathsf{CNOT}}
\renewcommand{\epsilon}{\varepsilon}
\newcommand{\under}[2]{\underbrace{#1}_{\substack{#2}}}
\newcommand{\error}{\text{-}\mathsf{error}}
\newcommand{\adj}{\text{-}\mathsf{adj}\text{-}\mathsf{error}}
\newcommand{\Enc}{\mathsf{Enc}}
\newcommand{\Dec}{\mathsf{Dec}}
\newcommand{\rec}{\mathsf{Rec}}
\newcommand{\ec}{\mathsf{EC}}
\newcommand{\meas}{\mathsf{Meas}}
\newcommand{\ho}{\mathsf{Hook}}
\newtheorem{properties}{Properties}
\newtheorem{theorem}{Theorem}[section]
\newtheorem{lemma}[theorem]{Lemma}
\newtheorem{definition}{Definition}[section]
\newtheorem{fact}{Fact}[section]
\newtheorem{corollary}[theorem]{Corollary}
\newcommand{\suppress}[1]{}
\begin{document}

\title{A Constant Rate Quantum Computer on a Line}

\date{\today}

 \author{
Craig Gidney\thanks{Google Quantum AI, California, USA. \href{mailto:craig.gidney@gmail.com}{craig.gidney@gmail.com}}
\and 
Thiago Bergamaschi\thanks{UC Berkeley and Google Quantum AI, California, USA. \href{mailto:thiagob@berkeley.edu}{thiagob@berkeley.edu}}
}

\maketitle

\begin{abstract}
We prove by construction that the Bravyi-Poulin-Terhal bound on the spatial density of
stabilizer codes does not generalize to stabilizer circuits. To do so, we construct a fault tolerant quantum computer with a coding rate
above 5$\%$ and quasi-polylog time overhead, out of a line of qubits with nearest-neighbor connectivity, and prove it has a
threshold. The construction is based on modifications to the tower of Hamming codes of Yamasaki and Koashi (Nature Physics, 2024), with operators measured using a variant of Shor’s measurement gadget. 

\end{abstract}

{
  \hypersetup{linkcolor=blue}
  \setcounter{tocdepth}{1}
  \tableofcontents
  
}

\newpage
\section{Introduction}

A ``stabilizer code'' is a simple way of describing a quantum error correction strategy~\cite{gottesman1997stabilizerformalism}.
A stabilizer code specifies a list of stabilizers (a set of commuting Pauli product operators) as well as a list of logical qubits (each an anticommuting pair of Pauli product operators).
The intent is that, by measuring the stabilizers, you can catch errors and thereby protect quantum information encoded into the logical qubits.

Stabilizer codes can't describe many forms of quantum fault tolerance. The stabilizers of a stabilizer code must all commute \cite{bacon2006operator}, so stabilizer codes can't represent the Bacon-Shor code~\cite{bacon2006operator}. The observables of a stabilizer code can't change, so stabilizer codes can't represent the Honeycomb code~\cite{hastings2021dynamically}. Stabilizer codes don't include information about \emph{how} to measure the stabilizers, so they can't describe concepts like ``measurement qubits''~\cite{fowler2012surfacecodereview}, ``flag qubits''~\cite{chao2018flags}, ``hook errors''~\cite{fowler2012surfacecodereview}, or measuring expensive operators less often~\cite{gidney2023lessbacon,gidney2024yoked}. Individually, these limitations can be fixed by extending the definition of a stabilizer code. But attempting to simultaneously include all the extensions necessary to fix these limitations, would destroy the simplicity that makes stabilizer codes so useful as a language for quantum error correction.

A ``stabilizer circuit'' is a quantum circuit built out of operations that can be efficiently analyzed using the stabilizer formalism~\cite{gottesman1997stabilizerformalism,aaronson2004chp}.
Stabilizer circuits can use Hadamard gates ($H$), controlled-not gates ($C_X$), phase gates ($S$), measurement gates ($M_Z$), reset gates ($R_Z$), and classical feedback.
This allows stabilizer circuits to implement arbitrary Clifford operations and arbitrary Pauli product measurements, but isn't sufficient for universal quantum computation without some additional ingredient (such as the ability to produce magic states~\cite{bravyi2005distillation}). Although they are strictly more complicated than stabilizer codes, stabilizer circuits can represent a wider variety of fault tolerant constructions \cite{gidney2021stim,mcewenmidoutsurfaces2023,delfosse2023spacetimecode,bombin2024zxunify}.


Bounds on quantum error correction are often proved first for stabilizer codes.
For example, in a seminal result, Bravyi, Poulin, and Terhal proved a bound on the rate and distance of stabilizer codes whose stabilizers are local on a two dimensional grid.
In particular, they proved that 2D $[[n,k,d]]$ stabilizer codes must satisfy $kd^2 = O(n)$~\cite{bravyi2010stabilizerbound2}.\footnote{Where $n$ is the number of physical qubits, $k$ is the number of logical qubits, and $d$ is the code distance.}
In 2010, Bravyi extended this work to gauge codes~\cite{bravyi2011gaugebound}.
Crucially, he showed that the \cite{bravyi2010stabilizerbound2} bound \emph{didn't} apply to gauge codes.
Instead, gauge codes are restricted by the looser bound of $kd = O(n)$.
This naturally raises the question: does this bound continue to loosen as more possibilities are considered?
In particular, How do these bounds on spatial density generalize from 2D local codes to 2D local \emph{circuits}? Quantum circuits can use a series of local gates to measure non-local stabilizers, so this isn't a trivial question.

Unfortunately, the true limits of the space \& time overheads of noisy stabilizer circuits still remain much less well understood. Recently, Baspin, Fawzi, and Shayeghi extended Bravyi et al's work to certain families of stabilizer circuits~\cite{baspin2023circuitbound}. They proved that any stabilizer circuit in 2D of logical error rate $\delta$ - with a shallow decoding circuit (cf. \cref{section:related}) - must have a spatial overhead (an inverse rate) of at least $\Omega(\sqrt{\log \delta^{-1}})$.
Conceptually, their argument is based on the fact that high-rate quantum error correcting codes must be highly entangled.
Any error-correction circuit must create and maintain this long-range entanglement, but a geometrically-local, noisy quantum circuit is limited in how quickly and reliably it can do so.

In this paper, we prove that the \cite{bravyi2010stabilizerbound2} bound  cannot fully generalize to stabilizer circuits. To do so, we construct a constant-rate quantum memory, whose operations can be fully realized using nearest-neighbor gates in 1 dimension (on a line). Our construction is built on the concatenated ``tower of Hamming codes" of Yamasaki and Koashi \cite{Yamasaki_2024}, and leverages multi-scale error correction to bypass the no-go result of \cite{baspin2023circuitbound}. By further combining our memory with a magic state distillation protocol, we show how to achieve fault-tolerant quantum computation, with a constant space overhead, in 1D. 


\subsection{Our Contributions}

A technical statement of our contributions follows. We refer the reader to \cref{section:preliminaries} for formal definitions. Ultimately, our main result is the following theorem.

\begin{theorem}[A Constant Rate Quantum Memory]\label{theorem:main}
    There exists an infinite family of quantum memories $M_n$, such that 
    \begin{enumerate}
    \item $M_n$ uses $n$ physical qubits to implement more than $n/20$ logical qubits.
    \item $M_n$ is implemented by a stabilizer circuit using nearest-neighbor gates on a line, while subjected to local stochastic noise.
    \item Below some local stochastic noise threshold $p$ independent of $n$, the per-circuit-cycle logical error rate of $M_n$ is $$\exp(-\exp(\Omega(\log^{1/3} n))).$$
    \end{enumerate}
\end{theorem}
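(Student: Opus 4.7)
The plan is to combine three ingredients: the concatenated tower of Hamming codes of Yamasaki--Koashi, a recursive nested-intervals embedding on the line, and a flag-protected ($\hook$) variant of Shor's measurement gadget that measures each stabilizer using only nearest-neighbor Clifford operations.

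First I would fix a sequence of Hamming codes $[[2^{r_\ell}-1,\, 2^{r_\ell}-r_\ell-1,\, 3]]$ at concatenation levels $\ell = 1, \ldots, L$, with $r_\ell$ chosen slowly growing in $\ell$ so that (i) the product rate $\prod_\ell (2^{r_\ell}-r_\ell-1)/(2^{r_\ell}-1)$ stays above $1/20$, settling item 1 of the theorem, and (ii) the sum $\sum_\ell r_\ell = \Theta(\log n)$ admits $L = \Theta(\log^{1/3} n)$ levels, which ultimately sets the scaling in item 3. Next I would lay the concatenated code out on a line by recursively nesting blocks: each level-$\ell$ block, together with its ancilla workspace, occupies a contiguous segment of the line, and the segment of a level-$(\ell{+}1)$ block is the concatenation of the segments of its inner level-$\ell$ blocks. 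The key consequence is that every stabilizer at level $\ell$ has support contained in a single segment whose length is the product of the inner block sizes, so a 1D nearest-neighbor gadget suffices to measure it.

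Inside each segment I would measure the stabilizer with a variant of Shor's gadget: prepare a cat state on ancilla qubits interleaved within the segment (built by a linear-depth $\cnot$ chain), couple the $i$-th cat qubit to the $i$-th data qubit of the stabilizer support via a local $\cnot$, then measure the cat state in the $X$ basis. A naive implementation admits hook errors: a single fault in the cat chain can spread to a weight-$2$ data error, which at higher levels of concatenation would violate the distance and kill the threshold. I would therefore replace it by the $\hook$ variant, which inserts flag qubits verifying the parity of sub-segments of the cat state and triggers re-measurement whenever a flag fires, so that the effective error weight of any single fault on the measured data block is at most $1$. Proving this ``single-fault hookless'' guarantee inside the 1D gadget, with depth only $\polylog n$, is the technical core.

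With the gadget in hand the final step is the standard concatenation bootstrap: below some physical noise rate $p_{\rm th}$ independent of $L$, a single-level logical block fails with probability at most $C p^2 / p_{\rm th}$ per cycle, and iterating $L$ levels yields per-block failure $(p/p_{\rm th})^{2^{\Omega(L)}}$; setting $L = \Theta(\log^{1/3} n)$ as arranged in the first step produces the $\exp(-\exp(\Omega(\log^{1/3} n)))$ target, while the cycle depth accumulates only a $\polylog n$ factor per level. The main obstacle I expect is exactly the 1D hookless analysis: cat-state chains become long on a line, so single faults can propagate far, and the flag-qubit verification has to catch every dangerous propagation pattern while remaining itself nearest-neighbor and costing only $\polylog n$ depth per round, without inflating the ancilla count past what the $n/20$ rate budget permits at every level of concatenation simultaneously.
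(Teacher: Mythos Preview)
Your proposal captures the high-level architecture (tower of Hamming codes, recursive 1D layout, Shor-style cat-state measurement), but it misses the central structural obstruction that the paper is built around, and your proposed fix does not address it.

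The obstruction, already noted by Svore--DiVincenzo--Terhal and recalled in the paper, is this: in 1D, to measure a level-$\ell$ stabilizer spanning several level-$(\ell{-}1)$ blocks, the mediating cat state must cross block boundaries. A single two-qubit fault on a cross-boundary operation can arbitrarily corrupt \emph{two adjacent} level-$(\ell{-}1)$ blocks simultaneously. If those two blocks are both data qubits of the same level-$\ell$ Hamming code, you now have a weight-$2$ error on a distance-$3$ code, which is uncorrectable. Flag qubits do not help here: flags detect when a cat-state fault would flip the measurement outcome or leak into the data, but in this failure mode the adjacent data blocks are damaged by the faulty gate itself, and re-measuring cannot undo that damage.

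The paper's fix is structural rather than detection-based. First, at every level a logical qubit of each block is \emph{reserved} to store a cat-state qubit, so the cat state is itself encoded and the only cross-boundary operations are logical $ZZ$ measurements between reserved qubits of adjacent blocks. Second, and crucially, each level is interleaved with a copy of itself (the $2\otimes$ operation in the definition of $C_{m+1}$): adjacent level-$(\ell{-}1)$ blocks are routed into \emph{different} copies of the level-$\ell$ Hamming code. A single fault that corrupts two adjacent blocks then lands as one single-qubit error on each of two Hamming codes, both correctable. The entire correctness proof is organized around this, via the notion of an $r\adj$ and the invariant ``each block carries at most one $r\adj$''. Your proposal has no analogue of either the reserved logical qubits or the $2\otimes$ interleaving, and without them the single-fault hookless guarantee you want simply fails at levels $\ell\geq 1$.

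Two smaller points. The paper's hookless mechanism is not flag-based: the cat state is hardened by three rounds of a nearest-neighbor repetition code ($ZZ$ measurements), with no re-measurement or post-selection. And your depth accounting is off: the per-level overhead is the Hamming block length $2^{\Theta(\ell)}$, not $\polylog n$, so the cycle depth is $2^{\Theta(r^3)}$; the paper handles this by taking $r=\Theta(\log^{1/3} n)$ and tiling the $n$ qubits with many independent copies of $C_r$, rather than filling all $n$ qubits with a single concatenation tower.
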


We assume that $M_n$ is decoded by an efficient (polynomial-time bounded) classical control system, that uses long range communication and perfect operations. Only the quantum part of the memory is local and noisy. However, we emphasize that no classical \textit{feedback} is required for the purposes of simply maintaining the logical quantum information as in \cref{theorem:main}. We simply use the classical control system to store and update syndrome measurements.

To operate on our memory $M_n$, we design fault-tolerant gadgets to perform arbitrary Pauli-product measurements. This enables us to measure stabilizers, logical information, and more generally to perform arbitrary Clifford gates. To achieve a universal set of gates, we combine our memory with a magic state distillation protocol \cite{bravyi2012}. Ultimately, we prove the following theorem on the fault-tolerant simulation of 1D quantum circuits, with a constant space overhead and quasi-$\polylog$ time overhead. 

\begin{theorem}[Fault-Tolerant Computation]\label{theorem:main-ft}
    Let $C$ be a quantum circuit on $m$ qubits configured on a line, which can be implemented using $d$ alternating layers of nearest neighbor, two qubit gates. Then, for any desired target accuracy $\epsilon$ bounded by
    \begin{equation}
        \epsilon \geq d\cdot \exp(-\exp(O(\log^{1/3} m))),
    \end{equation}

    \noindent there exists a fault-tolerant simulation $C_\epsilon$ of $C$, satisfying
    
    \begin{enumerate}

    \item $C_\epsilon$ uses $\leq 20\cdot m$ physical qubits to implement $m$ logical qubits. 

    \item $C_\epsilon$ is implemented by a depth $d\cdot \exp(O(\log^3\log\frac{m\cdot d}{\epsilon}))$ stabilizer circuit using nearest neighbor gates on a line, while subjected to local stochastic noise.
    
    \item Below some threshold noise rate $p$ independent of $m$, the logical error rate of $C_\epsilon$ is $\epsilon$.
    \end{enumerate}
\end{theorem}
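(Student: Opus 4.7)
The plan is to reduce universal fault tolerance to three primitives already sitting inside \cref{theorem:main}: a constant-rate 1D memory $M_n$, the Shor-style Pauli-product measurement gadget used to measure its stabilizers, and the $\Omega(m)$ logical ancilla qubits that remain after storing data. I would take $n = 20m$, dedicate $m$ logical qubits to the data and keep the rest as a scratch region at one end of the line, arranging the data qubits in the same linear order as the qubits of $C$ so that nearest-neighbor gates of $C$ remain nearest-neighbor at the logical level.

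For each Clifford layer of $C$, I would implement every two-qubit NN gate by the standard gate-teleportation identity, which reduces it to a constant-length sequence of Pauli-product measurements on two logical qubits plus a fresh logical ancilla, followed by a classically-conditioned Pauli update. Because the measured operators are supported on $O(1)$ logically adjacent qubits, each such measurement invokes the gadget of \cref{theorem:main} for $\polylog(n)$ physical cycles, and thus a full logical Clifford layer of $C$ is simulated in $\polylog(n)$ cycles.

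Non-Clifford $T$ gates are handled by a magic-state distillation factory (e.g.\ the Bravyi-Haah $15\to 1$ protocol) running inside the ancilla region. The factory is itself a Clifford circuit plus single-qubit ancilla measurements, so it is implementable with the tools of the previous step; noisy input magic states are produced by post-selected state injection at the code block on logical ancillas. Iterating the protocol $O(\log\log(md/\epsilon))$ rounds drives the magic state infidelity below $\epsilon/(md)$, and each distilled state is consumed by a single Pauli-product measurement that teleports a $T$ gate onto the data. Composing with the Clifford simulation yields a universal gate set.

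A union bound over the $O(md)$ simulated gates demands per-cycle logical error at most $\epsilon/(md\cdot T_{\mathrm{cyc}})$, where $T_{\mathrm{cyc}}$ denotes the per-simulated-gate cycle cost (a product of the measurement-gadget cost and the distillation depth). Inverting the bound $\delta = \exp(-\exp(\Omega(\log^{1/3}n)))$ from \cref{theorem:main} yields $T_{\mathrm{cyc}} = \exp(O(\log^3\log(md/\epsilon)))$, matching the total depth claimed. The main obstacle is the distillation factory: standard magic-state distillation layouts are 2D or geometrically unconstrained, and one must show that confining the factory to a contiguous subinterval of the line does not blow up its depth, and that interleaving distillation with memory cycles preserves the threshold of $M_n$. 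The long-range Pauli-product measurement gadget of \cref{theorem:main} is precisely what makes this feasible, since it routes non-local operators in $\polylog$ cycles; the crux of the proof is verifying compatibility between the gadget, the concurrent memory cycles of the rest of the register, and the factory schedule.
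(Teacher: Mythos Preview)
Your high-level plan (Cliffords via Pauli-product measurements, $T$ gates via magic state distillation, union bound over $O(md)$ gates) matches the paper's, but there is a genuine gap at the step you gloss over: ``noisy input magic states are produced by post-selected state injection at the code block on logical ancillas.'' The memory $M_n$ is not one monolithic code; it is a row of independent $r$-blocks $C_r$, each an $r$-fold concatenation with block length $\exp(\Theta(r^2))$ and syndrome-extraction depth $\exp(\Theta(r^3))$. Any naive injection---preparing a physical $T\ket{+}$ and running the $C_r$ encoding circuit around it, or equivalently preparing logical $\ket{+}$ and twisting by a physical rotation---leaves the non-Clifford information unprotected for $\exp(\Theta(r^3))$ layers, so the injected state has fidelity $1-\Theta(p\cdot\exp(\Theta(r^3)))$, which tends to $0$ rather than to a constant above the distillation threshold. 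Post-selection does not help: the obstruction is that no Clifford circuit plus Pauli measurements can prepare a logical $T\ket{+}$, and the only non-Clifford resource you have is a single physical qubit that must survive an exponentially deep encoding. The paper's fix is the \emph{ladder}: between adjacent data $r$-blocks it inserts a sequence $C_0, C_1,\dots,C_{r-1}$ of increasing concatenation level, injects $T\ket{+}$ into $C_0$ at $O(p)$ error, and teleports it $C_i\to C_{i+1}$ via a logical Bell pair straddling the $C_i/C_{i+1}$ boundary. The total injection error is then $\sum_i 2^{\poly(i)}\cdot p_i = \Theta(p)$, a constant, because the level-$i$ failure probability $p_i=2^{-2^{\Omega(i)}}$ dominates the growing gadget volume. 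This cross-level teleportation (and the accompanying measurement gadgets between blocks at \emph{different} concatenation levels) is the nontrivial content of the construction, and your proposal does not supply an alternative.

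Two smaller inaccuracies compound the gap. First, the $r$-$\meas$ gadget acts only on one or two \emph{adjacent} $r$-blocks, not on arbitrary pairs, so a single scratch region ``at one end of the line'' cannot service $T$ gates on distant data without an additional shuttling protocol (which the paper supplies). Second, the per-measurement depth is not $\polylog(n)$ but $\exp(\Theta(r^3))$, which is super-polynomial in the block length; the quasi-polylog depth in the theorem arises because $r$ is chosen as $\Theta(\log\log(md/\epsilon))$, not from inverting the error-rate bound of \cref{theorem:main}. Your final depth formula happens to be correct, but the route you describe to it is not.
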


Here, we operate in a computational model where the classical control system can perform feedback operations on the memory, and that quasi-$\polylog$ time classical computation is \textit{instantaneous}; albeit this latter assumption can readily be removed using idling gates \cite{Yamasaki_2024}. Precise definitions of the model of computation, correctness, and the noise model, are made in \cref{section:preliminaries}.

\begin{figure}
    \begin{subfigure}[b]{0.55\textwidth}
\centering
    \includegraphics[width = 1\linewidth]{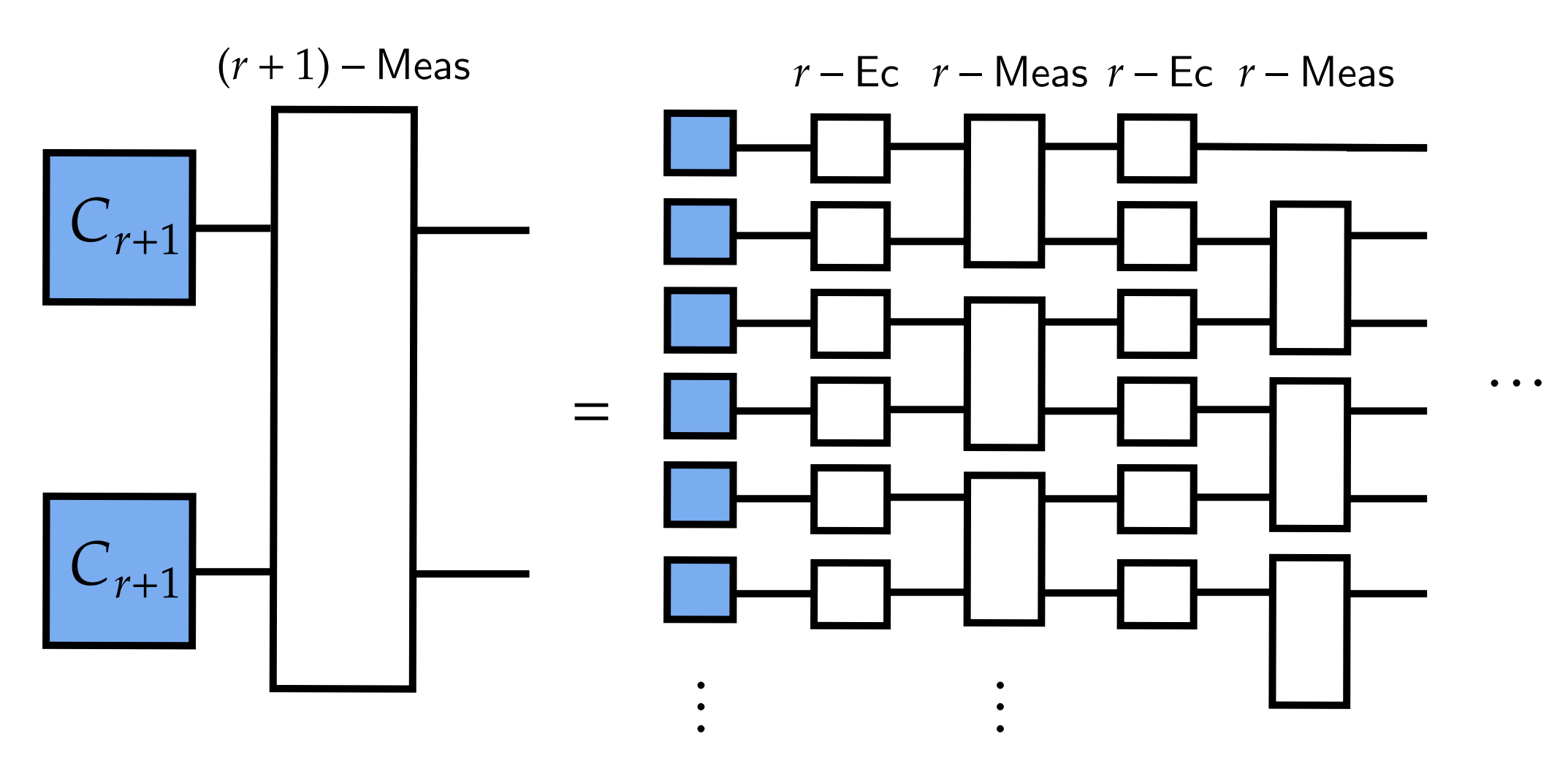}
    \caption{\footnotesize The Concatenated Simulations Framework}
\end{subfigure}
\begin{subfigure}[b]{0.45\textwidth}
\centering
    \includegraphics[width = 1\linewidth]{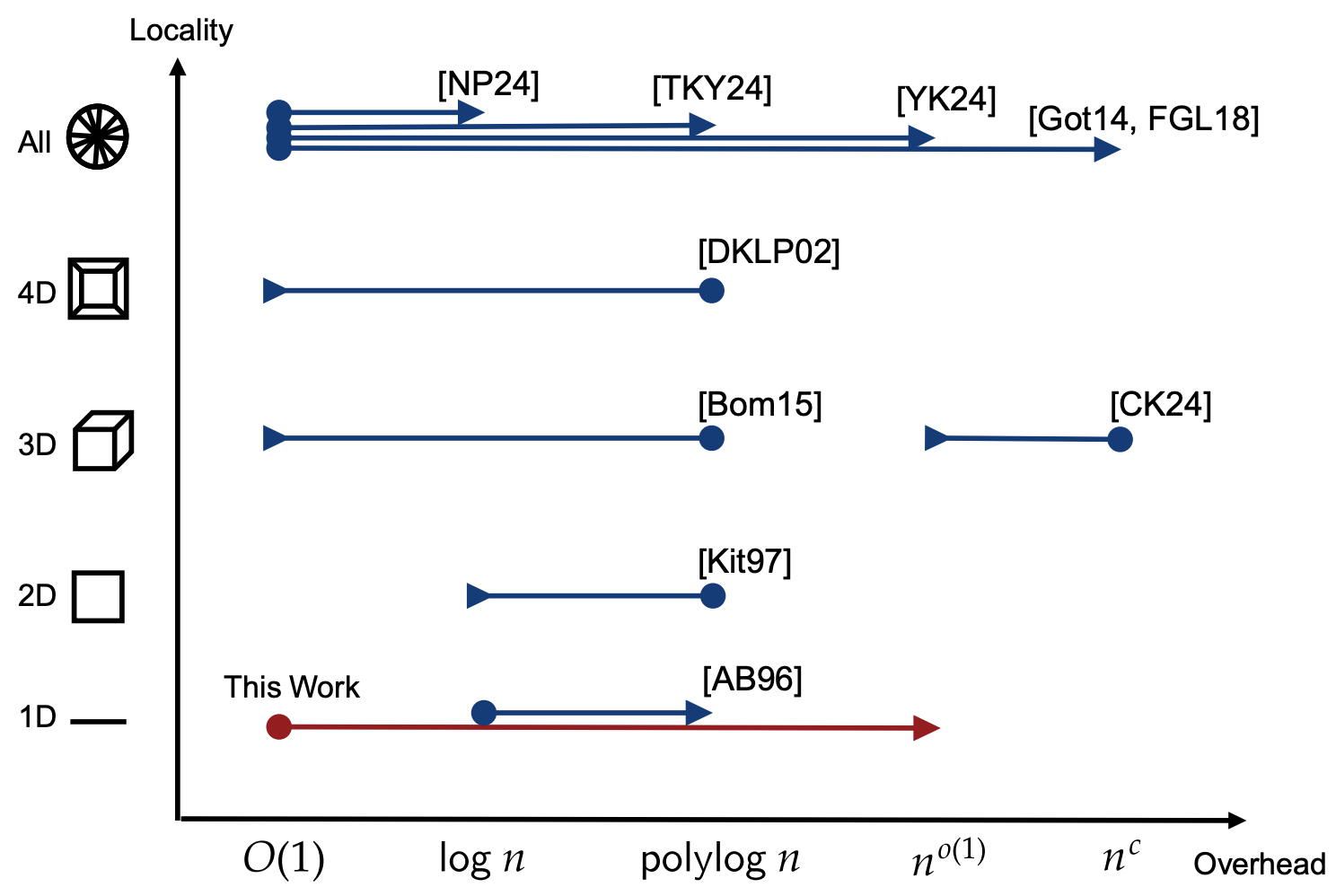}
    \caption{\footnotesize Time/Space Overhead of various FT Schemes}
\end{subfigure}
\caption{ (a) Measurements on the $(r+1)$st level of the code, are implemented recursively by alternating layers of $r$th level measurements and error correction rounds. (b) A comparison of the locality (1, 2, 3, 4D and all-to-all), the space overhead (circles), and the time overhead (triangles) of various quantum fault-tolerance schemes. }
\label{fig:related_work}
\end{figure}

\subsection{Related Work}
\label{section:related}

Our construction follows the \textit{concatenated simulations} framework originally introduced by \cite{von1956probabilistic,Gcs1983}, its adaptation to quantum fault-tolerance by \cite{Aharonov1996FaulttolerantQC, aliferis2005quantum}, and, in particular, the recent developments by \cite{Yamasaki_2024}. In this framework, a small (constant-sized) error-correcting code is repeatedly concatenated with itself, with the intent to perform a fault-tolerant simulation of some computation. At any level $r$ of concatenation, all the physical (2-qubit) gates in the circuit are replaced by constant-sized ``fault-tolerant gadgets" to define level $(r+1)$. In this sense, the $(r+1)$ level is simulating a fault-tolerant execution of the level $r$, and each level of the hierarchy becomes a more and more reliable simulation of the original computation. 

We remark that from these original works it is already well known that (quantum) concatenated codes can implement fault-tolerant quantum memories even in 1D \cite{Aharonov1996FaulttolerantQC, Gottesman_2000, Svore_2005,svore2006}.
However, their rate is inverse poly-logarithmic; not constant. \\

\noindent \textbf{The Yamasaki and Koashi quantum fault-tolerance scheme.} Most relevant to our work are the results of Yamasaki and Koashi \cite{Yamasaki_2024}. They devised a quantum fault-tolerance scheme with constant space- and quasi-polylog time-overhead, via a certain \textit{interleaved concatenation} of quantum Hamming codes of increasing rate. As we discuss below, our construction builds on their ``tower of Hamming codes" (and inherits its parameters), however, we perform gates, measurements and logic on the code using very different means. 

The scheme in \cite{Yamasaki_2024} is implemented using non-local connections.\footnote{Yamasaki \& Koashi (Nature Physics, 2024) \cite{Yamasaki_2024} claim in passing that one could embed their scheme in 2D using \cite{Aharonov1996FaulttolerantQC, Gottesman_2000}. However, doing so with constant space overhead has remained open. The authors cite \cite{baspin2023circuitbound} to claim that ``the constant space overhead would not be achievable on a single fully 2D chip"; see below.} Recently, using techniques from fault-tolerant routing, Choe and König \cite{Choe2024HowTF} showed how to embed the \cite{Yamasaki_2024} scheme in 3D with just a constant factor increase in depth, however, with a $\poly(n)$ blowup to the qubit count. In this work, we essentially achieve the best-of-both-worlds in that we accomplish a constant space overhead, while still a quasi-polylog time-overhead, in 1D.\\

\noindent \textbf{Other quantum fault-tolerance schemes}. In \cref{fig:related_work}b above, we plot the locality, and time-space overheads of various quantum fault-tolerance schemes. \cite{Kitaev1997QuantumEC, Bomb_n_2015, Dennis_2002} discussed schemes for quantum fault-tolerance with polylog space overhead based on topological quantum error-correction. Notably, \cite{Bomb_n_2015, Dennis_2002} achieve constant time overhead by leveraging single-shot decoders. In concurrent work, \cite{tamiya2024polylogtimeconstantspaceoverheadfaulttolerantquantum} developed a constant-space polylog-time scheme, and notably \cite{nguyen2024} devised a constant-space logarithmic-time overhead scheme, using recent developments in locally testable codes \cite{dinur2024expansionhigherdimensionalcubicalcomplexes}. They are the first to improve on the time overhead of \cite{Yamasaki_2024}'s scheme while maintaining constant space; however, it remains to be seen whether their ideas can be implemented in low dimensions. 

\cite{pattison2023hierarchicalmemoriessimulatingquantum} devised a 2D (a bilayer) quantum memory with inverse polylog rate, based on concatenating a good quantum LDPC code with a surface code. They similarly avoid \cite{baspin2023circuitbound}'s no-go result, by leveraging (deep) syndrome extraction circuits based on qubit routing. \cite{balasubramanian2024localautomaton2dtoric} devised a local decoder for the 2D toric code, by embedding a concatenated classical automaton into the decoder (akin to \cite{Cirelson1978ReliableSO,Gcs1983}). Notably, they do not require a noiseless classical computer operating on the memory. Repeating their scheme in parallel gives rise to a memory with inverse polylog rate.\\

\noindent \textbf{Obstructions in implementing quantum error-correction in low dimensions.} Since the seminal results of \cite{bravyi2009stabilizerbound, bravyi2010stabilizerbound2,Haah2020ADB} it is well known that quantum error-correcting codes, when implemented in low dimensions, suffer from fundamental limitations. This has led to a fruitful line of work refining their bounds \cite{Baspin2021QuantifyingNH,Hong2023LongrangeenhancedSC,Fu2024ErrorCI,Dai2024LocalityVQ}, and searching for matching constructions \cite{Portnoy2023LocalQC,Lin2023GeometricallyLQ,Williamson2023LayerC}. A related line of work lies in how locality limits \textit{circuits} that implement quantum error-correcting codes. \cite{delfosse2021boundsstabilizermeasurementcircuits} established lower bounds on the depth of syndrome measurement circuits for qLDPC codes in 2D. \\

\noindent \textbf{Do our results contradict \cite{baspin2023circuitbound}?} Baspin, Fawzi, and Shayeghi extended \cite{bravyi2010stabilizerbound2} work to certain families of noisy stabilizer circuits~\cite{baspin2023circuitbound}. In more detail (Theorem 28), they proved that stabilizer circuits in 2D encoding $k$ logical qubits via $n$ physical qubits, with decoders of depth $\Delta$ of logical error rate $\delta$, are limited to the following tradeoff: $k\cdot \sqrt{\log \delta^{-1}} \leq O(n\cdot \Delta)$. Their result can be interpreted as a time-space trade-off for the decoding channels of quantum memories when implemented in low-dimensions, with implications to their syndrome measurement circuits and to certain classes of quantum fault-tolerance schemes, akin to \cite{delfosse2021boundsstabilizermeasurementcircuits}. As our concatenated codes have deep syndrome measurement circuits, we avoid their no-go result.

\subsection{Techniques}

We dedicate this section to an overview of our memory construction, an outline of the correctness proof, and the basic idea behind our magic state distillation scheme. \\

\noindent \textbf{The Tower of Hamming Codes.} To achieve a constant space overhead, \cite{Yamasaki_2024} revisited the concatenation techniques of \cite{Aharonov1996FaulttolerantQC, aliferis2005quantum} under the \textit{interleaved} concatenated of quantum Hamming codes (of distance $3$). The \textit{interleaved concatenation} of an ``outer" $[[n_1, k_1]]$, and ``inner" $[[n_2, k_2]]$ stabilizer code creates $k_2$ copies of the former, and $n_1$ copies of the latter, and for $i\in [n_1]$ routes the $i$th physical qubit of each copy of the former into the $i$th copy of the latter (See \cref{fig:basicops}a).\footnote{While slightly non-standard, this operation enables the concatenation of \textit{any} two codes without sacrificing rate, which is not true under the standard definition \cite{Forney}.}

Concatenating a Hamming code with itself again and again would create codes with coding rates closer and closer to 0 as the amount of concatenation increased. However, if you concatenate Hamming codes without using the same Hamming code twice, then the coding rate of the concatenation is bounded away from 0.  For example, $H_m \otimes H_{m-1}\otimes \cdots H_5\otimes H_4$ converges to a coding rate of roughly 20$\%$ as $m$ increases:
\begin{equation}
    \lim_{m\rightarrow \infty} \text{Rate}\bigg(\bigotimes_{i=4}^m H_i\bigg) = \lim_{m\rightarrow \infty}\prod_{i=4}^m\frac{2^i-2i-1}{2^i-1}\approx 19.7\%.
\end{equation}

This interleaved concatenation scheme loses much of the modular and ``self-similar" structure of recursive concatenation. What is more, Svore, DiVicenzo and Terhal \cite{svore2006} noted that it is impossible to directly implement a concatenated distance 3 code in a one-dimensional architecture (see below)\footnote{``A 1D architecture necessitates swapping data qubits inside a [code] block; which may generate two-qubit 
errors on [adjacent] qubits due to one failed SWAP.  For a distance-3 code, such errors cannot be corrected" \cite{svore2006}. }. To implement operations in 1D, we will further have to introduce a series of modifications to the tower of Hamming codes of \cite{Yamasaki_2024}. \\

\noindent \textbf{Our Modifications to the Tower.} 
First and foremost, at each level of concatenation, a logical qubit will be reserved.
These reserved logical qubits will later be used to store cat states, and mediate long range measurements.
The second, and key modification is motivated by the following issue: at each level of concatenation, we will be performing logical two qubit operations, that cross between adjacent code blocks, potentially causing their failures to correlate. To mitigate this problem, we design codes that are resilient not just to individual data errors but to simultaneous \textit{adjacent} data errors. For this purpose, very roughly speaking, at each level of concatenation we will interleave the code with a copy of itself (See \cref{section:interleaved_memory}, \cref{fig:tower_of_hamming}). As we discuss below and extensively in \cref{section:threshold}, this will later prevent faulty operations that act on adjacent code-blocks (in 1D) from simultaneously breaking two of the underlying data qubits. \\


\noindent \textbf{Fault-tolerant Operations, in 1D.}
Broadly speaking, all operations on the memory are performed via some form of Pauli-product measurement gadget.
For this purpose, at each level of concatenation $r\geq 1$, we will introduce three types of gadgets:
an ``error-correction gadget'' $r$-$\ec$,
a ``fault tolerant measurement gadget'' $r$-$\meas$,
and a ``Hookless measurement gadget'' $r$-$\ho$.

$r$-$\ec$ is used to measure code stabilizers of an $r$-level code-block.
Its goal is to prevent lower-level errors from different parts of the computation from combining into higher-level errors.
$r$-$\meas$ is used to perform measurements of logical observables on one or two adjacent $r$-level code-blocks.
This is the operation the ``user'' of the code would use to implement logic, which should be more reliable than the fault tolerant measurements from the level below.
It is the operation that will be used by the level above, to implement its functionality.

The key building block to construct these gadgets, will be the hookless measurement $r$-$\ho$. $r$-$\ho$ is a non-fault-tolerant measurement whose implementation lacks ``hook errors''.
That is to say, it might output the wrong measurement but it won't damage the code in a way that local operations would not. 
$r$-$\ho$ will be implemented using a variation of Shor's \cite{shor1996faulttolerance} measurement gadget, which is in turn implemented recursively using alterning rounds of $(r-1)$-$\meas$ and $(r-1)$-$\ec$ (See \cref{fig:related_work}a).
In Shor's gadget, cat states $\ket{0^t}+\ket{1^t}$ are produced and used to mediate non-local measurements.
Cat states are a form of long-range entanglement, which nevertheless can be prepared using local measurements (and Pauli feedback, \textit{or}, tracking the Pauli correction).

We emphasize that how we perform logic on the memory is a key distinction from our work to that of \cite{Yamasaki_2024}. In implementing all our operations with Pauli-product measurements, we do not need to rely on post-selection (even for state-preparation). The memory can simply passively perform syndrome measurements.\\

\noindent \textbf{Correctness, $r\adj$s and Error-Propagation Properties.} To prove correctness in the presence of noise, we need to develop the mechanism through which errors arise and propagate throughout the circuit. We roughly follow the ``extended Rectangles" approach of \cite{aliferis2005quantum} on concatenated distance 3 codes however, because the codes we are concatenating are block codes, we have to be more cautious about the structure of errors \cite{Yamasaki_2024}.
Further, in the tower of concatenated codes, the $r$-level code $C_{r}$ will be built out of instances of $C_{r-1}$ in a side-by-side layout.
Because of the 1D constraint, it is difficult to interact adjacent $C_{r-1}$ codes without risking simultaneously destroying the entirety of both codes.
For this purpose, we introduce the key concept of an $r\adj$ (or, adjacent pair of $r\error$s).

\begin{figure}[t]
    \begin{subfigure}[b]{0.35\textwidth}
\centering
    \includegraphics[width = .5\linewidth]{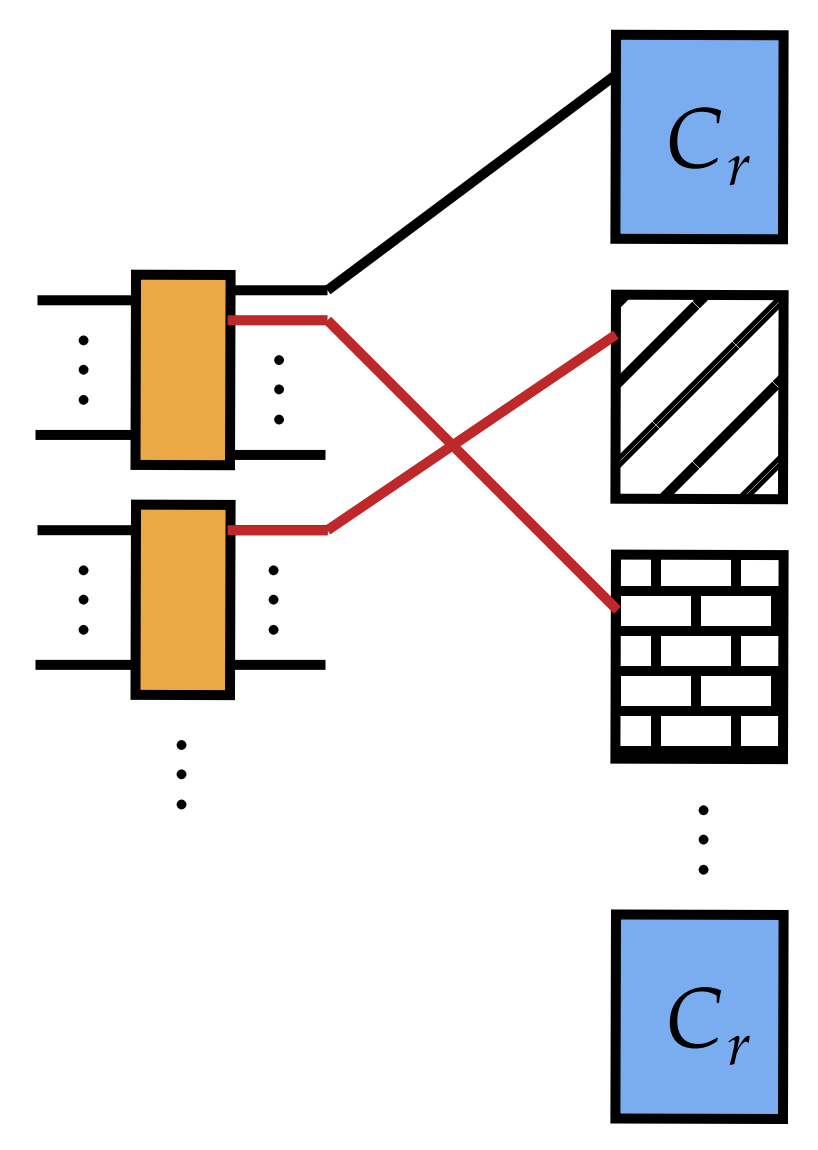}
    \caption{\footnotesize An $(r+1)\adj$}
\end{subfigure}
\begin{subfigure}[b]{0.65\textwidth}
\centering
    \includegraphics[width = 1\linewidth]{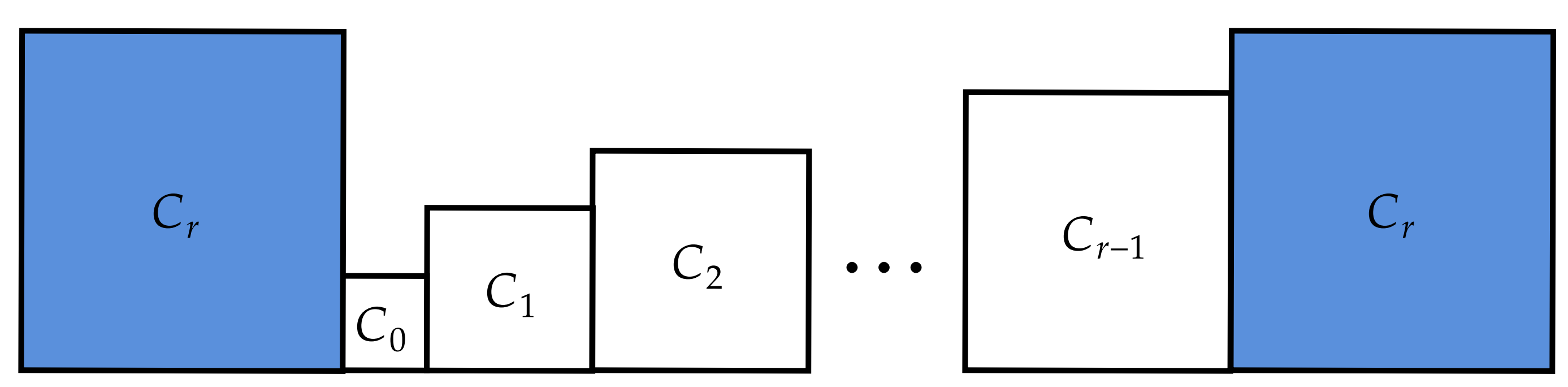}
    \caption{\footnotesize The ladder of code-blocks}
\end{subfigure}
\caption{(a) An $(r+1)$ code-block, comprised of the interleaved concatenation of Hamming codes (orange) and $r$ code-blocks (blue). In white, two adjacent $r$-blocks are corrupted (an $(r+1)\adj$), which correspond to single qubit faults on the underlying Hamming code. (b) $T\ket{+}$ states are teleported into $r$-blocks by injection into the physical level ($C_0$), and teleported sequentially up the ladder ($C_0\rightarrow C_1\rightarrow \cdots $).}
\label{fig:intro_ladder}
\end{figure}

Recursively, we say that $C_{r}$ contains an $r\error$ if one of its $C_{r-1}$ blocks contains at least two, \textit{non-adjacent}, $(r-1)\error$s.\footnote{A $0$-error is simply a Pauli error on a physical qubit of the code.} Two $r\error$s are said to be adjacent if they lie on nearest
neighbor $C_{r-1}$ blocks; we give this pattern of errors a special name: an $r\adj$.
The key feature of $r\adj$s is that they are correctable by the level $r$ code, even though other patterns of two $r\error$s aren't (See \cref{fig:intro_ladder}a). 

The correctness of the entire scheme then hinges on how these errors are created, propagated, and corrected within the circuit. Arguably, the main technical part of this paper lies in describing a minimal set of properties on the measurement gadgets ($r$-$\ec$, $r$-$\meas$) at level $r$, which, if satisfied, ensures that such $r\adj$'s don't proliferate; i.e., guarantees the \textit{sparseness of faults} during the execution of the circuit \cite{Gcs1983, Aharonov1996FaulttolerantQC, aliferis2005quantum}. Roughly speaking, these conditions quantify the behavior of each gadget under either faulty inputs (but fault-free execution) \textit{or} faulty-execution (but perfect inputs); see \cref{section:threshold} for details. \\

\noindent \textbf{Magic State Distillation, via $r$-block Teleportation.} Given the ability to implement arbitrary Pauli product measurements (and initialize ancilla qubits), one can implement arbitrary Clifford operations \cite{horsman2012latticesurgery}. To turn our memory into a computer, it suffices (via \textit{gate injection}) to design a protocol which enables us to produce logical $T\ket{+}$ magic states encoded into an $r$-block. 

For this purpose, we perform a minor modification to our memory layout.  In between the top, $r$-level code-blocks which store data-qubits, we place a ``ladder” of code blocks $(C_0, C_1, \cdots, C_{r-1})$ adjacent to each other (See Fig. 2b). Our goal, roughly speaking, is to teleport noisy $T\ket{+}$ states which are injected into $C_0$, all the way up the ladder into a logical qubit of $C_r$. Although we defer details to \cref{section:the_computer}, roughly speaking, the
protocol is based on establishing a logical EPR pair between an $i$-block and an adjacent $(i+1)$-block, which enables the teleportation of logical qubits between the blocks. Once sufficiently many noisy $T\ket{+}$ states are teleported into the top-level $r$-block, we run a magic state distillation protocol \cite{bravyi2012} within the $r$-block to acquire a high-fidelity $T\ket{+}$ state.

Arguably, the conceptual challenge in establishing the teleportation protocol is that our framework for Pauli-product measurements only enables us to perform logical operations between code-blocks at the same concatenation level. In \cref{section:the_computer}
 we show how to adequately modify the measurement scheme to allow measurements between code-blocks at different levels of concatenation, and prove that the entire protocol teleports $T\ket{+}$ states into the top-level $r$-block with just a constant noise rate - sufficient for distillation.

\subsection{Organization}

In \cref{section:preliminaries}, we discuss terminology and basic operations on stabilizer codes. In \cref{section:interleaved_memory}, we present the code construction, and compute its basic properties. In \cref{section:fault-tolerance}, we discuss how to implement fault-tolerant stabilizer measurements, and logical measurements. 

In \cref{section:threshold}, we introduce a sufficient set of error-propagation properties for these fault-tolerant operations to ensure their correctness, and prove a threshold theorem for our construction. In \cref{section:propagation_proofs}, we inductively prove that the fault-tolerant operations described in \cref{section:fault-tolerance} admit said properties, with the base case presented in \cref{section:base_case}. In \cref{section:proof_of_main}, we put all our results together and prove the main result of \cref{theorem:main}.

In \cref{section:the_computer}, we present our state distillation scheme and prove \cref{theorem:main-ft}. We discuss contributions in \cref{section:contributions}, and conclude in \cref{section:conclusion}.

\section{Preliminaries}
\label{section:preliminaries}

\subsection{Terminology}

An $[[n, k, d]]$ \textbf{stabilizer code} encodes $k$ logical qubits into $n$ physical qubits, and represents the $+1$ eigenspace of a commuting set of $n$ Pauli operators $\in \{\mathbb{I}, X, Y, Z\}^{\otimes n}$, its \textbf{stabilizers}. The \textbf{distance} $d$ of the stabilizer code is the minimum number of Pauli errors needed to flip one or more of the code’s logical observables, without flipping any of the code’s stabilizers.

A \textbf{stabilizer circuit} is a quantum circuit consisting of clifford operations, Hadamard, Phase, and controlled-not gates
\begin{equation}
    H = \frac{1}{\sqrt{2}}\begin{bmatrix}
        1 & 1\\
        1 & -1
    \end{bmatrix}, \quad S = \begin{bmatrix}
        1 & 0\\
        0 & i
    \end{bmatrix}, \quad \cnot = \begin{bmatrix}
        1 & 0 & 0 & 0\\
        0 & 1 & 0 & 0\\
        0 & 0 & 0 & 1\\
        0 & 0 & 1 & 0\\
    \end{bmatrix}
\end{equation}

\noindent as well as computational-basis measurements $M_Z$, reset gates $R_Z$, and classical feedback. The \textbf{fault distance} of a stabilizer circuit, given a noise model, is the minimum number of
errors from the model needed to flip a logical observable. 

In this work we phrase our proofs in the \textbf{local stochastic noise} model, where at each layer of computation an $n$ qubit channel $\mathcal{D}_p$ is applied, which randomly picks a (possibly correlated) subset of qubits, but is allowed to apply an adversarial channel to said qubits. Formally, the randomness over the choice of subset satisfies:

\begin{equation}
    \forall S\subset [n]:\quad \mathbb{P}[S\subset \text{Supp(Error)}] \leq p^{|S|}
\end{equation}

A special case which is helpful to build intuition is the \textbf{depolarizing noise} model (of rate $p\in (0, 1)$), defined by the single-qubit quantum channel which acts on a quantum state $\rho$ via a random Pauli operator:
\begin{equation}
    \mathcal{N}_p(\rho) = (1-p)\rho + \frac{p}{3}\big(X\rho X + Y\rho Y + Z\rho Z),
\end{equation}

\subsection{Model of Computation}

We make the following assumptions on the computational model.\\

\noindent \textbf{Classical operations are non-local and noiseless.} As raised previously, we remark that in designing a quantum memory based on stabilizer circuits, all Pauli feedback operations can be deferred to post-processing after the final measurement. Thereby, we do not need to assume classical operations are instantaneous for the purposes of \cref{theorem:main} (nor any feedback operations at all). However, to design a fault-tolerant quantum computer, we assume instantaneous (quasi-polylog time) classical operations. \\

\noindent \textbf{Quantum operations are noisy, nearest-neighbor, and parallelizable.} At the physical level, we assume operations proceed in layers, where each layer is comprised of arbitrary nearest-neighbor unitary gates and single-qubit measurements. After each layer, we subject all the qubits to depolarizing noise. \\

We remark that this entails measurement outcomes are subject to noise, but once the outcome is recorded, it is classically stored without faults. \\

\noindent \textbf{Correctness.} Here we formally define a model of correctness for our quantum memory and quantum computer. Let $\mathcal{X}$ denote the classical memory register and $\mathcal{Q}$ denote the $n$ qubit quantum register. We model the execution of the memory via alternating layers of 2 qubit gates and measurements, expressed via a sequence of separable channels $\{W_{i, i\pm 1}^t\}_{t\in [T], i\in [n]}$, up to some time $T$. Each $W_{i, i\pm 1}^t$ acts on nearest neighbor qubits $i$, $i\pm 1$, in addition to the classical register $\mathcal{X}$. 

This circuit is said to define a quantum memory with per-circuit-cycle error $\delta$, if there exists idealized (noiseless) decoding/encoding channels $\Enc, \Dec$ acting on $\mathcal{X}, \mathcal{Q}$ such that

\begin{equation}
    \Dec\circ  \under{\prod_t^T \mathcal{D}_p \circ  \bigg(\bigotimes_{\text{even }i} W_{i, i\pm 1}^t \bigg) \circ \mathcal{D}_p \circ \bigg( \bigotimes_{\text{odd }i}  W_{i, i\pm 1}^t \bigg)}{\text{the noisy circuit}} \circ \Enc(\psi) \approx_{\delta \cdot T} \psi,
\end{equation}

\noindent for all message-states $\psi$ on $\mathcal{Q}$ (possibly entangled with some reference system $\mathcal{R}$), and where the distance is measured in trace distance.\footnote{Equivalently, the effective channel is $\delta\cdot T$ close to the identity channel in diamond distance.} This idealized model of correctness is akin to the correctness model of certain fault-tolerance proofs (like \cite{aliferis2005quantum}), but also arises in the "self-correcting quantum memory" literature \cite{Alicki2008OnTS}. 

Consider next a generic quantum computation $C$, consisting of a series of gates acting initially on the $\ket{0}^{\otimes k}$ product state, and concluded with single qubit measurements. We say $C$ is simulated to logical error $\delta$ if there exists an analogous sequence of separable channels (on adjacent qubits and $\mathcal{X}$) acting initially on $\ket{0}^{\otimes m}$, which concludes with single-qubit measurements, such that the classical measurement information can be efficiently post-processed into a sample $\delta$ close to a sample from the measurement outcome distribution of $C$. The setting of fault-tolerant computation is naturally strictly harder than that of a memory; in that one is expected to perform fault tolerant state preparation, logical measurements, and arbitrary gates.

\subsection{Basic Transformations on Stabilizer Codes}

\begin{definition}
    [$\mathsf{Reserve}_1$] Reserving a logical qubit of a code $C$ produces a code $\mathsf{Reserve}_1(C)$ with one fewer logical qubit.
\end{definition}

\noindent The intent being that the lost logical qubit will be used to support (non-local) measurements (\cref{fig:basicops},b), by storing cat state qubits.  

\begin{definition}
    [Interleaved Concatenation]\label{def:interleaved_concatenation} Let $A, B$ be $[[n_A, k_A, d_A]]$ and $[[n_B, k_B, d_B]]$ stabilizer codes respectively. The \emph{interleaved concatenation} $A\otimes B$ is the $[[n_A\cdot n_B, k_A\cdot k_B, d_A\cdot d_B]]$ stabilizer code, defined on $k_B$ copies of $A$ and $n_A$ copies of $B$, where each physical qubit $i\in [n_A]$ of each copy of $A$, is encoded into the $i$th copy of $B$ (See  \cref{fig:basicops}a).
\end{definition}

The distance follows from the observation that a logical error is imparted to a copy of the ``outer code" $A$ only if there is a logical error on at least $d_A$ of the ``inner" blocks $B$. The \textbf{inner stabilizers} of $A\otimes B$ are the stabilizers of the various copies  of $B$. The \textbf{outer stabilizers} of $A\otimes B$ are the stabilizers of the copies of $A$, represented by the physical qubits of the copies of $B$.

\begin{figure}
    \begin{subfigure}[b]{0.5\textwidth}
\centering
    \includegraphics[width = 0.5\linewidth]{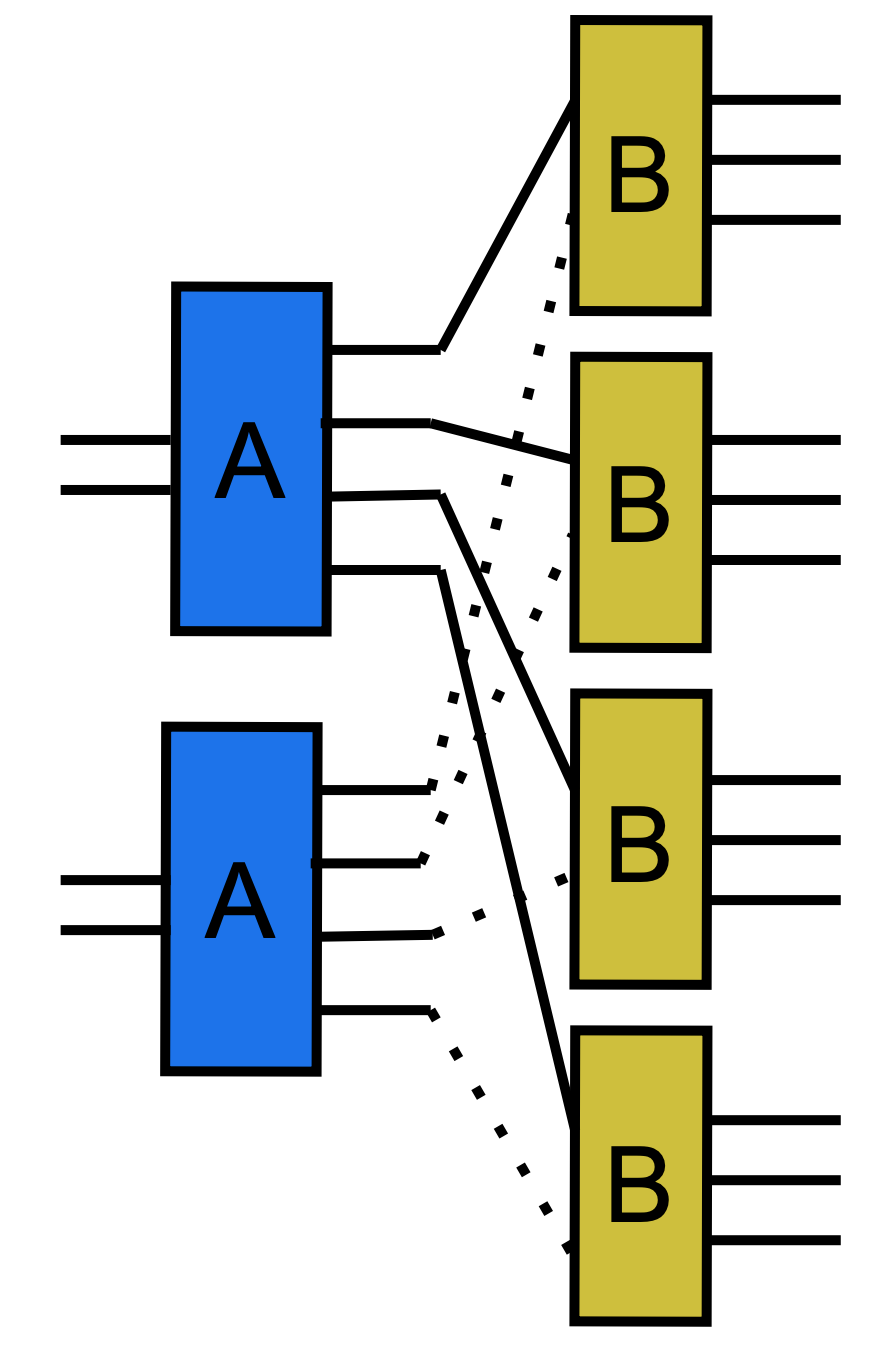}
    \caption{The Interleaved Concatenation $[[4, 2]]\otimes [[3, 2]]$}
\end{subfigure}
\begin{subfigure}[b]{0.5\textwidth}
\centering
    \includegraphics[width = 0.4\linewidth]{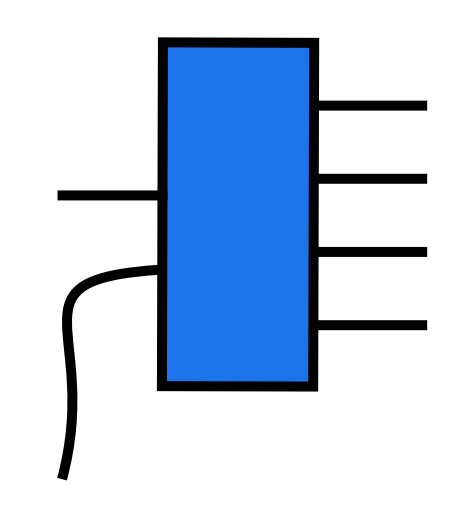}
    \caption{The $\mathsf{Reserve}_1$ Operation}
\end{subfigure}
\caption{Two Basic Operations on Stabilizer Codes}
\label{fig:basicops}
\end{figure}

Of particular interest to us will be the operation $(2\otimes B)$, which, in an abuse of notation, represents placing two copies of $B$ side-by-side, and interleaving their logical (input) qubits. That is, the $2\cdot k_B$ logicals of $(2\otimes B)$ are numbered, and the odd ones are encoded into the first copy of $B$, and the even ones into the second copy of $B$. As we discuss, performing this interleaving will be crucial to ensure robustness against errors acting on adjacent copies of the concatenated code (See \cref{fig:tower_of_hamming}).

\subsection{Quantum Hamming Codes}

Quantum Hamming codes \cite{steane1996} are high-rate CSS codes. 

\begin{definition}
    The quantum Hamming code $H_m$ is a $[[2^m -1, 2^m - 2m - 1, 3]]$ CSS code. 
\end{definition}

Their $k$’th $X (Z)$ stabilizer includes the term $X_i (Z_i)$ if and only if the $k$’th binary bit of the integer $i$ is 1. As a result, listing whether or not each $X (Z)$ stabilizer is flipped by a single $Z (X)$ data error produces the binary representation of the position of the error. 

\suppress{

As Hamming codes get larger, their coding rate converges to $100\%$:

\begin{equation}
    \text{Rate}(H_m) = \frac{2^m-2m-1}{2^m-1}, \quad \lim_{m\rightarrow \infty} \text{Rate}(H_m) = 1
\end{equation}

Concatenating a Hamming code with itself again and again would create codes with coding
rates closer and closer to 0 as the amount of concatenation increased. However, if you concatenate Hamming codes without using the same Hamming code twice, then the coding rate of the concatenation is bounded away from 0. For example, $H_m \otimes H_{m-1}\otimes \cdots H_5\otimes H_4$ converges to a coding rate of roughly 20$\%$ as m increases:

\begin{equation}
    \lim_{m\rightarrow \infty} \text{Rate}\bigg(\otimes_{i=4}^m H_i\bigg) = \lim_{m\rightarrow \infty}\prod_{i=4}^m\frac{2^i-2i-1}{2^i-1}\approx 19.7\%.
\end{equation}

}

\section{The Tower of Quantum Hamming Codes}
\label{section:interleaved_memory}

A constant rate quantum memory can be built by concatenating larger and larger Hamming
codes \cite{Yamasaki_2024}. However, in order to build such a fault tolerant circuit in 1D, instead of a non-local code, additional overheads are needed. In this paper, we make the following additions to the tower of interleaved-concatenated Hamming codes (\cref{def:interleaved_concatenation}), to enable implementing it with a 1D local circuit.

\begin{enumerate}
    \item At the physical level (the bottom), each data qubit will be accompanied by two helper qubits (a “measurement qubit” and an “entangling qubit”). These helper qubits will be used to create, verify, and consume cat states.\footnote{We use the notation $[[3, 1, 1]]$ to indicate a data qubit is placed together with $2$ ancilla qubits.}

\item At each level of concatenation, the concatenated code will further be interleaved with a copy of itself. This will later prevent operations on logical qubits from \textit{adjacent} underlying codes (in 1D) from simultaneously breaking two underlying data qubits.

\item Additionally, a logical qubit will be reserved at each level of concatenation, for
storing cat states.
\end{enumerate}

In \cref{section:fault-tolerance}, we make the role of each of these additions precise. Formally, the stabilizer code family $(C_0, C_1, C_2,\cdots )$ that we use to store logical information is defined as follows:
\begin{gather}
\label{equation:code_definition}
    C_0 = H_4\otimes [[3, 1, 1]]\\
    C_{m+1} = H_{m+5} \otimes \bigg(2\otimes \mathsf{Reserve}_1(C_m)\bigg)
\end{gather}

\noindent We refer the reader to \cref{fig:tower_of_hamming} for a diagram of the recursive definition, and back to \cref{fig:basicops} for definitions of the basic operations. To conclude this section, we present simple calculations of the basic static properties of the code; namely block-length, rate, and distance. However, in a first pass we recommend the reader to skim these statements and proceed to \cref{section:fault-tolerance}
 on the implementation of gadgets on the code.

\begin{figure}[h]
    \begin{subfigure}[b]{0.5\textwidth}
\centering
    \includegraphics[width = 0.5\linewidth]{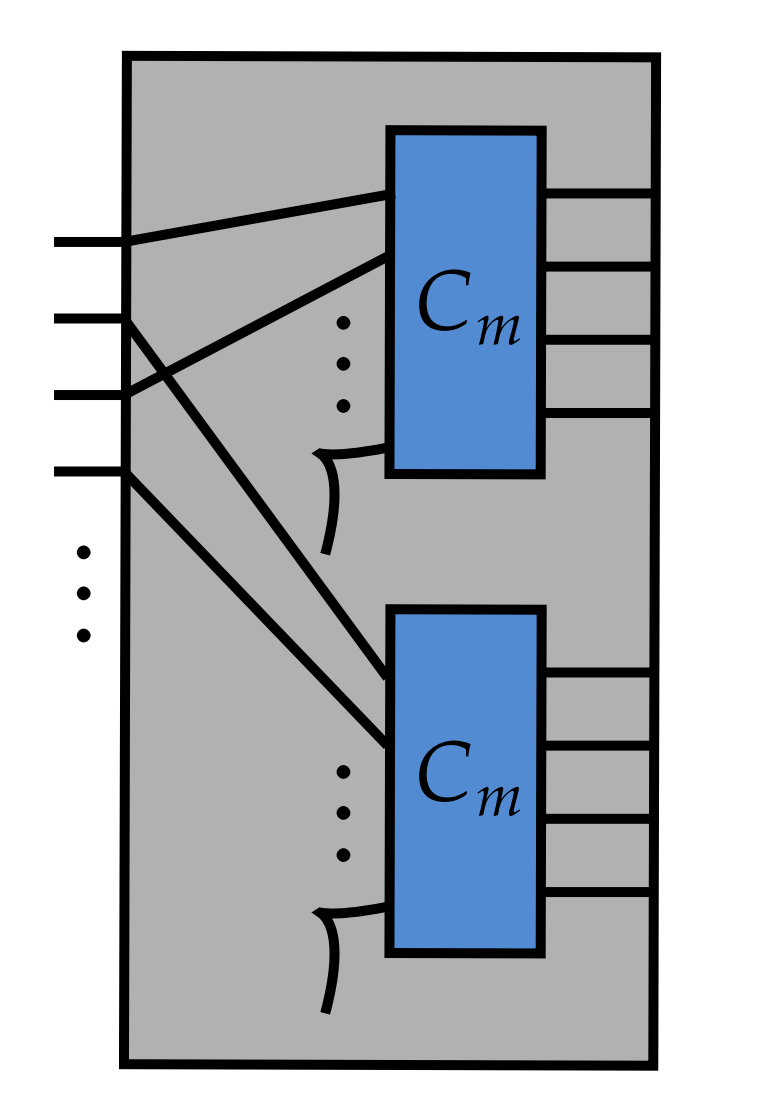}
    \caption{$2\otimes \mathsf{Reserve}_1(C_m)$}
\end{subfigure}
\begin{subfigure}[b]{0.5\textwidth}
\centering
    \includegraphics[width = 0.65\linewidth]{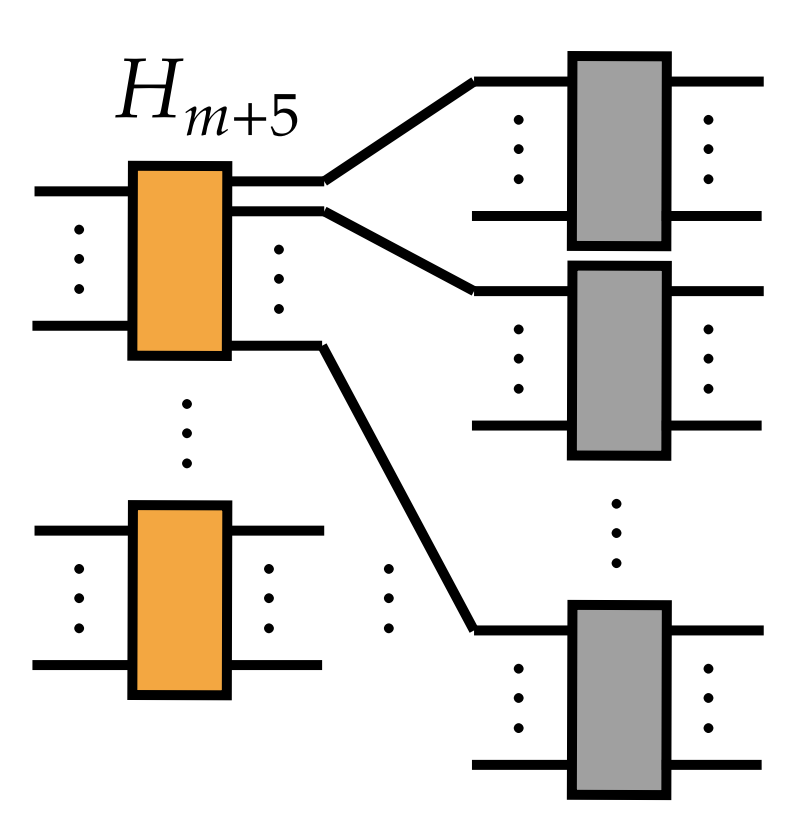}
    \caption{The recursive definition of $C_{m+1}$.}
\end{subfigure}
\caption{The modified tower of interleaved concatenated Hamming codes (\cref{equation:code_definition}). }
\label{fig:tower_of_hamming}
\end{figure}

\subsection{Static Properties of the Code Construction}

Here we analyze the block-length $n_m$, rate $r_m$, distance $d_m$, and number of stabilizers of the family of concatenated, interleaved Hamming codes defined above.

\begin{theorem}\label{theorem:code_parameters}
    The interleaved concatenated Hamming code $C_m$ of \cref{equation:code_definition} is a $[[n = 2^{m^2/2 + O(m)}, >  n/20,  3^m = 2^{\Theta(\sqrt{\log n})}]]$ CSS code.
\end{theorem}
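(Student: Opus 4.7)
The plan is to unroll the recursion in \cref{equation:code_definition} and track the three parameters $(n_m, k_m, d_m)$ by induction on $m$. For the inductive step I would use the parameter-composition rules for each of the three operations involved: interleaved concatenation (\cref{def:interleaved_concatenation}) multiplies $n$, $k$, and $d$; the $2\otimes$ operation doubles $n$ and $k$ while preserving $d$; and $\mathsf{Reserve}_1$ subtracts one from $k$ while preserving $n$ and $d$. Starting from the base case $C_0 = H_4 \otimes [[3,1,1]] = [[45,7,3]]$, and using the fact that $H_{m+5}$ has parameters $[[2^{m+5}-1,\; 2^{m+5}-2(m+5)-1,\; 3]]$, this yields the scalar recursions
\begin{align*}
n_{m+1} &= 2(2^{m+5}-1)\, n_m,\\
k_{m+1} &= 2\bigl(2^{m+5}-2(m+5)-1\bigr)(k_m-1),\\
d_{m+1} &= 3\,d_m.
\end{align*}

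The distance claim is immediate: $d_m = 3 \cdot 3^m / 3 = \Theta(3^m)$. For the block length, I would take logs and sum:
\[
\log_2 n_m \;=\; \log_2 n_0 + \sum_{i=0}^{m-1}\log_2\!\bigl(2(2^{i+5}-1)\bigr) \;=\; \tfrac{m^2}{2} + O(m),
\]
so $m = \Theta(\sqrt{\log n_m})$, from which $3^m = 2^{m\log_2 3} = 2^{\Theta(\sqrt{\log n_m})}$, as required.

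The main content is the rate bound $k_m > n_m/20$. Dividing the two recursions above gives the multiplicative relation
\[
r_{m+1} \;=\; \operatorname{rate}(H_{m+5})\cdot \Bigl(1 - \tfrac{1}{k_m}\Bigr)\cdot r_m,
\]
so $r_m = r_0 \cdot \prod_{i=5}^{m+4}\operatorname{rate}(H_i)\cdot \prod_{j=0}^{m-1}(1-1/k_j)$. Since $\operatorname{rate}(H_i) = 1 - 2i/(2^i-1)$, the inequality $-\log(1-x)\leq 2x$ for small $x$ bounds $\sum_{i\geq 5}-\log\operatorname{rate}(H_i)$ by a convergent sum dominated by $4i/2^i$, so the Hamming product converges to a strictly positive constant. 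The $\mathsf{Reserve}_1$ correction $\prod_j(1-1/k_j)$ is dominated by its first factor $6/7$, since $k_j$ grows doubly exponentially from $k_0 = 7$ and the remaining terms contribute a negligible tail.

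The main obstacle is that the three factors together land near $r_\infty \approx (7/45) \cdot 0.47 \cdot 0.85 \approx 6.3\%$, which only just clears the target $5\%$; the qualitative convergence argument above is not enough. I would therefore verify the inequality by computing the first several Hamming terms $\operatorname{rate}(H_5), \operatorname{rate}(H_6), \ldots, \operatorname{rate}(H_I)$ exactly, bounding the remaining factors $\prod_{i>I}\operatorname{rate}(H_i) \geq \exp\!\bigl(-\sum_{i>I} 4i/2^i\bigr)$ by an explicit geometric-type tail, and choosing $I$ large enough that the resulting lower bound on $r_\infty$ exceeds $1/20$ with concrete slack.
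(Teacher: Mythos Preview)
Your proposal is correct and follows essentially the same approach as the paper: the paper also splits into the three recursions $n_{m+1}=2(2^{m+5}-1)n_m$, $d_{m+1}=3d_m$, and $r_{m+1}\approx r_m\cdot\operatorname{rate}(H_{m+5})$, solving each in turn to get $n_m\approx 42\cdot 2^{(m^2+11m)/2}$, $d_m=3^m$, and $\lim r_m\approx 6\%$. If anything you are more careful than the paper, which absorbs the $\mathsf{Reserve}_1$ correction $(1-1/k_m)$ into an $\approx$ and simply asserts the numerical limit, whereas you explicitly separate out the $\prod(1-1/k_j)$ factor and propose a concrete tail bound to certify $r_\infty>1/20$.
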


We divide the proof into two lemmas.

\begin{lemma}
    The interleaved concatenated Hamming codes of \cref{equation:code_definition} is a family of $[[n_m, r_m\cdot n_m]]$ of stabilizer codes of blocklength $n_m \approx 42\cdot 2^{(m^2+11m)/2}$ and rate $\lim_{m\rightarrow \infty} r_m  > 1/20$.
\end{lemma}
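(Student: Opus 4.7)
The plan is to unroll the two simple recursions implicit in the definition of $C_m$ and then bound the resulting closed-form expressions. From \cref{def:interleaved_concatenation}, the fact that $\mathsf{Reserve}_1$ removes exactly one logical qubit, and the Hamming code parameters, one obtains
\begin{equation}
n_{m+1} = 2N_m \cdot n_m, \qquad k_{m+1} = 2K_m(k_m-1), \qquad (N_m, K_m) = (2^{m+5}-1,\; 2^{m+5}-2m-11),
\end{equation}
with base case $(n_0,k_0)=(15\cdot 3,\; 7\cdot 1) = (45,7)$. The block-length is a pure product, so I would just factor out powers of two:
\begin{equation}
n_m \;=\; 45\prod_{i=0}^{m-1} 2(2^{i+5}-1) \;=\; 45\cdot 2^{(m^2+11m)/2}\cdot\prod_{i=0}^{m-1}\bigl(1-2^{-(i+5)}\bigr),
\end{equation}
using $\sum_{i=0}^{m-1}(i+5) = m(m-1)/2 + 5m$. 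The trailing product converges to $\prod_{j\geq 5}(1-2^{-j})\approx 0.938$, which brings the prefactor from $45$ down to roughly $42$, matching the claim.

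For the rate, I would solve the inhomogeneous linear recurrence $k_{m+1}=2K_m k_m - 2K_m$ by the standard integrating-factor trick: set $A_m = \prod_{i=0}^{m-1}2K_i$ and divide through by $A_{m+1}$ to get the telescoping identity $k_{m+1}/A_{m+1} = k_m/A_m - 1/A_m$. Summing yields $k_m = A_m\bigl(k_0 - \sum_{i=0}^{m-1}1/A_i\bigr)$, so, combined with $n_m = 45\prod 2N_i$, the rate factors cleanly as
\begin{equation}
r_m \;=\; \frac{k_m}{n_m} \;=\; \Biggl(\prod_{i=0}^{m-1}\frac{K_i}{N_i}\Biggr)\cdot\frac{7-\sum_{i=0}^{m-1}A_i^{-1}}{45}.
\end{equation}

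To show $\lim_{m\to\infty} r_m > 1/20$, I would bound each factor separately. The sum $\sum_{i\geq 0}A_i^{-1}$ is dominated by its first two terms ($1 + 1/42 + 1/(42\cdot 102) + \cdots < 1.025$), so the second factor is at least $(7-1.025)/45 > 0.132$. For the infinite product $\prod_{i\geq 0} K_i/N_i = \prod_{i\geq 0}\bigl(1-(2i+10)/(2^{i+5}-1)\bigr)$, I would apply $\ln(1-x)\geq -2x$ (valid since $(2i+10)/(2^{i+5}-1)\leq 10/31 < 1/2$ for all $i\geq 0$) to reduce the bound to estimating the convergent series $\sum_{i\geq 0}(2i+10)/(2^{i+5}-1)$, whose value is below $0.8$; equivalently, a direct numerical evaluation of the first $\sim 10$ factors pins the product at $\geq 0.42$. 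Multiplying gives $\lim r_m \geq 0.42\cdot 0.132 > 0.055 > 1/20$.

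The main obstacle is simply that the bound is tight: the true limiting rate is near $5.6\%$, so the $1/20$ threshold cannot be established by loose asymptotic estimates alone. Either the infinite product $\prod K_i/N_i$ has to be controlled rather sharply (by evaluating enough early factors so that the remaining tail, for which crude bounds like $\log(1-x)\geq -2x$ are safe, is close to $1$), or one must use a tighter logarithmic inequality. Neither is conceptually difficult, but the bookkeeping — keeping the rate loss from the $\mathsf{Reserve}_1$ summands separate from the loss due to the inner Hamming codes' rates $K_i/N_i$ — is the only place where a careful numerical argument is needed.
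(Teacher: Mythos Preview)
Your proposal is correct and follows the same approach as the paper: set up the product recursions for $n_m$ and $k_m$ from the code definition and evaluate them. If anything, you are more careful than the paper, which simply absorbs the $\mathsf{Reserve}_1$ loss into an ``$\approx$'' and asserts $\lim r_m\approx 6\%$, whereas you solve the inhomogeneous recurrence for $k_m$ exactly and correctly identify that the $1/20$ threshold requires numerically evaluating several leading factors of $\prod K_i/N_i$ before bounding the tail (your intermediate remark that the crude bound $\ln(1-x)\ge -2x$ with $\sum x_i<0.8$ already suffices is off --- that only gives $e^{-1.6}\approx 0.2$, not $0.42$ --- but you diagnose and repair this in your final paragraph).
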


\begin{proof}
    The blocklength $n_m$ of the concatenated code satisfies the recursion
    \begin{equation}
        n_{m+1}  = 2\cdot n_m \times \bigg(\text{Size of Hamming Code } H_{m+5}\bigg) = 2 \big(2^{m+5}-1\big)\cdot n_m.
    \end{equation}
    Which, under the appropriate base case, solves to $n_m \approx 42\cdot 2^{(m^2+11m)/2}$. In turn, the number of logical qubits $k_m$ and the rate $r_m$ of the concatenated code satisfy the recursions
    \begin{gather}
        k_{m+1} = (2^{m+5}-2(m+5)-1)\cdot 2\cdot (k_m-1) \Rightarrow r_{m+1} \approx r_m\cdot \frac{2^{m+5}-2(m+5)-1}{2^{m+5}-1} \\ \Rightarrow \lim_{m\rightarrow \infty} r_m \approx 6\%.
    \end{gather}\end{proof}

Next, we quantify the code distance of the code. By ``code distance" we simply mean the smallest weight of the Pauli error which would damage the data qubits (but, regardless of the reserved qubits); it could be less than the fault-distance of the circuit which implements the code. Nevertheless, we compute it for completeness.

\begin{lemma}
    The distance of $C_m$ is $3^m$, and it is defined on $O(n_m\cdot m\cdot 2^{-m})$ outer stabilizers.
\end{lemma}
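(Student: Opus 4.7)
The plan is to handle the two claims separately. For the distance, I unroll the interleaved-concatenation recursion using the distance-multiplication property from \cref{def:interleaved_concatenation}; for the outer-stabilizer count I read off how many copies of the top-level Hamming code appear in $C_m$ and multiply by the stabilizer count per copy.

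For the distance, I would first check that the two auxiliary operations $\mathsf{Reserve}_1$ and $2\otimes$ do not decrease distance. Reserving a logical qubit merely shrinks the family of non-trivial logical operators whose minimum weight defines $d$, so the distance can only stay the same or grow. The $2\otimes$ construction places two independent copies of the code side by side and permutes the logical labels, so every logical operator is supported entirely inside one of the two copies and retains its original weight; hence distance is exactly preserved. Then the recursion $C_m = H_{m+4} \otimes (2 \otimes \mathsf{Reserve}_1(C_{m-1}))$ together with \cref{def:interleaved_concatenation} and $d(H_{m+4}) = 3$ gives $d_m = 3\cdot d_{m-1}$, which unrolls to the claimed $3^m$ (up to the base-case factor).

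For the number of outer stabilizers, I count directly. The outer stabilizers of $C_m$ are the stabilizers of the $H_{m+4}$ copies inherited from the outermost interleaving step. The number of such copies equals the logical dimension of the inner block $2 \otimes \mathsf{Reserve}_1(C_{m-1})$, namely $k_B = 2(k_{m-1}-1)$, and each copy of $H_{m+4}$ contributes $2(m+4)$ stabilizers, since $H_{m+4}$ is a $[[2^{m+4}-1,\, 2^{m+4}-2(m+4)-1,\, 3]]$ code. Thus the total is $O(k_{m-1}\cdot m)$. I then invoke the previous lemma, which gave constant rate $k_{m-1} = \Theta(n_{m-1})$, together with the blocklength recursion $n_m = 2(2^{m+4}-1)\cdot n_{m-1}$, so that $n_{m-1} = \Theta(n_m \cdot 2^{-m})$, to convert the bound into the desired $O(n_m\cdot m\cdot 2^{-m})$.

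The hard part is making the distance-preservation argument for $2\otimes$ airtight, ruling out any ``cross-block'' logical operator whose weight might undercut $d(B)$. This relies crucially on the fact that the two copies in $2\otimes B$ are independent stabilizer codes — the interleaving is purely a relabeling of logical inputs, not a physical coupling between the blocks — so any operator supported on both copies must act non-trivially (as a logical or stabilizer) within each copy separately, and the minimum-weight logical is thus supported on a single copy and has weight exactly $d(B)$. Everything else is a routine unfolding of the recursion established in the previous lemma.
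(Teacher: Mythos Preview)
Your proposal is correct and follows essentially the same approach as the paper: argue that $\mathsf{Reserve}_1$ and $2\otimes$ preserve distance so that the recursion $d_{m+1}=3\,d_m$ inherited from \cref{def:interleaved_concatenation} unrolls to $3^m$, and count outer stabilizers as (number of Hamming copies)$\times$(stabilizers per copy) $=O(k_{m-1}\cdot m)=O(n_m\cdot m\cdot 2^{-m})$ via the previous lemma. You actually supply more justification than the paper does for the distance-preservation step---the paper simply asserts that $\mathsf{Reserve}_1$ and $2\otimes$ ``do not modify the distance'' without further comment---so your version is, if anything, a slight expansion of the same argument.
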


\begin{proof}
    The operations $\mathsf{Reserve}_1$ and the interleaving $2\otimes$ do not modify the distance $d_m$ of the code. Therefore, we obtain the recursion
    \begin{equation}
        d_{m+1} = d_m\cdot 3\Rightarrow d_m = 3^m
    \end{equation}

    We remark that the number of outer stabilizers, i.e. the number of total Hamming code stabilizers at level $m$, is simply the number of copies of the Hamming code ($k_{m-1}$) times the number of stabilizers of a single copy of the Hamming code $2(m+5)$. Thus, upper bounded by
    \begin{equation}
        2(m+5)\cdot k_{m-1} \approx 12\% \cdot (m+5)\cdot n_{m-1} = n_m\cdot \frac{.06 (m+5)}{2^{m+5}-1}
    \end{equation}
\end{proof}

\section{A Fault-tolerant Quantum Memory in 1D}
\label{section:fault-tolerance}

In this section, we describe how to implement the error-correction gadget $r$-$\ec$ and the fault-tolerant measurement $r$-$\meas$. Integral will be the definition of a non-fault-tolerant ``Hookless" measurement $r$-$\ho$, described below. In the subsequent sections, we discuss their fault-tolerance.

\subsection{Overview}

When measuring an operator, there are generally two classes of errors to worry about: \textit{Wrong-result} errors (the result of the measurement is reported incorrectly) and \textit{data-damage} errors (flipping the qubits touched by the measurement circuit). A measurement process can easily introduce and spread errors in a way that reduces the fault-tolerance of the circuit to below the distance of the code implemented by the circuit. Especially when the connectivity of the circuit is restricted. Error mechanisms that reduce the data fault distance in this way are known as “hook errors” \cite{fowler2012bridge}.

In this manner, key in the construction will be to implement a measurement process without bad hook errors, referred to as “hookless measurements” at level $r$, or $r$-$\ho$. $r$-$\ho$ will be non-fault-tolerant (i.e. won't successfully return the measurement outcome).
However, it will have high data-fault distance.

For simplicity, and ease of explanation, we begin by assuming we have a black-box measurement functionality $r$-$\ho$. In \cref{section:rec_definition}, we show how to build the error correction gadget $r$-$\ec$ from $r$-$\ho$; in \cref{section:rmeas_definition}, we show how to build the fault-tolerant measurement gadgets $r$-$\meas$ from $r$-$\ho$ and $r$-$\ec$. Then, in \cref{section:hookless_definition}, we describe how to implement $r$-$\ho$ recursively from $(r-1)$-$\ec$, $(r-1)$-$\meas$.

\subsection{$r$-$\ec$, the Error Correction Gadget}
\label{section:rec_definition}

$r$-$\ec$, the error correction gadget at level $r$, consists simply of repeat Hookless measurements of the outer stabilizers of $C_r$. That is, we measure all the $s_r$ stabilizers of the underlying Hamming codes $H_{r+5}$ within $C_r$, and repeated said measurements 3 times (see \cref{fig:rec}).

\begin{figure}[t]
    \centering
    \includegraphics[width=0.6\linewidth]{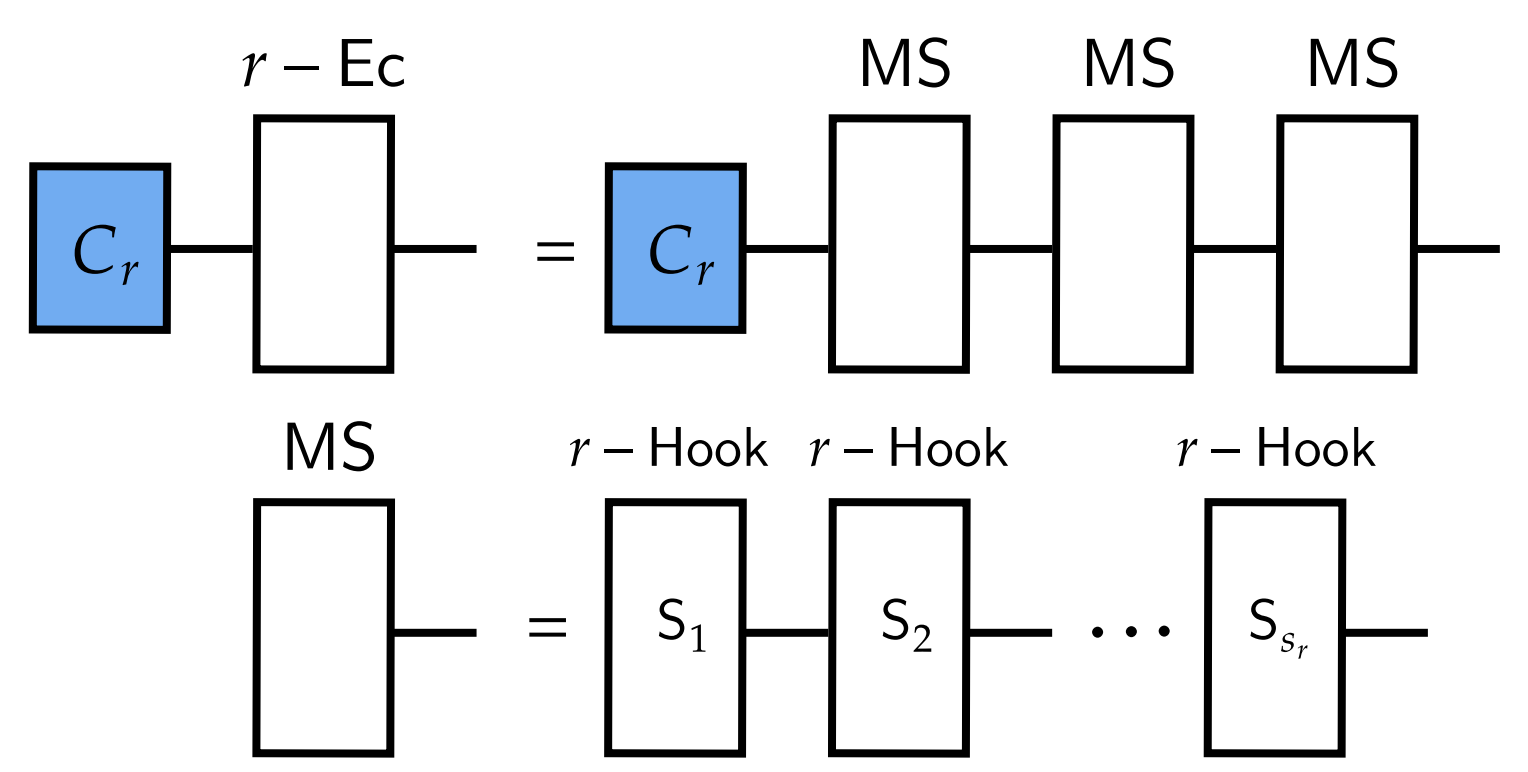}
    \caption{The error correction gadget $r$-$\ec$, consists of three repetitions of Hookless measurements $r$-$\ho$ of all outer stabilizers.}
    \label{fig:rec}
\end{figure}

We claim that 3 repetitions are enough to recover any effective single-qubit errors, or single-faults, occurring on the input $r$-block to the gadget or during the execution of $r$-$\ec$; a claim we make precise and prove only in \cref{section:propagation_proofs}. Here, we emphasize that our goal is to only correct one error/fault. Thereby, the intuition is that comparing the stabilizers between each round of Hookless measurements, acts as a repetition code (of distance 3), and thereby serves to identify the region wherein the faulty measurement must lie (again, assuming there is only one).

\subsection{$r$-$\meas$, the Fault-tolerant Measurement Gadget}
\label{section:rmeas_definition}

At level $r$, suppose we are given some logical observable $O$ supported on the underlying Hamming codes ($H_{r+5}$) at that level (possibly on two $C_r$ blocks). The fault-tolerant measurement gadget $r$-$\meas$ for $O$ consists of alternating 3 rounds of Hookless measurements $r$-$\ho$ of $O$ with error-correction rounds $r$-$\ec$ (see \cref{fig:rmeas}).

\begin{figure}[b]
    \centering
    \includegraphics[width=0.8\linewidth]{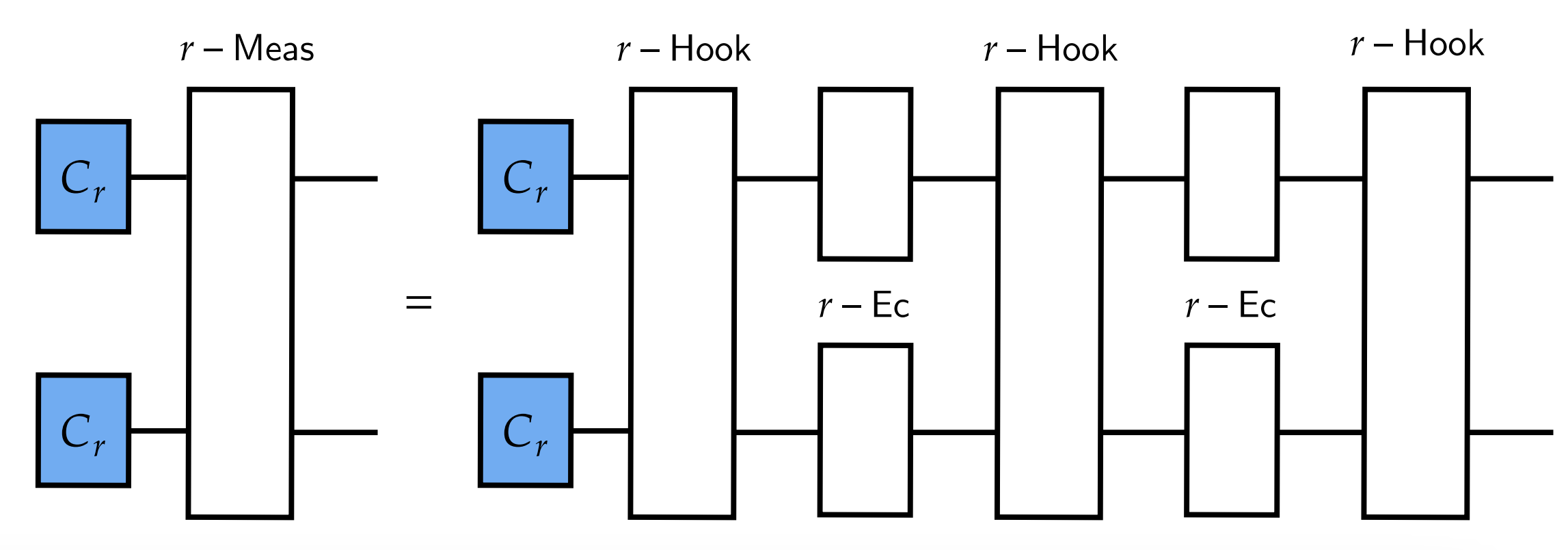}
    \caption{The fault-tolerant measurement gadget $r$-$\meas$. }
    \label{fig:rmeas}
\end{figure}

Broadly speaking, the intuition behind $r$-$\meas$ is that the alternation with rounds of error-correction $r$-$\ec$ serves to isolate faults between the rounds of Hookless measurements $r$-$\ho$. Indeed, rechecking the stabilizers after each hookless measurement prevents a fault distance 2 error mechanism where the same data error occurring before and after a series of measurements simultaneously flips all the measurement results in between. Again, this is made precise in \cref{section:propagation_proofs}.

\subsection{$r$-$\ho$, the Hookless Measurement}
\label{section:hookless_definition}

We are now in a position to describe the implementation of $r$-$\ho$. Broadly speaking, Hookless measurements are supported using variants of Shor's gadget: a means to perform multi-qubit measurements based on cat states. We present a brief recollection of this scheme in \cref{section:shors}.

As a minor technicality, we require two distinct implementations of $r$-$\ho$. One that uses unitary two qubits gates (for working with physical qubits) and another that uses dissipative two qubit gates (for working with encoded qubits). In \cref{section:hookless-recursive}, we present a recursive, dissipative implementation of $r$-$\ho$ by appealing to the measurement and error-correction gadgets at lower levels, $(r-1)$-$\meas$, $(r-1)$-$\ec$. In \cref{section:hook_base_case}, we describe how to implement the base case $0$-$\ho$. For this purpose, we require a unitary implementation, see below.

\subsubsection{A Recap of Shor's Gadget}
\label{section:shors}

Suppose one would like to measure an operator $M = P_{q_1}\otimes\cdots \otimes P_{q_t}$ defined on qubits $(q_1, \cdots q_t)$. To construct a hookless measurement of this operator, we use a variation of Shor’s measurement gadget \cite{shor1996faulttolerance}.

\begin{algorithm}[h]
    \setstretch{1.35}
    \caption{Shor's Measurement Gadget}
    \label{alg:shors-gadget}
    \KwInput{An Observable $M = P_1\otimes P_2\cdots P_t$ on registers $(q_1, \cdots, q_t)$.}

    \KwOutput{A bit $v$ corresponding to the measurement outcome.}

    \begin{algorithmic}[1]

    \State Prepare a (hardened) cat state $\ket{0^t}+\ket{1^t}$ on registers $(c_1, \cdots, c_t)$.

    \State Pair the cat state qubits with the measurement qubits $(c_k, q_k)_{k\in [t]}$ and for each $k\in [t]$,\linebreak measure $P_{q_k}\otimes X_{c_k}$. If $v$ is the measurement outcome, the resulting state is
    \begin{equation}
        \ket{0^t}+(-1)^v\ket{1^t}.
    \end{equation}

    \State Measure all the qubits of the cat state in the $Z$ basis, except for qubits $q_k$ where $P_{q_k} = I$.
    (These qubits were measured in the X basis by the previous step.)

    \end{algorithmic}

\end{algorithm}

Shor's gadget is based on preparing a cat state.
A cat state is naturally not fault-tolerant: a single fault in the preparation could spoil the measurement.
Even worse, some faults during the creation of the state can fail to synchronize its ends, resulting in states like $|000\dots111\rangle + |111\dots000\rangle$.
These states effectively have huge ranges of X errors, which propagate into the data qubits to produce high weight hook errors.
Shor \textit{hardens} the cat state against these hook errors by checking that random pairs of qubits from the cat state are in the +1 eigenstate of the $ZZ$ operator.
Unfortunately, comparing random pairs of qubits isn’t ideal when restricted to 1d-local connectivity.
We instead prepare hardened cat states by running a repetition code, i.e. by performing 3 repetitions of ``nearest-neighbor" logical $Z_{c_k}\otimes Z_{c_{k+1}}$ measurements (see \cref{fig:hook_dissipative} and \cref{fig:hook_unitary}).

\subsubsection{A recursive implementation of Hookless measurements}
\label{section:hookless-recursive}

Let us now understand how to implement each Hookless measurement in an (or within two) $C_r$ block(s), using operations on their $C_{r-1}$ sub-blocks. We refer the reader to \cref{fig:hook_recursive} for a diagram, explained below:

At the logical level, cat states are stored using the encoded reserved qubits in \cref{equation:code_definition}. We begin by preparing cat states encoded within their $C_{r-1}$ sub-blocks. This is performed via 3 repetitions of logical $Z\otimes Z$ measurements on the reserve logical qubits of nearest-neighbor $C_{r-1}$ codes, implemented recursively using the fault-tolerant measurement $(r-1)$-$\meas$ of \cref{section:rmeas_definition}. Following the concatenated simulations framework, these calls to $(r-1)$-$\meas$ alternate with error-correction rounds $(r-1)$-$\ec$. Finally, after the cat state is prepared, parity $(r-1)$-$\meas$ are performed within each $C_{r-1}$ block, acting between the cat qubit within that block, and the Hamming code qubits.

\begin{figure}[t]
    \centering
    \includegraphics[width=1.0\linewidth]{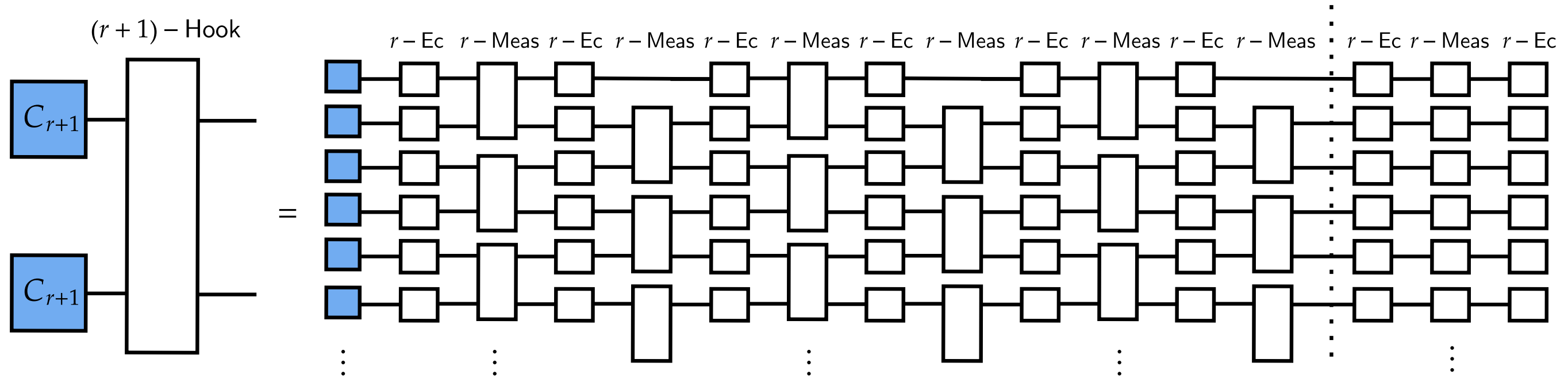}
    \caption{A recursive implementation of the Hookless measurement, $r$-$\ho$. The dashed line to the RHS indicates the end of ZZ parity measurements/cat state preparation.}
    \label{fig:hook_recursive}
\end{figure}

Modelling noise, and the propagation of faults, in this circuit is one of the key technical challenges of this work.  We include \cref{fig:hook_dissipative} to illustrate the cat state preparation: the reserved qubits and the data qubits are drawn as separate lines; they look independent. However, each reserved qubit is indeed \textit{within} the same code as the data qubit above it in the diagram. Thus, a fault during a parity measurement between two reserved (logical) qubits, could in principle destroy not only those two (logical) qubits, but also simultaneously destroy the other logical qubits in the code block.

This is precisely why the designed definition of $C_m$ (\cref{equation:code_definition}) includes the $2\otimes$ term, that interleaves each code with a copy of itself. Data qubits from adjacent underlying codes can’t be part of the same overlying code, because a single parity measurement between those two codes can simultaneously break both data qubits; which would reduce the fault distance of the circuit below the code distance. 

\begin{figure}[h]
    \centering
    \includegraphics[width=0.6\linewidth]{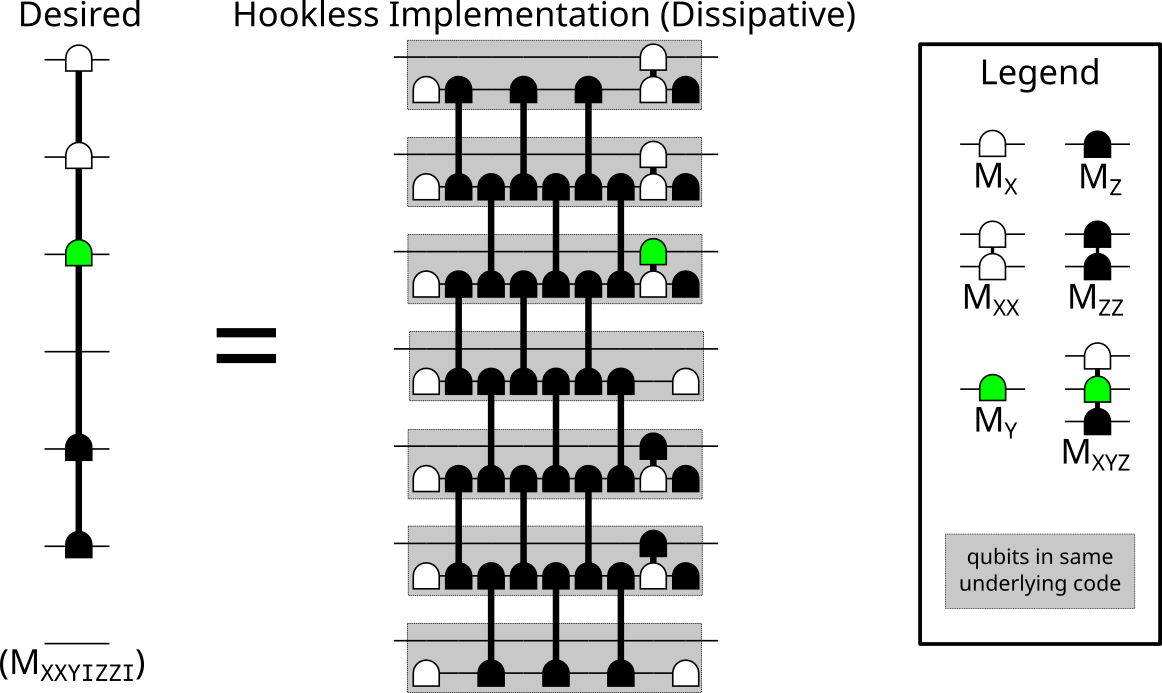}
    \caption{The recursive implementation of the Hookless measurement, $r$-$\ho$. The circuit is shortened and all $r$-$\ec$ blocks are omitted for illustrative purposes. Each gray block represents an instance of $C_r$. Within it, there are data qubits and a single cat state qubit.}
    \label{fig:hook_dissipative}
\end{figure}

\subsubsection{The base case: Hookless Measurements at the Physical Level}
\label{section:hook_base_case}

It remains to discuss the base case. At the physical level, there are three roles that qubits play: data, measurement, and entangling. Cat states are stored using the measurement qubits and created with assistance from the entangling qubits. The main challenge to implement the (hardened) cat state creation in 1D is that it shouldn’t involve the data qubits, but it has to cross over data qubits. If next-nearest-neighbor connectivity was allowed, the data qubits could simply be bypassed. 

We recommend the reader content with next-nearest-neighbor connectivity to skip ahead; to achieve nearest-neighbor connectivity, we require a painstaking construction and analysis of the base case circuits. To implement the next-nearest-neighbor connectivity, $\cnot$ gates crossing over data qubits are decomposed into four $\cnot$ gates touching the data qubit, see \cref{fig:hook_unitary}. This decomposition isn’t just expensive, it also means there are error mechanisms that can simultaneously damage cat state qubits and data qubits.  
Nevertheless, in \cref{section:base_case} we show that this circuit still has a ``data"-fault distance of $3$.

\begin{figure}[h]
    \centering
    \includegraphics[width=0.6\linewidth]{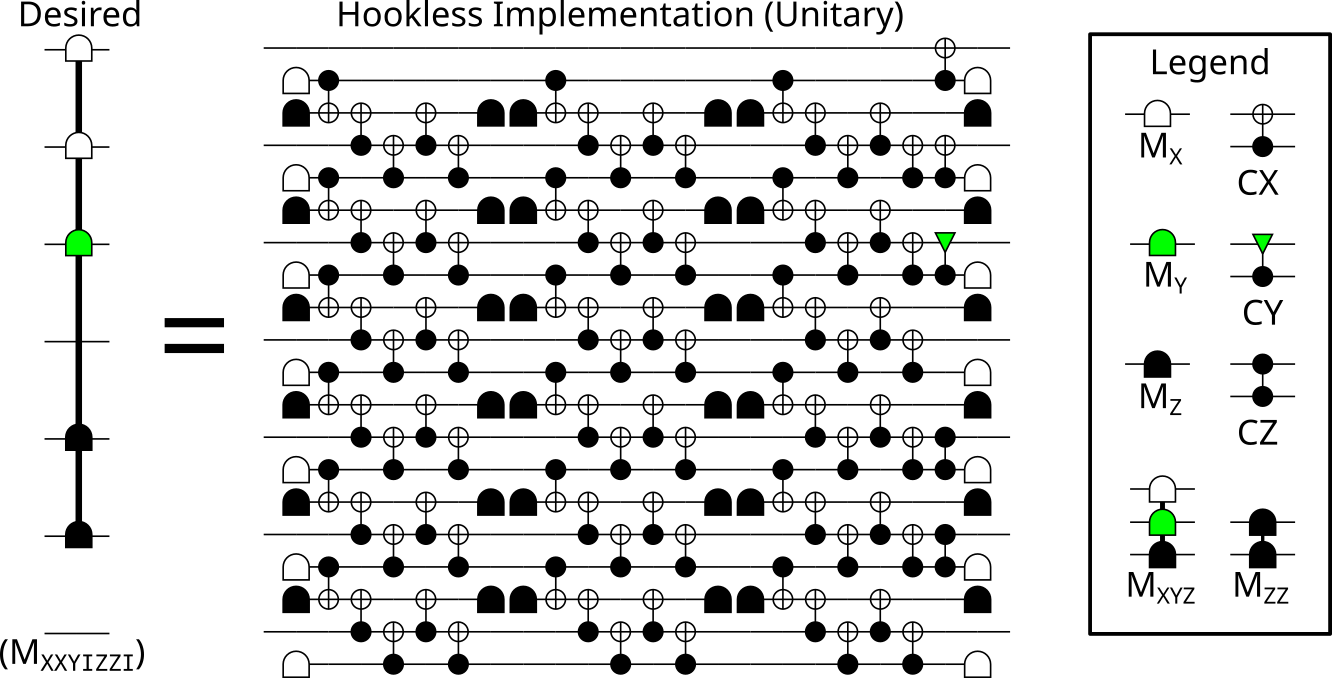}
    \caption{The unitary implementation of $0$-$\ho$. Three rounds of the repetition code circuit are performed, with next-nearest-neighbor $\cnot$ gates decomposed into $4$ adjacent $\cnot$ gates.}
    \label{fig:hook_unitary}
\end{figure}

\subsection{Time Overhead}

Let $T_r$ be the number of physical circuit layers (i.e. the circuit depth) it takes to perform a single fault tolerant measurement within the code $C_r$. 

\begin{lemma}\label{lemma:time-overhead}
    The time overhead of $C_r$ scales asymptotically as $T_r =  2^{O(r^3)} $.
\end{lemma}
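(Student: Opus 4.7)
The plan is to derive three mutually recursive depth bounds $T_r^{\ho}, T_r^{\ec}, T_r^{\meas}$ tracking the number of physical layers required by each of the three gadgets at level $r$, and show they solve to $T_r := T_r^{\meas} = 2^{O(r^3)}$. The target recurrence is $T_r = 2^{O(r^2)}\cdot T_{r-1}$, whose logarithmic telescoping immediately gives $\log T_r = \sum_{i\le r} O(i^2) = O(r^3)$.

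First I would unfold $r$-$\ho$ as in \cref{section:hookless-recursive} and \cref{fig:hook_recursive}. A single hookless measurement consists of three rounds of nearest-neighbor logical $ZZ$ measurements on the reserve qubits of adjacent $C_{r-1}$ blocks (to harden a cat state), each implemented by $(r-1)$-$\meas$ and separated by $(r-1)$-$\ec$, followed by block-local parity $(r-1)$-$\meas$ between cat qubits and the Hamming data qubits. Because the $ZZ$ chain decomposes into two parallel layers (even and odd pairs) and the final parity measurements are independent across $C_{r-1}$ blocks, I expect the bound $T_r^{\ho} = O(1)\cdot(T_{r-1}^{\meas}+T_{r-1}^{\ec})$. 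From the definition of $r$-$\meas$ in \cref{section:rmeas_definition}, $T_r^{\meas} = O(1)\cdot(T_r^{\ho}+T_r^{\ec})$, so everything reduces to controlling $T_r^{\ec}$.

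For $T_r^{\ec}$, the construction in \cref{section:rec_definition} performs three sequential rounds of Shor-style hookless measurements of \emph{all} outer stabilizers of $C_r$. The key observation driving the $2^{O(r^2)}$ blowup is that each Shor measurement hosts its cat state on the single reserve qubit of each $C_{r-1}$ block in its support, so any two outer-stabilizer measurements whose $C_{r-1}$-supports collide must be placed in different rounds. A block-time balancing argument then shows that even under optimal scheduling, the depth required is $\Theta(s_r)\cdot T_r^{\ho}$, where $s_r = 2(r+5)\cdot k_{r-1}$ is the total number of outer stabilizers. Substituting the block-length and rate computations from \cref{theorem:code_parameters} gives $s_r = \Theta(r\cdot n_{r-1}) = 2^{\Theta(r^2)}$, hence $T_r = 2^{O(r^2)}\cdot T_{r-1}$ and the claim follows from solving the recurrence (the base case $T_0$ coming from the explicit circuit of \cref{section:hook_base_case} contributes only a constant).

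The main obstacle I anticipate is pinning down the $2^{O(r^2)}$ serialization factor in $T_r^{\ec}$ rigorously: one must argue that no clever scheduling of the (mutually commuting) outer stabilizers — for example, running disjoint stabilizers in parallel, or pipelining cat-state preparation between consecutive Shor rounds — improves on the sequential bound asymptotically, given that each $C_{r-1}$ block offers only one reserved logical qubit and is touched by $\Theta(r\cdot k_{r-1})$ of the outer stabilizers. Once that accounting is in place, the remainder is a routine use of $\sum_{i=1}^r i^2 = \Theta(r^3)$.
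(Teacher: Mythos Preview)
Your approach is essentially the paper's, only more carefully decomposed. The paper collapses all three gadget types into a single recurrence $T_r = O(n_r \cdot T_{r-1})$, using the block-length $n_r = 2^{\Theta(r^2)}$ as a blanket upper bound on the number of sequential level-$(r-1)$ operations inside an $r$-$\meas$, and then telescopes via $\sum_{i\le r} i^2 = \Theta(r^3)$. Your separation into $T_r^{\ho}, T_r^{\ec}, T_r^{\meas}$ and your use of the outer-stabilizer count $s_r$ in place of $n_r$ are refinements that land at the same recurrence.

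The ``main obstacle'' you flag is not one. The lemma asserts only the upper bound $T_r = 2^{O(r^3)}$, so all you need is $T_r^{\ec} \le 3 s_r \cdot T_r^{\ho}$, which is immediate: $r$-$\ec$ is \emph{defined} (\cref{section:rec_definition}, \cref{fig:rec}) as three fully sequential passes over the $s_r$ outer-stabilizer hookless measurements, with no parallelism attempted. Whether some cleverer schedule could asymptotically beat the sequential one is irrelevant to the stated bound; that would matter only for $T_r = 2^{\Omega(r^3)}$, which the lemma does not claim. You can drop the block-time balancing argument entirely and the proof goes through.
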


\begin{proof}
    Recall the definition of interleaved concatenation in \cref{def:interleaved_concatenation} and \cref{fig:basicops}. Performing a fault-tolerant measurement of some logical observable for $C_r$, entails measuring all the outer-stabilizers of the copies of $H_{r+5}$, interpreted as logical observables of the ``inner" copies of $C_{r-1}$ (three times). Since the many outer-stabilizers may share overlapping support, they must be performed sequentially. Between each outer-stabilizer measurement, we run a level $r-1$ error-correction routine. The circuit depth $T_r$ to implement these measurements then satisfies the recursion 
    \begin{equation}
        T_r = O(n_r\cdot T_{r-1}) = \exp\bigg[O\bigg(\sum^r_i i^2\bigg)\bigg]  = 2^{O(r^3)},
    \end{equation}

    \noindent where $n_r = 2^{\Theta(r^2)}$ is the block-length of $C_r$.
\end{proof}

We remark that if $b = 2^{\Theta(r^2)}$ is the block-length of $C_r$, then the time-overhead of $T_r$ is $2^{O(\log^{3/2} b)}$, super-polynomial in the block-length! The sequentiality to the stabilizer measurements and its effect to the runtime is another key distinction of our results to \cite{Yamasaki_2024}. Fortunately, we discuss how to decrease the time-overhead in \cref{section:proof_of_main}.

\section{The Threshold Dance}
\label{section:threshold}

In this section, we study the fault-tolerance of the measurement and error-correction gadgets of the quantum memory, $r$-$\meas$ and $r$-$\ec$. We begin in \cref{section:rerrors}, by presenting a recursive definitions of what it means for a code-block to contain an error. Subsequently, in \cref{section:recs_correctness}, we introduce the notion of an $r$-$\rec$(-tangle), a key concept to understand the correctness of the recursive simulation in the presence of faults. In \cref{section:recs}, we discuss the minimal (or sufficient) properties on the $r$-$\meas$ and $r$-$\ec$ gadgets, to ensure correctness. Informally, these properties quantify the structure of how faults create errors on the code-block, and constrain their propagation throughout the circuit.

Finally, in \cref{section:percolation}, we prove a threshold theorem for our construction, under the assumption that it satisfies the desired error-propagation properties. We defer a proof of these properties to the next section, \cref{section:propagation_proofs}.

\subsection{$r$-Errors and Decodability}
\label{section:rerrors}

We refer to an instance of the code $C_r$ within the quantum memory as an $r$-block. For $r\geq 1$, $r$-blocks are comprised of multiple instances of $(r-1)$-blocks, laying side-by-side (see \cref{fig:rblockserrors}, a). A $0$-block is a Hamming codes $H_4$, whose physical qubits are interleaved with 2 ancilla qubits, see \cref{equation:code_definition}. We are now in a position to define an $r$-error on an $r$-block.

\begin{definition} [$r$-errors]
    For $r\geq 1$, an \emph{$r\error$} on an $r$-block corresponds to a $(r-1)$-block with at least $2$ non-adjacent $(r-1)\error$s. Two $r\error$s are said to be \emph{adjacent} if they lie on nearest neighbor $(r-1)$-blocks; in which case we refer to the pair as an $r\adj$. A $0\error$ corresponds to a Pauli operator on a single qubit of $C_0$; the notion of adjacency is the same.
\end{definition}

The crux of this definition is that adjacent $r$-errors roughly correspond to \textit{single-qubit} errors on the underlying Hamming codes, due to the alternating/interleaving odd/even structure of the concatenation. In this next lemma, we inductively show these patterns of errors are decodable. 

\begin{figure}[h]
    \begin{subfigure}[b]{0.5\textwidth}
\centering
    \includegraphics[width = 0.7\linewidth]{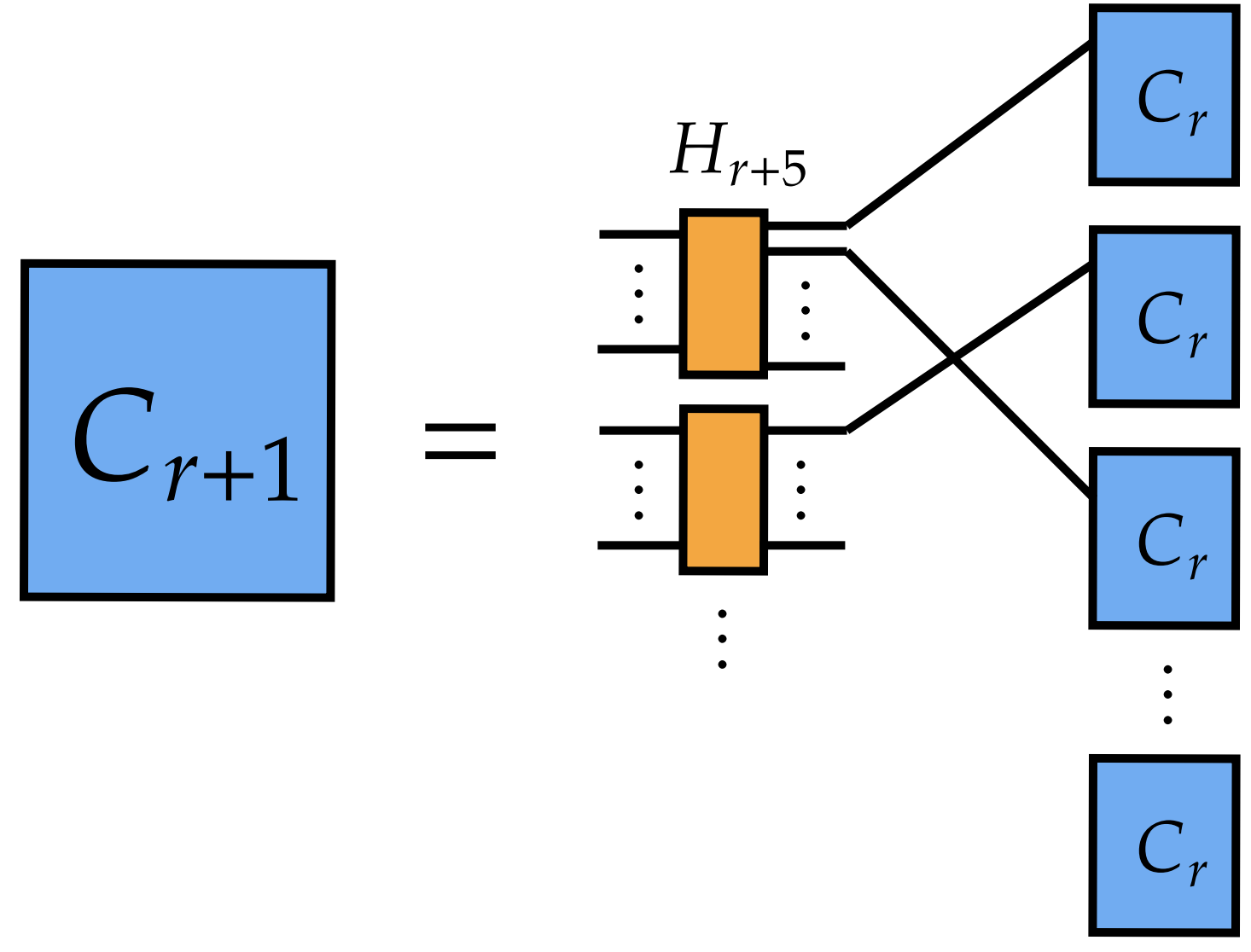}
    \caption{An $(r+1)$-block.}
\end{subfigure}
\begin{subfigure}[b]{0.5\textwidth}
\centering
    \includegraphics[width = 0.7\linewidth]{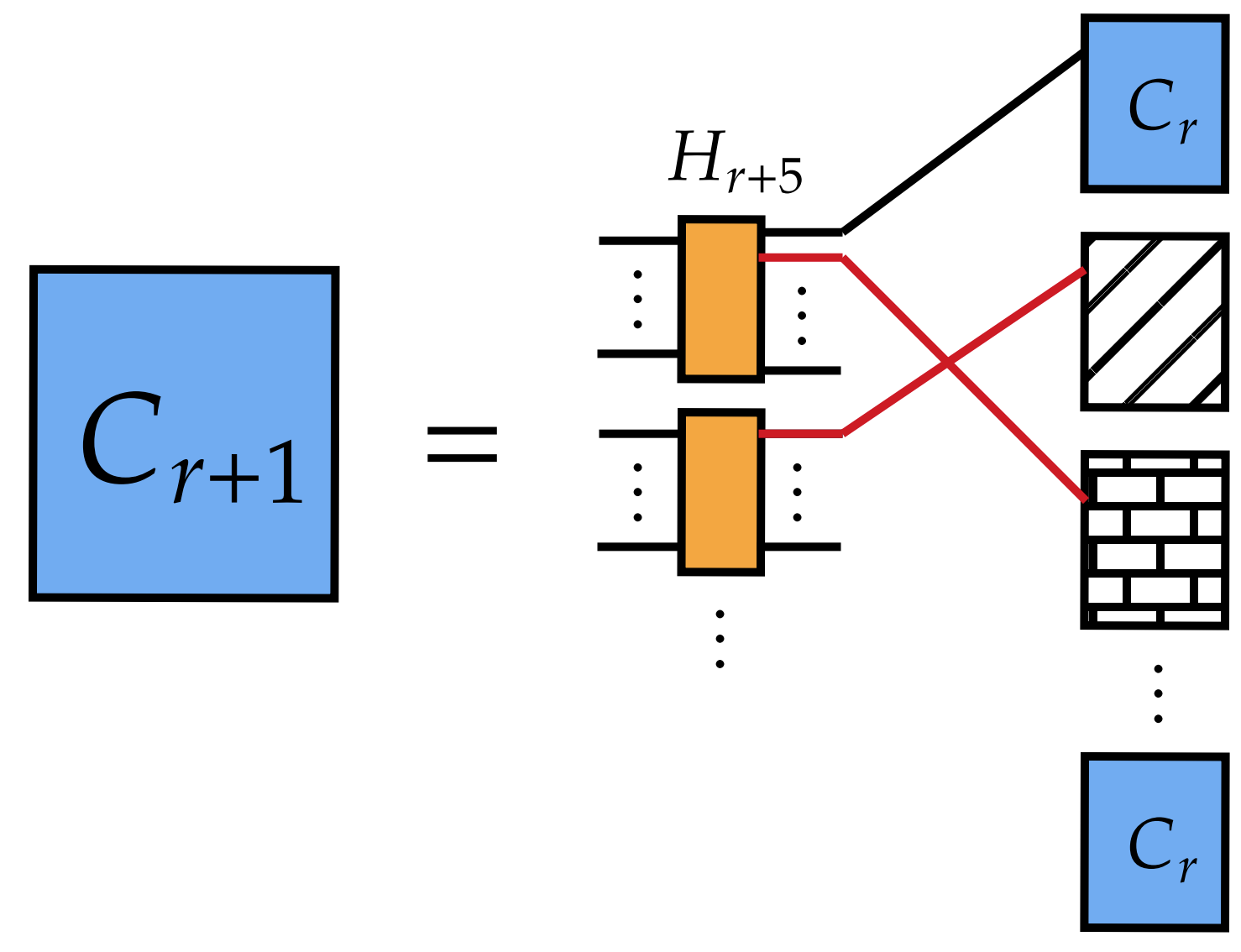}
    \caption{An $(r+1)$-block with adjacent $(r+1)$-errors.}
\end{subfigure}
\caption{In (a), we depict an $(r+1)$-block as a sequence of side-by-side $r$-blocks. Reserved qubits are omitted. In (b), we depict an adjacent pair of $(r+1)\error$s. The dashed/brickwork patterns indicate the errors lie on different underlying Hamming codes. }
\label{fig:rblockserrors}
\end{figure}

\begin{lemma}
    [Adjacent $r\error$s are decodable]\label{lemma:1error->decodable} Suppose an $r$-block $C_r$ contains at most one $r\adj$. Then, there exists a decoder $r$-$\Dec$, whose noiseless execution results in $C_r$ containing no $k\error$s for any $k\leq r$.
\end{lemma}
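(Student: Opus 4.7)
I would proceed by strong induction on $r$.

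\textbf{Base case ($r = 0$).} The code $C_0 = H_4 \otimes [[3,1,1]]$ surrounds each data qubit of $H_4$ by two ancilla qubits, so any pair of adjacent single-qubit physical Paulis (i.e.\ a $0\adj$) can corrupt at most one data qubit of $H_4$. The decoder $0$-$\Dec$ then measures the $H_4$ stabilizers noiselessly, applies the standard Hamming recovery to the (at most one) data error, and resets all ancillas to $\ket{0}$. The resulting state contains no $0\error$.

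\textbf{Inductive step.} Assuming the lemma for $r-1$, I would define $r$-$\Dec$ in two phases: (1) apply $(r-1)$-$\Dec$ in parallel inside every constituent $(r-1)$-block; (2) noiselessly measure the stabilizers of each copy of the outer Hamming code $H_{r+5}$ and apply the standard distance-$3$ Hamming recovery to each copy. To analyze correctness, I split the $(r-1)$-blocks of $C_r$ according to whether or not they are themselves $r\error$s. By the contrapositive of the definition, an $(r-1)$-block that is \emph{not} an $r\error$ has at most one $(r-1)\adj$, so the inductive hypothesis implies that phase (1) returns such a block to a completely clean state. Under the assumption of at most one $r\adj$, at most two $(r-1)$-blocks are $r\error$s, and they are adjacent.

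For the two damaged $(r-1)$-blocks, phase (1) returns them to legal $C_{r-1}$ codewords (possibly differing from the intended ones by a logical Pauli). From the level-$r$ picture, such a residual logical error is nothing more than a single physical Pauli on the corresponding outer Hamming copy. The key geometric observation — a consequence of the $2\otimes$ interleaving in the recursive definition of $C_r$ — is that two adjacent $(r-1)$-blocks are routed as logical qubits into \emph{different} outer copies of $H_{r+5}$. Therefore, after phase (1), each affected outer Hamming copy contains at most one single-qubit error. Since $H_{r+5}$ has distance $3$, phase (2) corrects these, leaving no $k\error$s for any $k \le r$.

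\textbf{Expected obstacle.} The two subtleties that need pinning down are: (i) the claim that $(r-1)$-$\Dec$, even when it fails logically on a damaged block, nevertheless projects the block onto a valid $C_{r-1}$ codeword — so that the residue is a clean single-qubit error on the outer Hamming code rather than an arbitrary sub-logical mess — and (ii) the geometric claim that adjacency of $(r-1)$-blocks in the 1D layout precisely corresponds to membership in different outer Hamming copies, including across the seams of neighboring $2\otimes \mathsf{Reserve}_1(C_{r-1})$ units. Both points are structural consequences of how $\mathsf{Reserve}_1$, $2\otimes$, and interleaved concatenation were defined in \cref{section:preliminaries}, but need to be invoked explicitly to make the two-phase decoder's analysis rigorous.
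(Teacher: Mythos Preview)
Your proposal is correct and follows essentially the same approach as the paper: induction on $r$, with a two-phase decoder that first runs the inner decoder on every $(r-1)$-block and then decodes each outer Hamming copy, using the $2\otimes$ interleaving to ensure adjacent damaged blocks land in distinct outer copies. Your write-up is in fact more careful than the paper's, which glosses over both of the subtleties you flag in your ``Expected obstacle'' paragraph.
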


\begin{proof}
    We present a proof by induction, and defer the base case $C_0$ to the bottom. Assume for $k\leq r$, $C_k$ is decodable from patterns of single adjacent pairs of $k\error$s as defined above. We prove the same holds for $C_{r+1}$. Our decoder for $C_{r+1}$ first runs the decoder for $C_{r}$ on each $r$-block, and then decodes the outer Hamming codes. 
    
    By design, at most one a pair of $r$-blocks $C_r$ contains an $r\adj$. By the inductive hypothesis, all other $r$-blocks can be decoded such that each such $r$-block contains no $k\error$. In turn, the $(r+1)\error$s on the two adjacent $r$-blocks, map to physical errors on the underlying Hamming codes. However, by the structure of the interleaved concatenation, there is only a single-qubit error on each Hamming code. Since the Hamming code is distance 3, it can decode $1$ physical error. 

    For the base case $C_0$, we note that the data-qubits in $C_0$ are encoded into a Hamming code with non-adjacent physical qubits.
\end{proof}

\subsection{$r$-Rectangles and Correctness}
\label{section:recs_correctness}

Let us now begin to quantify the presence, and propagation, of faults in the circuit. For this purpose, we follow \cite{aliferis2005quantum}, and introduce the notion of an $r$-$\rec$ at each level of the recursion. Informally, an $r$-$\rec$ represents a operation on an $r$-block, and its surrounding error-correction gadgets. At each level $r\geq 0$, an $r$-$\rec$ consists of an $r$-$\meas$ and its $\leq 4$ adjacent $r$-$\ec$s.

 Roughly speaking, we say that an $r$-$\rec$ is \textit{correct} if it doesn't amplify the number of errors on the input state. 

\begin{definition}
    [Correctness] An $r$-$\rec$ is \emph{correct} if on input $r$-blocks with at most one $r\adj$ each, their output is a set of $r$-blocks at most one $r\adj$ each. 
\end{definition}

Note that the location of the $r$-errors may move around inside the block (see \cref{fig:correctrecs})

\begin{figure}[h]
    \centering
    \includegraphics[width=0.8\linewidth]{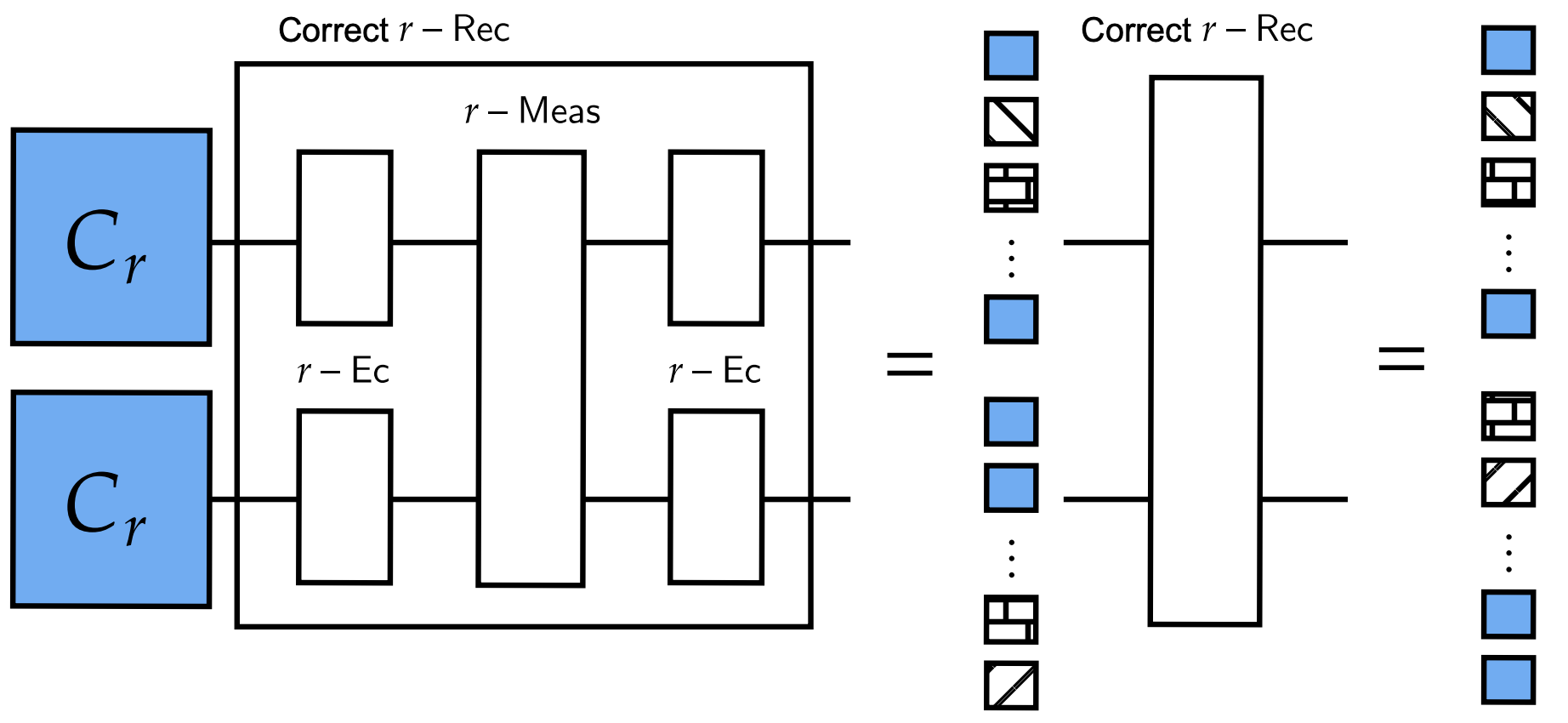}
    \caption{An $r$-$\rec$, consisting of two input $r$-blocks, a $r$-$\meas$, and surrounding $r$-$\ec$s, is pictured on the LHS. If each input $r$-block has at most $1$ $r\adj$, then the output to a \textit{correct} $r$-$\rec$ does as well.}
    \label{fig:correctrecs}
\end{figure}

We emphasize that the notion of correctness of $r$-$\rec$ is apriori independent of the possible faulty execution of its components. Shortly, we will show that an $r$-$\rec$ ``with few faults" is correct. For now, we simply prove that correct rectangles lead to decodable output states.

\begin{lemma}
    [Correct $\Rightarrow$ Decodable]\label{lemma:good-decodable} If every $r$-$\rec$ in the circuit is correct, then the logical information in the memory is preserved. 
\end{lemma}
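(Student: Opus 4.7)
The plan is to prove the statement by induction on the sequence of $r$-$\rec$s in the circuit, maintaining the invariant that after each rectangle, every $r$-block in the memory contains at most one $r\adj$. The initial (freshly-encoded) state contains no errors at any level, so the invariant holds trivially at time zero. For the inductive step, by the definition of correctness just above, any correct $r$-$\rec$ whose input $r$-blocks each carry at most one $r\adj$ produces output $r$-blocks that also each carry at most one $r\adj$. Since the entire quantum memory circuit is by hypothesis a composition of correct $r$-$\rec$s, the invariant propagates through the whole computation.

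Once the invariant is established at the final time step, I would conclude by invoking \cref{lemma:1error->decodable}: because each $r$-block at the end of the circuit contains at most one $r\adj$, the noiseless decoder $r$-$\Dec$ can be applied to remove all $k\error$s for $k \leq r$, returning the memory to a state with no errors on the codespace. Since decoding is deterministic and does not depend on the message qubit, this shows that the logical information is faithfully preserved throughout the execution.

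The one place requiring some care is the overlapping structure of rectangles: an $r$-$\rec$ is defined as an $r$-$\meas$ together with its $\leq 4$ adjacent $r$-$\ec$s, so consecutive rectangles in the circuit share $r$-$\ec$ gadgets. To make the induction rigorous, I would adopt the extended-rectangle convention of \cite{aliferis2005quantum}, assigning each $r$-$\ec$ to exactly one rectangle so that the circuit partitions cleanly into a sequence of $r$-$\rec$s whose inputs and outputs on $r$-blocks are well-defined. A further point is that some $r$-$\meas$ act jointly on two adjacent $r$-blocks; the correctness definition already accommodates this, since it quantifies over all input $r$-blocks to the rectangle.

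The genuinely hard work does not lie in this lemma, which is essentially a definition-chase, but in the subsequent sections (\cref{section:recs}, \cref{section:propagation_proofs}), where one must show that sufficiently low physical noise \emph{implies} correctness of every $r$-$\rec$ with overwhelming probability. The present lemma only reduces the analysis of logical preservation to a combinatorial fault-propagation property of individual rectangles, which is then the subject of the threshold theorem.
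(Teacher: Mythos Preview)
Your proposal is correct and follows essentially the same approach as the paper: induct over the layers of $r$-$\rec$s to maintain the invariant that every $r$-block carries at most one $r\adj$, starting from the error-free encoded input and concluding via \cref{lemma:1error->decodable}. Your additional remarks on the overlapping rectangle structure and the extended-rectangle convention are more careful than the paper's terse proof, which simply speaks of ``layers of $r$-$\rec$s'' without elaboration.
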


\begin{proof}
    By assumption, the input to the quantum memory is a quantum state $\psi$ logically encoded into the concatenated stabilizer code $C_r$, with no faults on any code-blocks. If all $r$-$\rec$s in the circuit are correct, then after the first layer of $r$-$\rec$s, each $r$-block contains at most one $r\adj$. Inductively, this holds for all layers of $r$-$\rec$s, including the output layer. By \cref{lemma:1error->decodable}, the output is then decodable, and a noiseless encoding of $\psi$ is recovered. 
\end{proof}

\subsection{Good/Bad Rectangles and Sufficient Properties for Correctness}
\label{section:recs}

We are now in a position to quantify the effects of faults during the execution, and to study how they propagate. For this purpose, introduce the notion of Good/Bad rectangles \cite{aliferis2005quantum}. Informally, an $r$-$\rec$ is good if it only contains one bad $(r-1)$-$\rec$; we expect the $r$-blocks to be able to handle the presence of at most one lower level fault.  

\begin{definition}
    [Good/Bad $r$-$\rec$s] An $r$-$\rec$ is \emph{bad} if it contains at least 2 independent bad $(r-1)$-$\rec$s; if it is not bad, it is \emph{good}. Two bad $r$-$\rec$s are said to be \emph{independent} if they remain bad after removing any shared $r$-$\ec$ gadgets. A $0$-$\rec$ is \emph{bad} if it contains at least two faults.
\end{definition}

It will become relevant to quantify when rectangles independently fault and become bad; for this purpose the notion of \textit{independent} bad $r$-$\rec$s is introduced. See \cref{fig:badrecs} for a depiction.

\begin{figure}[h]
    \begin{subfigure}[b]{0.5\textwidth}
\centering
    \includegraphics[width = 0.8\linewidth]{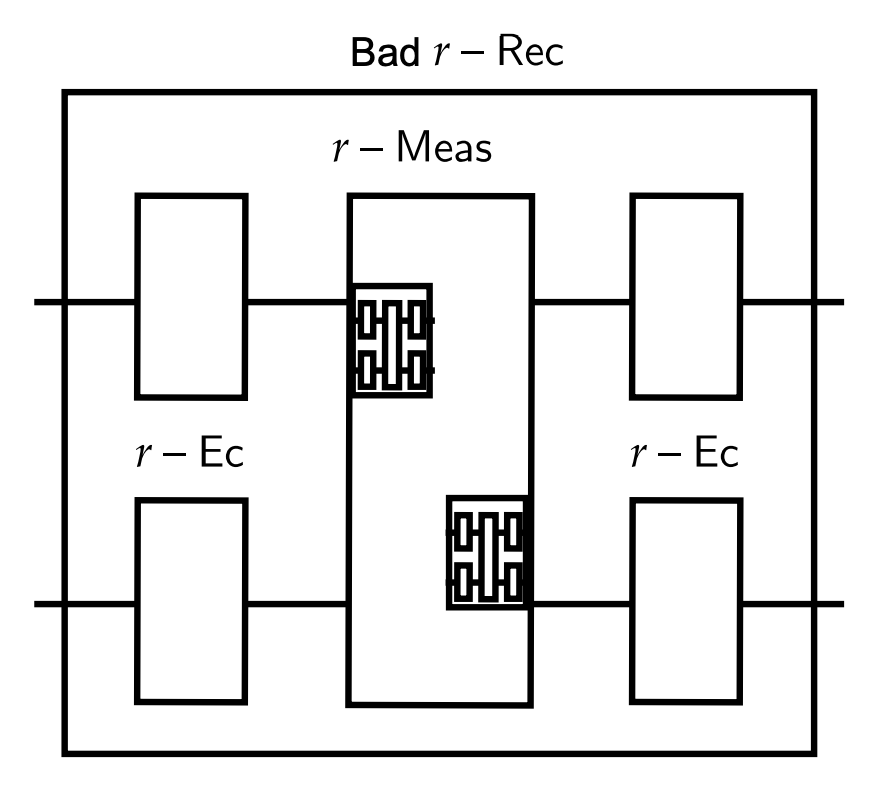}
    \caption{A bad $r$-$\rec$.}
\end{subfigure}
\begin{subfigure}[b]{0.5\textwidth}
\centering
    \includegraphics[width = .9\linewidth]{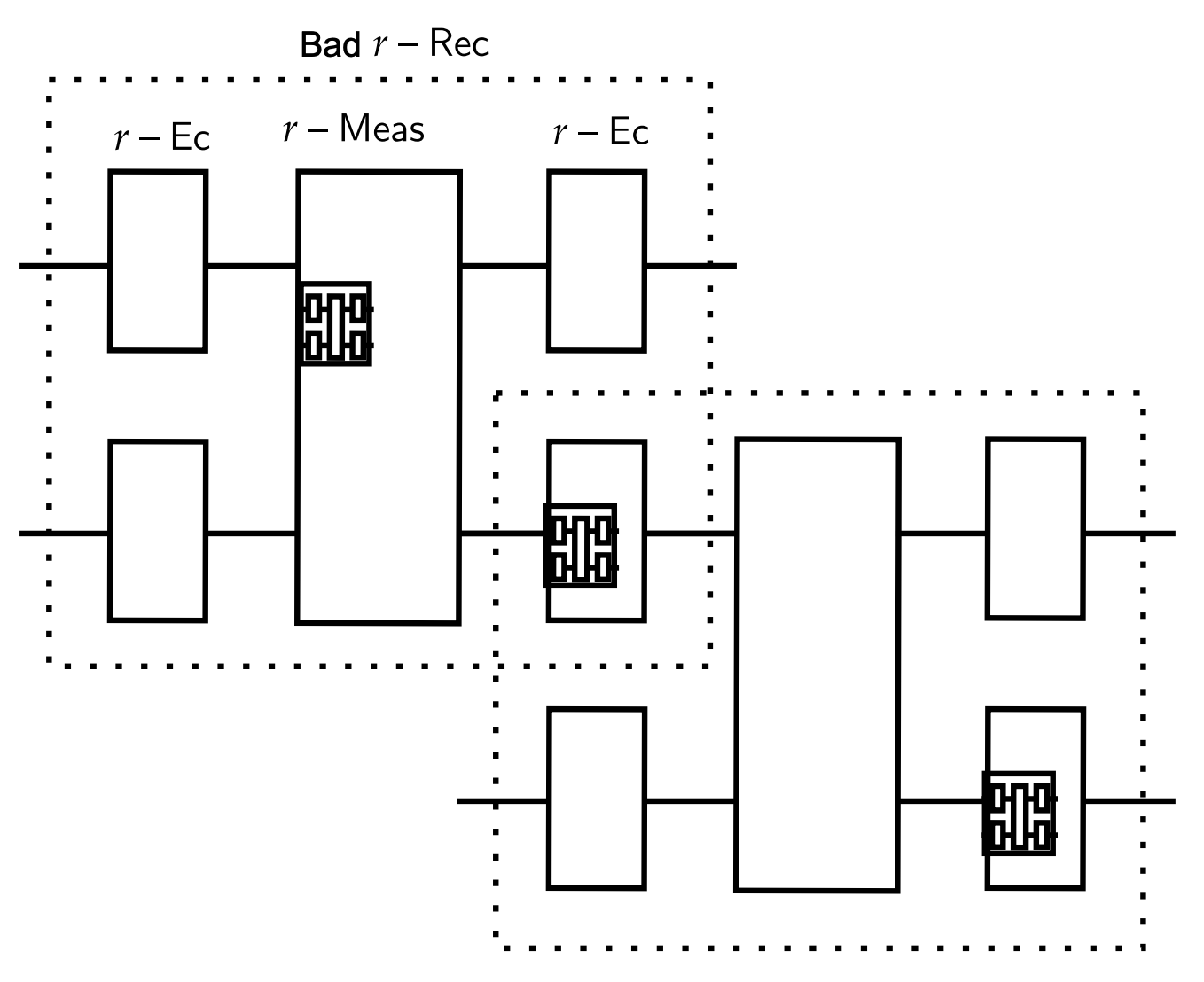}
    \caption{Two non-independent bad $r$-$\rec$s.}
\end{subfigure}
\caption{Depicted are bad $r$-$\rec$s. The smaller subsquares are bad $(r-1)$-$\rec$s; note that each bad $r$-$\rec$ has two (independent) bad $(r-1)$-$\rec$s. }
\label{fig:badrecs}
\end{figure}

We are now in a position to impose a set of basic conditions on these gadgets, which limit how they propagate and spread errors through the circuit.  

\begin{properties}\label{properties:prop1}
    For each $r\geq 1$, we assume the following properties on the Error Correction Gadget $r$-$\ec$ and the Measurement Gadget $r$-$\meas$, regarding their execution in the presence of noise. 

    \begin{enumerate}
        \item $r$-$\ec$, \emph{Faulty-Inputs}. If the input $r$-block to $r$-$\ec$ has at most one $r\adj$, and $r$-$\ec$ has no bad $(r-1)$-$\rec$, then the output $r$-block has no $r\error$s (\cref{fig:faulty-inputs}a).

        \item $r$-$\ec$, \emph{Faulty-Execution}.  If the input $r$-block to $r$-$\ec$ has no $r\error$s, and $r$-$\ec$ has no non-independent pair of bad $(r-1)$-$\rec$, then the output $r$-block has $\leq 1$ $r\adj$ (\cref{fig:faulty-execution}a).

         \item $r$-$\meas$, \emph{Faulty-Inputs}. If each input $r$-block to $r$-$\meas$ has at most one $r\adj$, and $r$-$\meas$ has no bad $(r-1)$-$\rec$, then the output $r$-blocks have $\leq 1$ $r\adj$ each (\cref{fig:faulty-inputs}b).

        \item $r$-$\meas$, \emph{Faulty-Execution}.  If each input $r$-block to $r$-$\meas$ has no $r\error$s, and $r$-$\ec$ has no non-independent pair of bad $(r-1)$-$\rec$s, then the output $r$-blocks have $\leq 1$ $r\adj$ each (\cref{fig:faulty-execution}b) . 
    \end{enumerate}

    \noindent The extension to the level $r=0$ of recursion is immediate. 
\end{properties}

\begin{figure}[h]
    \begin{subfigure}[b]{0.5\textwidth}
\centering
    \includegraphics[width = 0.7\linewidth]{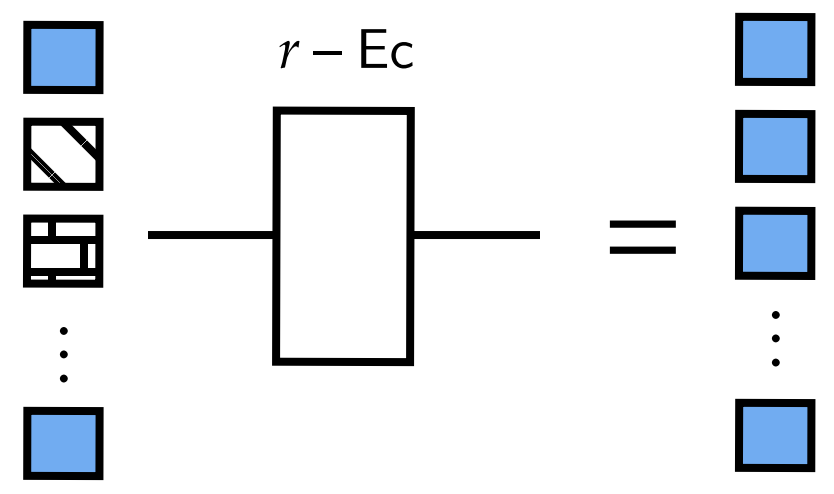}
    \caption{$r$-$\ec$, Faulty-Inputs}
\end{subfigure}
\begin{subfigure}[b]{0.5\textwidth}
\centering
    \includegraphics[width = .7\linewidth]{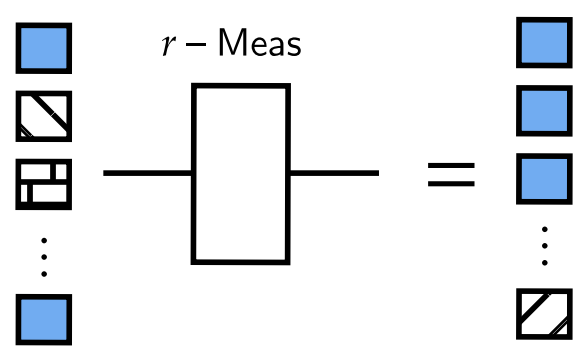}
    \caption{$r$-$\meas$, Faulty-Inputs}
\end{subfigure}
\caption{How $r$-$\ec$ and $r$-$\meas$ propagate errors under faulty inputs (but fault-free execution). }
\label{fig:faulty-inputs}
\end{figure}

\begin{figure}[h]
    \begin{subfigure}[b]{0.5\textwidth}
\centering
    \includegraphics[width = 0.7\linewidth]{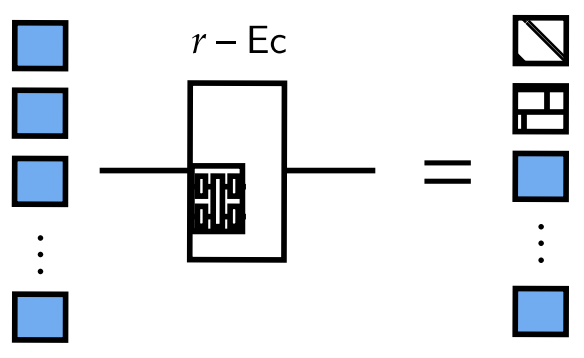}
    \caption{$r$-$\ec$, Faulty-Execution}
\end{subfigure}
\begin{subfigure}[b]{0.5\textwidth}
\centering
    \includegraphics[width = .7\linewidth]{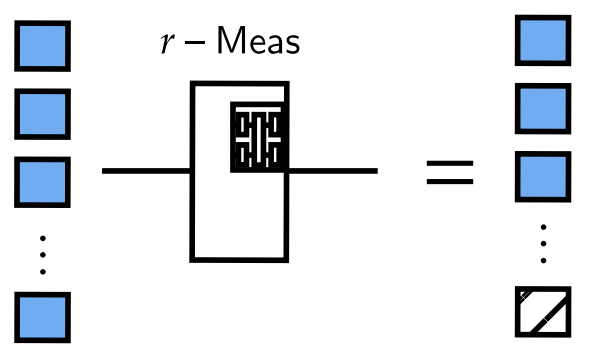}
    \caption{$r$-$\meas$, Faulty-Execution}
\end{subfigure}
\caption{How $r$-$\ec$ and $r$-$\meas$ create errors in the presence of a bad $(r-1)$-$\rec$ (but inputs which are $r$-error free). }
\label{fig:faulty-execution}
\end{figure}

We conclude this section with a crucial lemma, which shows that \cref{properties:prop1}, in combination with the assumption that all $r$-$\rec$s are good, implies all $r$-$\rec$s are correct. In turn, this implies that the logical information is preserved. In the next subsection, we present a percolation argument that proves the existence of a threshold noise rate $p^*$; below which the probability all $r$-$\rec$s are good with high probability (for sufficiently large $r$).

\begin{lemma}
    [Good $\Rightarrow$ Correct]\label{lemma:goodcorrect} Assume \cref{properties:prop1}. If every $r$-$\rec$ in the circuit is good, then every $r$-$\rec$ is also correct.
\end{lemma}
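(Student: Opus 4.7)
I plan to prove the lemma by induction on the concatenation level $r$, with the base case $r=0$ deferred to the explicit circuit-level analysis of Section~\ref{section:base_case}: there a $0$-$\rec$ is good exactly when it contains at most one fault, and the hand-designed base circuit can be verified by case analysis to map inputs with $\leq 1$ $0\adj$ to outputs with $\leq 1$ $0\adj$. For the inductive step I assume every good $(r-1)$-$\rec$ is correct, and consider an arbitrary good $r$-$\rec$ $R$, which is structured as a (possibly two-block) $r$-$\meas$ flanked by its $\leq 4$ adjacent $r$-$\ec$s. By definition of goodness, $R$ contains no independent pair of bad $(r-1)$-$\rec$s, so the ``fault budget'' inside $R$ is effectively a single bad $(r-1)$-$\rec$.

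\textbf{The two cases.} I would split on whether $R$ contains any bad $(r-1)$-$\rec$. If not, the inductive hypothesis makes every lower-level $(r-1)$-$\rec$ in $R$ behave ideally, so each sub-gadget of $R$ falls under its \emph{faulty-inputs} property. Tracing the invariant ``$\leq 1$ $r\adj$ per block'' through the sequence $r$-$\ec \to r$-$\meas \to r$-$\ec$ using Property~1, then Property~3, then Property~1 shows the output has no $r\error$s, which is strictly stronger than correctness. In the second case $R$ contains one or more bad $(r-1)$-$\rec$s but no independent pair; I would then argue that all such bad $(r-1)$-$\rec$s are localized to a single sub-gadget $S\in\{r\text{-}\ec_{\text{lead}},r\text{-}\meas,r\text{-}\ec_{\text{trail}}\}$. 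Indeed, two bad $(r-1)$-$\rec$s inside distinct sub-gadgets of $R$ share no $(r-1)$-$\ec$s, hence are automatically independent and would violate goodness.

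\textbf{Splitting around the bad sub-gadget.} With $S$ identified, I would decompose $R$ into a pre-portion $P_{\text{pre}}$ (sub-gadgets before $S$), the sub-gadget $S$ itself, and a post-portion $P_{\text{post}}$ (sub-gadgets after $S$), and propagate the invariant through each piece in turn. The pre-portion contains no bad $(r-1)$-$\rec$s, so Properties~1 and~3 reduce any $\leq 1$ $r\adj$ input into an $r$-error-free state. The sub-gadget $S$ then faces a clean input together with its bad $(r-1)$-$\rec$s, which is exactly the hypothesis of the relevant faulty-execution property (Property~2 if $S$ is an $r$-$\ec$, Property~4 if $S$ is an $r$-$\meas$), so its output has $\leq 1$ $r\adj$ per block. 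Finally $P_{\text{post}}$ is fault-free and preserves the invariant by another application of Properties~1 and~3. Composing the three stages yields the desired $\leq 1$ $r\adj$ per output block, establishing correctness of $R$ and closing the induction.

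\textbf{The main obstacle.} The delicate step is the ``localization'' claim in Case~2: I need to verify that under the paper's exact definition of independence (pairs remaining bad after removing shared $r$-$\ec$s, applied one level down), bad $(r-1)$-$\rec$s really cannot straddle two distinct sub-gadgets of $R$ without forming an independent pair. This requires spelling out which $(r-1)$-$\ec$s are shared between which $(r-1)$-$\rec$s, and is complicated by the two-block nature of $r$-$\meas$ (whose input/output $r$-$\ec$s on each of the two participating blocks can themselves share structure). The other subtle point is handling the $r$-$\meas$ symmetrically across both of its $r$-blocks: the input hypothesis is $\leq 1$ $r\adj$ \emph{per block}, so I will need to check that the faulty-inputs/faulty-execution properties for $r$-$\meas$ are applied with the correct per-block bookkeeping rather than collapsing the two blocks into one.
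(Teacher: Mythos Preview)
Your pre/$S$/post decomposition mirrors the paper's case split, but it breaks down when $S$ is the \emph{leading} $r$-$\ec$. In that case the pre-portion is empty, so $S$ receives the raw input carrying its $\leq 1$ $r\adj$ while simultaneously containing the bad $(r-1)$-$\rec$(s). Property~1 requires fault-free execution and Property~2 requires $r\error$-free input; neither hypothesis is met, and \cref{properties:prop1} contains no combined ``faulty-input $+$ faulty-execution'' clause for $r$-$\ec$. Your argument therefore cannot bound the output of $S$, and correctness of $R$ does not follow from treating $R$ in isolation.

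The paper resolves this not by strengthening the properties but by exploiting that consecutive $r$-$\rec$s \emph{share} their boundary $r$-$\ec$: the leading $r$-$\ec$ of $R$ is the trailing $r$-$\ec$ of the previous rectangle $R'$. Rather than prove abstract per-rectangle correctness, the paper argues by induction along the circuit from an error-free initial state. When the bad $(r-1)$-$\rec$s sit in the shared $r$-$\ec$, goodness of $R'$ forces everything earlier in $R'$ to be fault-free, so the \emph{input} to that shared $r$-$\ec$ has no $r\error$s and Property~2 applies after all; one then continues into $R$ using Properties~3 and~1 on its fault-free $r$-$\meas$ and trailing $r$-$\ec$. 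This overlapping-rectangle device (the extended-rectangle trick of Aliferis--Gottesman--Preskill) is the missing ingredient in your plan; the localization issue you flag as the main obstacle is real but secondary.

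As a minor point, the induction on $r$ is unnecessary: the lemma assumes \cref{properties:prop1} at level $r$ and the argument is direct at that level. Your inductive hypothesis that good $(r-1)$-$\rec$s are correct is never actually invoked.
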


\begin{proof}
    To prove correctness of the $r$-$\rec$s, it suffices to show that every $r$-block contains at most 1 $r\adj$ during the execution of the circuit. Note, by definition, if each $r$-$
    \rec$ is good, then each one may contain at most one pair of non-independent bad $(r-1)$-$\rec$s. Each pair of non-independent $(r-1)$-$\rec$s may lie within (1) the first layer of $r$-$\ec$s, or (2) within a $r$-$\meas$, or (3) within the last layer of $r$-$\ec$s (but not between). 

    We assume the input to the first layer of $r$-$\rec$s is a collection of $r$-blocks with no $r\error$s. Let us fix our attention to a specific $r$-$\rec$ in that first layer, and divide into cases on the location of its bad $(r-1)$-$\rec$s. If 

    \begin{enumerate}
        \item The bad $(r-1)$-$\rec$s lie within the first layer of $r$-$\ec$s (see \cref{fig:goodcorrect12}, a). By ($r$-$\ec$, \emph{Faulty-Execution}), after the $r$-$\ec$ the output $r$-block has $\leq 1$ $r\adj$. Consequently, by ($r$-$\meas$, \emph{Faulty-Inputs}), after $r$-$\meas$, the output $r$-blocks still have $\leq 1$ $r\adj$. Since the $r$-$\rec$ is Good, the last layer of $r$-$\ec$s cannot have any bad $(r-1)$-$\rec$. Thus, by ($r$-$\ec$, \emph{Faulty-Inputs}), the output $r$-blocks to said $r$-$\rec$ have no $r\error$s, and therefore is Correct. 

        \item The bad $(r-1)$-$\rec$s lie within the $r$-$\meas$ (see \cref{fig:goodcorrect12}, b). If the input to the $r$-$\rec$ contains $\leq 1$ $r\adj$, then by ($r$-$\ec$, \emph{Faulty-Inputs}), after the first layer of $r$-$\ec$ gadgets, each $r$-block contains no $r$-errors. By ($r$-$\meas$, \emph{Faulty-Execution}), after the faulty $r$-$\meas$, each output block contains $\leq 1$ $r\adj$. Since the $r$-$\rec$ is Good, the last layer of $r$-$\ec$ gadgets cannot have any bad $(r-1)$-$\rec$. Thus, by ($r$-$\ec$, \emph{Faulty-Inputs}), the output $r$-blocks to said $r$-$\rec$ have no $r\error$, and therefore is Correct.

\begin{figure}[h]
    \begin{subfigure}[b]{0.5\textwidth}
\centering
    \includegraphics[width = 0.9\linewidth]{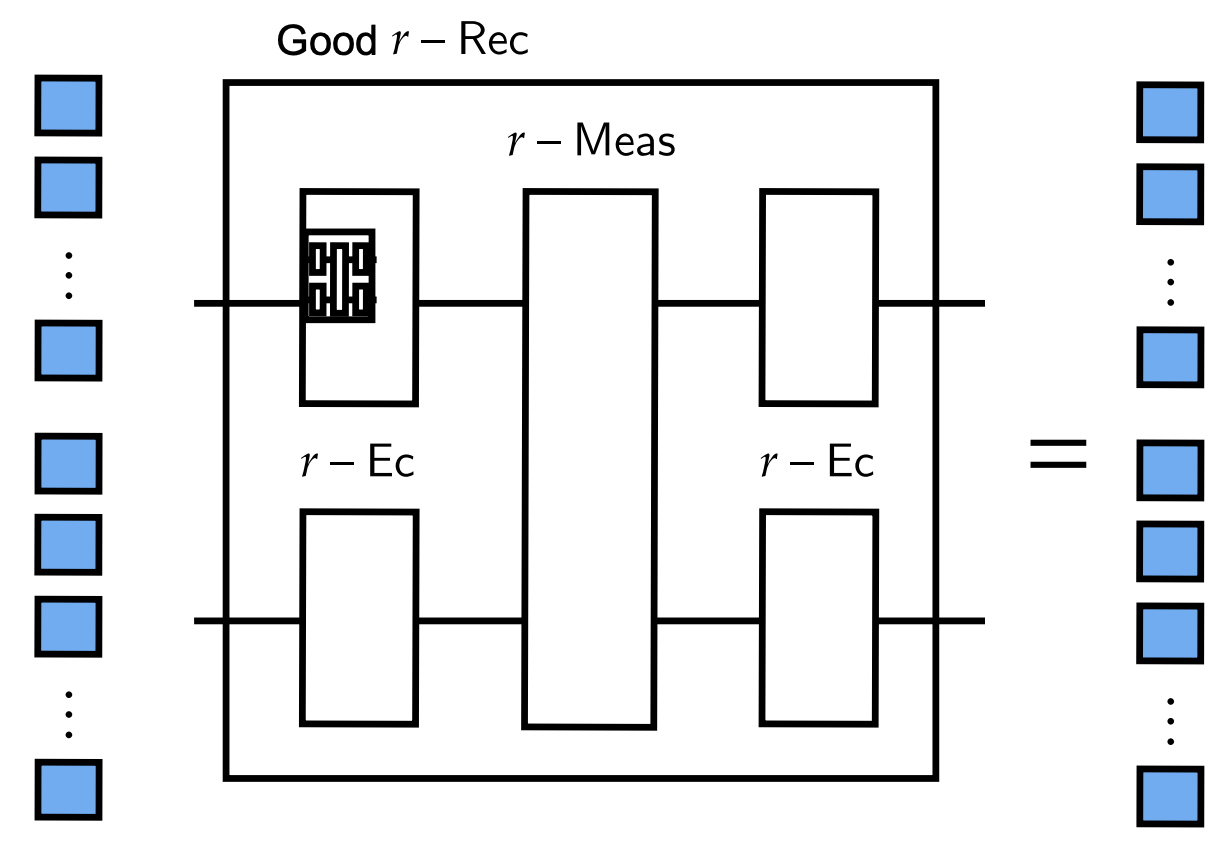}
    \caption{Case 1: Bad $(r-1)$-$\rec$s in the first layer of $r$-$\ec$s.}
\end{subfigure}
\begin{subfigure}[b]{0.5\textwidth}
\centering
    \includegraphics[width = .9\linewidth]{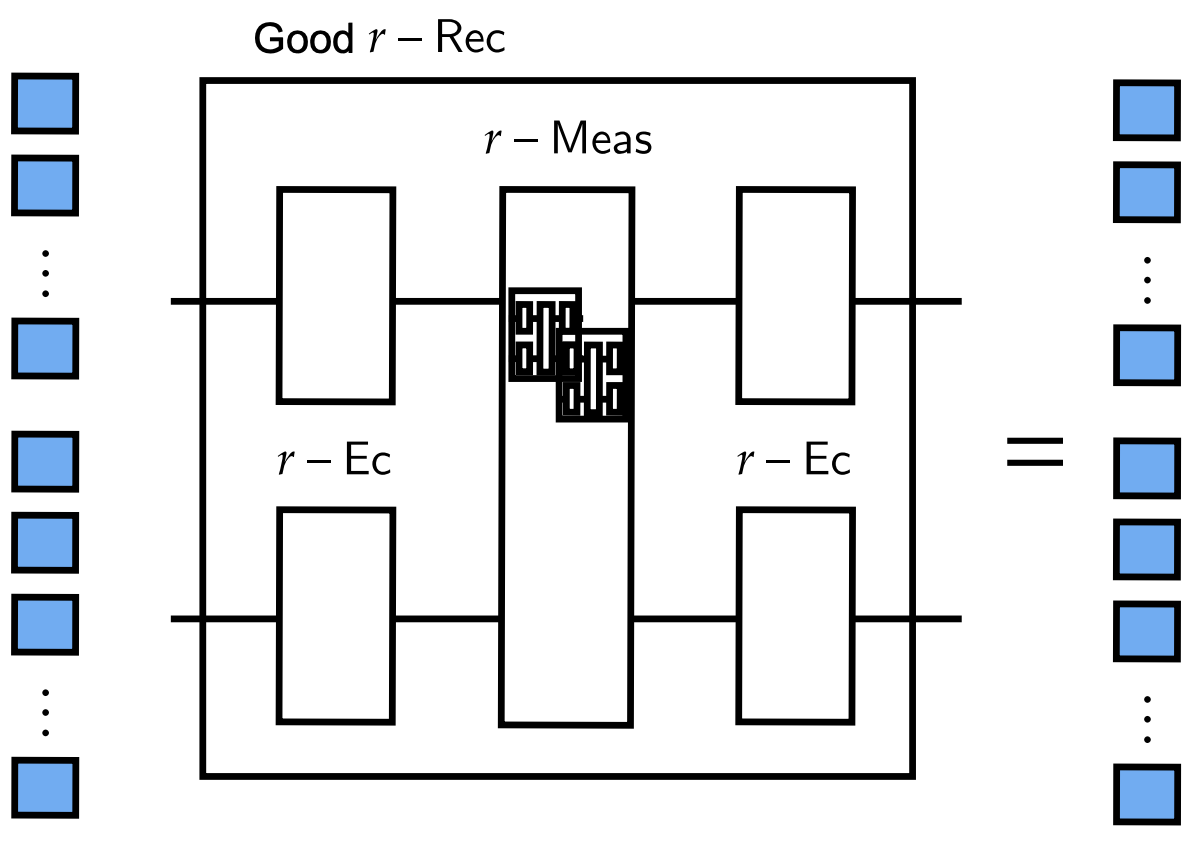}
    \caption{Case 2: Bad $(r-1)$-$\rec$s in the $r$-$\meas$.}
\end{subfigure}
\caption{How a bad $(r-1)$-$\rec$ in a Good $r$-$\rec$ doesn't proliferate errors, Cases 1 and 2. }
\label{fig:goodcorrect12}
\end{figure}

        \item The bad $(r-1)$-$\rec$s lie within the last layer of $r$-$\ec$ gadgets (see \cref{fig:goodcorrect3}, a). Up till that last layer, the $r$-blocks contain no $r$-errors. By ($r$-$\ec$, \emph{Faulty-Execution}), each output $r$-block to that $r$-$\rec$ has $\leq 1$ $r\adj$. Therefore, the $r$-$\rec$ is Correct. 
    \end{enumerate}    
    
    The argument above only addresses the first layer of $r$-$\rec$s, whose inputs have no $r$-errors. However, note that in cases (1, 2), the output $r$-blocks also have no $r$-errors, so similar reasoning could apply to the correctness of the next layer. The non-trivial case lies in $(3)$, which we discuss as follows.

    Consider all the $r$-$\rec$s adjacent to a given $r$-$\rec$ whose bad $(r-1)$-$\rec$s lie within the last layer of $r$-$\ec$ gadgets (i.e., neighboring case 3 above. See \cref{fig:goodcorrect3}, b). Crucially, since by assumption these neighbors are Good, the location of the bad $(r-1)$-$\rec$s in these neighbors is determined to lie in their first layer of $r$-$\ec$s (since they overlap in said locations). Applying ($r$-$\meas$, \emph{Faulty-Inputs}) to these neighboring $r$-$\rec$s, we deduce each of their $r$-blocks has $\leq 1$ $ r\adj$. Since their last layer of $r$-$\ec$ gadgets in the neighboring $r$-$\rec$s contains no bad $(r-1)$-$\rec$, by ($r$-$\ec$, \emph{Faulty-Inputs}) we conclude their output has no $r$-errors, and therefore is Correct. 

    \begin{figure}[h]
    \begin{subfigure}[b]{0.5\textwidth}
\centering
    \includegraphics[width = 0.9\linewidth]{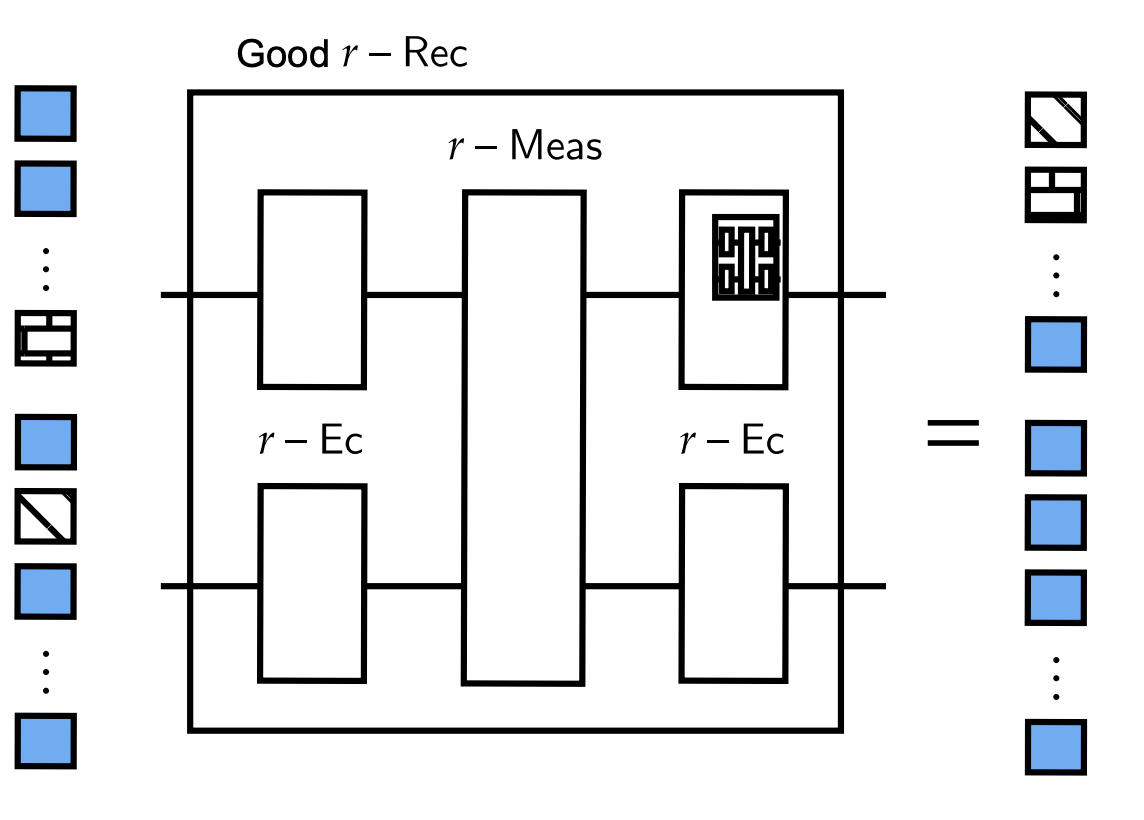}
    \caption{Case 3: Bad $(r-1)$-$\rec$s in the last layer of $r$-$\ec$s.}
\end{subfigure}
\begin{subfigure}[b]{0.5\textwidth}
\centering
    \includegraphics[width = .95\linewidth]{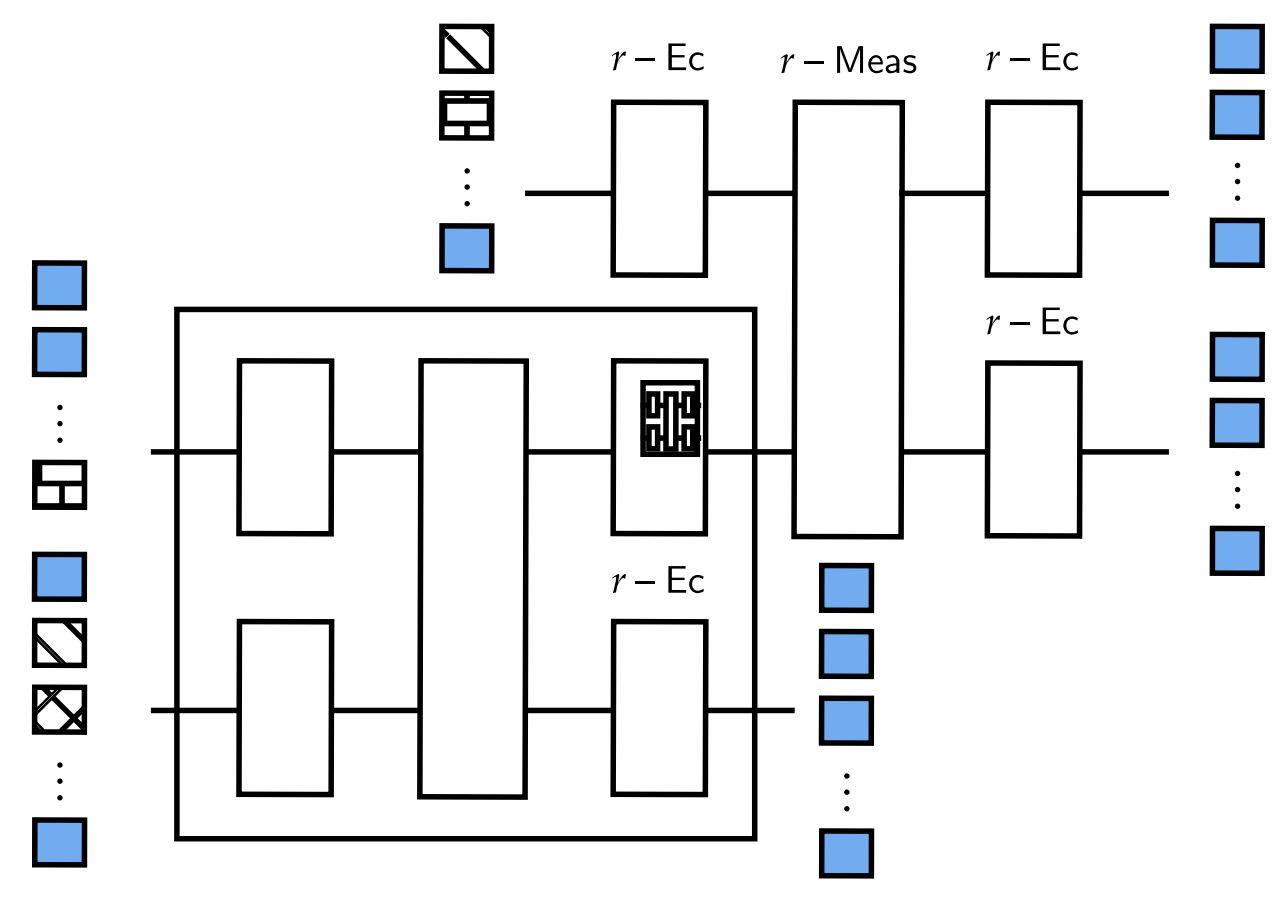}
    \caption{Correctness of $r$-$\rec$s neighboring Case 3.}
\end{subfigure}
\caption{How a bad $(r-1)$-$\rec$ in a Good $r$-$\rec$ doesn't proliferate errors, Case 3. }
\label{fig:goodcorrect3}
\end{figure}
\end{proof}

\subsection{Percolation of Bad $r$-$\rec$s and a Threshold Theorem}
\label{section:percolation}

We dedicate this section to a proof that Bad rectangles are exceedingly rare, as $r$ increases. The proof follows a now-standard percolation argument \cite{Aharonov1996FaulttolerantQC, aliferis2005quantum}.

\begin{lemma}[A Threshold Theorem]\label{lemma:threshold}
    There exists a constant $c\in (0, 1)$ and a threshold noise rate $p^*\in  (0,1)$, such that for all $p<p^*$, and concatenation levels $r\geq 0$, the probability a given $r$-$\rec$ is Bad is $\leq 2^{-2^{c\cdot r}}$.
\end{lemma}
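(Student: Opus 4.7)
The plan is to induct on $r$ and expand a Bad $r$-$\rec$ into a fault configuration of independent physical fault locations, then apply the local stochastic noise bound. Let $p_r$ denote the probability that a given $r$-$\rec$ is Bad. The base case $p_0 \leq \binom{A_0}{2} p^2$, with $A_0$ the constant number of fault locations in a $0$-$\rec$, is immediate from local stochastic noise.

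For the inductive step, I would unroll the recursive definition of Bad: a Bad $r$-$\rec$ witnesses two independent Bad $(r-1)$-$\rec$s, each of which witnesses two independent Bad $(r-2)$-$\rec$s, and so on down to Bad $0$-$\rec$s which each witness two physical faults. This produces a complete binary tree of depth $r+1$ whose $2^{r+1}$ leaves are physical fault locations. The crucial point is that ``independent'' at every level (Bad sub-rectangles remain Bad after removing shared $\ec$ gadgets) forces the leaves to be pairwise distinct. Hence a Bad $r$-$\rec$ certifies the joint failure of a specific set $S$ of $2^{r+1}$ distinct physical fault locations.

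To count the number of such certificates rooted at a fixed $r$-$\rec$, let $A_i$ be the maximum number of $(i-1)$-$\rec$ locations within an $i$-$\rec$; from the time-overhead analysis of \cref{lemma:time-overhead} and the block-length $n_i = 2^{\Theta(i^2)}$, one has $A_i = 2^{O(i^2)}$. The total number of certificates is at most
\[
N_r \;\leq\; \prod_{i=0}^{r} \binom{A_i}{2}^{2^{r-i}} \;\leq\; 2^{\,C\cdot 2^{r}},
\]
for some absolute constant $C$, using $\log_2 A_i = O(i^2)$ and the convergence of $\sum_{i\geq 0} i^2 / 2^i$. Applying local stochastic noise to each certificate and union-bounding gives
\[
p_r \;\leq\; N_r \cdot p^{\,2^{r+1}} \;\leq\; 2^{\,C\cdot 2^{r}} \cdot p^{\,2^{r+1}}.
\]
Setting the threshold $p^* := 2^{-(C+2)/2}$, for $p < p^*$ this yields $p_r \leq 2^{-2\cdot 2^{r}} = 2^{-2^{r+1}}$, which implies the stated bound $2^{-2^{c r}}$ for any $c \in (0,1)$.

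The main obstacle is the propagation of independence in the second step: one must verify that at every level of the recursion the two Bad sub-rectangles can be chosen so that their underlying physical fault locations are disjoint, so that the product bound from local stochastic noise genuinely applies. This follows from carefully unrolling the ``Bad after removing shared $r$-$\ec$ gadgets'' clause, but bookkeeping shared $\ec$ locations through all $r$ levels is where the argument is most delicate. Once that is in place, the counting step and the solution of the recursion are routine computations leveraging \cref{lemma:time-overhead}.
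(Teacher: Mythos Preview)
Your proposal is correct and is the same percolation argument as the paper's, just presented via explicit unrolling rather than by iterating a recursion. The paper writes $p_r \leq O(V_r^2 \cdot p_{r-1}^2)$ with $V_r = 2^{\poly(r)}$ and then solves the recursion by picking $c<1$ so that the slack $(2-2^c)\cdot 2^{cr}$ in the exponent eventually dominates the $\poly(r)$ overhead, fixing the threshold by pushing the base case to some constant level $r^*$. You instead unroll to a binary fault-tree with $2^{r+1}$ physical leaves and bound the number of trees by $2^{C\cdot 2^r}$ via the convergence of $\sum_i i^2/2^i$; this is precisely the closed-form solution of the paper's recursion. Your version is arguably more honest about the local stochastic noise model: the paper's recursion step tacitly treats two ``independent'' Bad $(r-1)$-$\rec$s as independent \emph{events}, which local stochastic noise does not literally guarantee, whereas applying $\Pr[S\subset \mathrm{Supp}]\leq p^{|S|}$ directly to a single disjoint set $S$ of $2^{r+1}$ locations avoids this. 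The disjointness-of-leaves bookkeeping you flag as delicate is exactly the content hidden behind the paper's one-line appeal to independence, and your sketch of how the ``remain Bad after removing shared $\ec$'' clause forces witnesses into disjoint regions at every level is the right way to discharge it.
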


\begin{proof}
    Let $p_r$ be the probability a given $r$-$\rec$ is bad. Recall an $r$-$\rec$ is bad if it contains at least $2$ independent bad $(r-1)$-$\rec$s, and, $(r-1)$-$\rec$s are which are bad and fail independently are independent as random variables. By a union bound, 

    \begin{gather}
        p_r =  \bigg(\text{Choices of $2$ Independent Bad $(r-1)$-$\rec$s}\bigg)\times p_{r-1}^2  = O(V_{r}^2\times p_{r-1}^2)
    \end{gather}

    \noindent Where we assume $V_{r}\times p_{r-1} < 1$. $V_r$ is the number of $(r-1)$-$\rec$s in any $r$-$\rec$ (the ``space-time volume"), which satisfies:

    \begin{equation}
        V_r \leq O\bigg(\frac{T_r}{T_{r-1}} \times \frac{C_r}{C_{r-1}}\bigg) \leq 2^{\poly(r)} 
    \end{equation}

    \noindent We claim, inductively, $p_r  \leq  2^{-2^{c\cdot r}}$ for some constant $c < 1$. Indeed, 

    \begin{equation}
        p_{r+1} = p_r^2\times 2^{O(\poly(r))} \leq 2^{-2\cdot 2^{c\cdot r}}\cdot 2^{O(\poly(r))} = 2^{-2^{c\cdot (r+1)}} \cdot 2^{-(2-2^c)\cdot 2^{cr} + \poly(r)}
    \end{equation}

    \noindent For every $c\in (0, 1)$, there exists a constant $r^*$ such that $\forall r\geq r^*$ we have $(2-2^c)\cdot 2^{cr} \geq  \poly(r)$, thereby satisfying the recursion. It only remains now to satisfy the base case, where $p_{r^*}\leq  2^{-2^{c\cdot r^*}}$; for this purpose, we pick a sufficiently small constant noise rate $p^*$ which implies the base case. For a loose argument, it suffices to pick $p^*$ such that the probability of any single $0$-error in $r^*$-$\rec$ is $p_{r^*}$:

    \begin{equation}
        p_{r^*} \leq p^*\cdot \prod_{k\leq r^*} V_{k}= p^* \cdot 2^{\poly(r^*)} \leq 2^{-2^{c\cdot r^*}}\Rightarrow p^*\equiv 2^{-2^{c\cdot r^*}} \cdot 2^{-\poly(r^*)} .
    \end{equation}

\end{proof}

\section{Proofs for the Error-Propagation Properties}
\label{section:propagation_proofs}

We dedicate this section to a proof that the quantum memory satisfies the desired error-propagation properties required for correctness, at every level $r\geq 1$, assuming they do so at level $0$. The base case of the induction is proved in the subsequent section \cref{section:base_case}. We refer the reader back to \cref{properties:prop1} for a recollection of the desired properties of $r$-$\ec$, $r$-$\meas$. 

Our proof strategy is inductive: we assume that the collection of \cref{properties:prop1} hold at every level $k\leq r$, and show how to combine with the properties of the hamming code at that level (and the measurement circuits) to achieve the properties at level $r$. Instrumental will be to add another assumption to the pile, regarding the behavior of the error correction gadget on arbitrary input states. Informally, we assume that even in the presence of $\leq 1$ fault in the gadget, it takes \textit{any} input back to the code-space (up to a single $r\adj$), similar to \cite{Aharonov1996FaulttolerantQC, aliferis2005quantum}.

\begin{properties}
    [$r$-$\ec$, Arbitrary Inputs]\label{properties:prop2} We assume that on input an arbitrary state, if the error-correction gadget $r$-$\ec$ has at most one non-independent pair of bad $(r-1)$-$\rec$s, then the output $r$-block has $\leq 1$ $r\adj$ (see \cref{fig:arbitrary_inputs}). The output may contain an arbitrary encoded logical state. 
\end{properties}

\begin{figure}[h]
    \centering
    \includegraphics[width=0.5\linewidth]{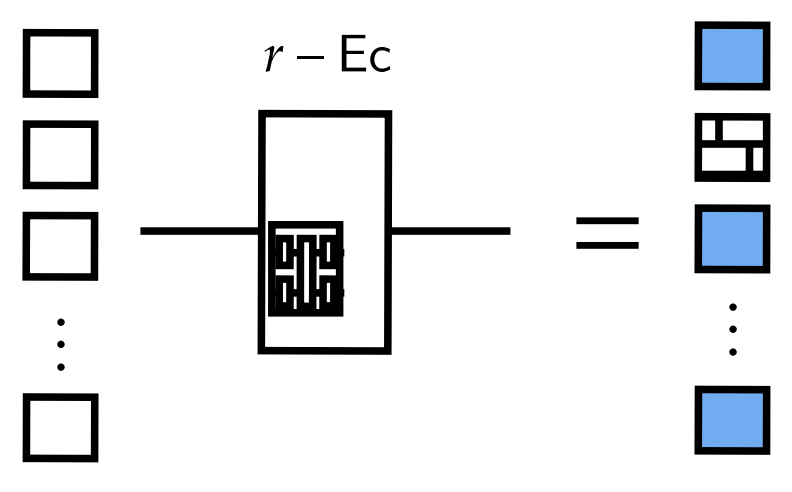}
    \caption{[$r$-$\ec$, Arbitrary Inputs] Any state (white blocks) is converted into a code-state of $C_r$ by $r$-$\ec$ (blue blocks), up to one (adjacent pair of) $r$-error (s).}
    \label{fig:arbitrary_inputs}
\end{figure}

We dedicate the ensuing subsections to an inductive proof of said properties. We begin in \cref{section:faultyinputs-proofs} with a proof of the properties of $r$-$\ec$ and $r$-$\meas$, under faulty inputs. In \cref{section:rho_execution}, we present an interlude and present a key lemma on the faulty execution of the non-fault-tolerant measurement $r$-$\ho$. Then, in \cref{section:faultyexecution-proofs}, we analyze faulty execution of $r$-$\ec$ and $r$-$\meas$. This concludes a proof of \cref{properties:prop1}, conditional on \cref{properties:prop2}. We conclude in \cref{section:rec-inputs} with a proof of [$r$-$\ec$, Arbitrary Inputs] of \cref{properties:prop2}.

\subsection{$r$-$\ec$ and $r$-$\meas$, under Faulty Inputs}
\label{section:faultyinputs-proofs}

Recall that an $(r+1)$-error in an $(r+1)$-block corresponds to an $r$-block contained within that $(r+1)$-block, which is corrupted arbitrarily. To study the behavior of $(r+1)$-$\ec$ and $(r+1)$-$\meas$ under Faulty Inputs, broadly we argue that the these corrupted $r$-blocks can be converted or \textit{simulated} by single-qubit Pauli errors on the underlying Hamming code, which are then encoded into the $r$-blocks. The properties of the error-correction gadgets in combination with the distance 3 guarantees of the Hamming code will then be sufficient to ensure correctness. 

\begin{lemma}
    [$(r+1)$-$\ec$, Faulty Inputs]\label{lemma:ecfaultinputs} Assume  \cref{properties:prop1}, \cref{properties:prop2} hold at every level $k\leq r$. Then, \emph{$(r+1)$-$\ec$, Faulty Inputs} holds at level $r+1$. 
\end{lemma}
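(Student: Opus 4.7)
The plan is to reduce the claim to two distance-$3$ properties working in concert: the outer Hamming code $H$ underlying the $(r+1)$-block, and a distance-$3$ repetition code in time provided by the three successive $(r+1)$-$\ho$ rounds that make up $(r+1)$-$\ec$. The hypothesis that no $r$-$\rec$ in $(r+1)$-$\ec$ is bad will supply a reliable $r$-level substrate, while \cref{properties:prop2} at level $r$ will absorb the arbitrary corruption entering as the single $(r+1)\adj$.

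The first step is to view the input $(r+1)\adj$ as a single-qubit data error on each of two distinct copies of the outer Hamming code. An $(r+1)\adj$ is a pair of adjacent $r$-blocks that may each be arbitrarily corrupted. By the definition $C_{r+1} = H \otimes \bigl(2\otimes \mathsf{Reserve}_1(C_r)\bigr)$, the $2\otimes$ interleaving routes the odd-indexed and even-indexed $r$-blocks into two disjoint copies of $H$, so two adjacent corrupted $r$-blocks contribute exactly one arbitrary outer-level data-error to each copy. Invoking \cref{properties:prop2} at level $r$ on the first layer of $r$-$\ec$s inside the first $(r+1)$-$\ho$ pulls these two corrupted $r$-blocks back into the $C_r$ code space up to at most one $r\adj$ each; from this point on every $r$-block in the $(r+1)$-block carries at most one $r\adj$.

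The second step leverages the no-bad-$r$-$\rec$ hypothesis: every $r$-$\rec$ inside $(r+1)$-$\ec$ is good, so by the inductive form of \cref{lemma:goodcorrect} at level $r$ every such $r$-$\rec$ is correct and the $r$-level data correctness is preserved throughout all three $(r+1)$-$\ho$ rounds. The key sub-claim is that because $(r+1)$-$\ho$ is by design a hookless Shor-style gadget whose cat state is hardened by nearest-neighbor $ZZ$ checks (a repetition code) rather than pairwise random comparisons, each of its three executions (i) introduces no outer Hamming-code data damage beyond the single-qubit errors from step one, and (ii) returns an outer syndrome that disagrees with the true outer syndrome in at most one bit per round, coming from the at-most-one bad $(r-1)$-$\rec$ permitted inside each good $r$-$\rec$.

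Given the sub-claim, the three outer syndromes form a distance-$3$ repetition code in time that classically recovers the true outer syndrome for each stabilizer; the distance-$3$ Hamming decoder then corrects the two single-qubit errors identified in step one, and the resulting output $(r+1)$-block contains no $(r+1)\error$. The main obstacle will be the second step's sub-claim (ii): a careful case analysis showing that a single bad $(r-1)$-$\rec$ hidden inside a good $r$-$\rec$ of $(r+1)$-$\ho$ perturbs at most one bit of the emerging outer syndrome and cannot propagate into a hook error on the outer data. This boils down to a location-by-location inspection of where the bad $(r-1)$-$\rec$ could sit inside the cat-state preparation, the parity readouts, and the final $Z$-basis measurements of $(r+1)$-$\ho$, and is precisely where the nearest-neighbor hardening of the cat is designed to prevent cascading errors.
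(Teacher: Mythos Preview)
Your first step is exactly the paper's: the $2\otimes$ interleaving sends the two adjacent corrupted $r$-blocks to single-qubit data errors on \emph{distinct} copies of the outer Hamming code, and \cref{properties:prop2} at level $r$ (applied to the first layer of $r$-$\ec$s) brings every $r$-block back into the $C_r$ code space up to one $r\adj$. That part is fine.

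The gap is in your second step. Your sub-claim (ii)---that each round of $(r+1)$-$\ho$ disagrees with the true outer syndrome in ``at most one bit,'' because each good $r$-$\rec$ may hide one bad $(r-1)$-$\rec$---is false as stated. A single round of outer syndrome extraction consists of many $(r+1)$-$\ho$ calls (one per outer stabilizer), and each of these in turn contains many $r$-$\rec$s; every one of those good $r$-$\rec$s is allowed a bad $(r-1)$-$\rec$. So you cannot bound the number of corrupted syndrome bits per round by one, and the three-round time-repetition argument you propose does not close.

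The paper avoids this by not descending to level $r-1$ at all. The hypothesis of \emph{Faulty Inputs} is that $(r+1)$-$\ec$ contains \emph{no} bad $r$-$\rec$s; hence every $r$-$\rec$ is good, and the Good $\Rightarrow$ Correct machinery (\cref{lemma:goodcorrect}) together with the level-$r$ instances of \cref{properties:prop1} already absorbs whatever bad $(r-1)$-$\rec$s sit inside them. In particular every $r$-$\meas$ returns the correct logical outcome and every cat state is prepared correctly, so \emph{every} outer-stabilizer measurement in all three rounds is correct. The three repetitions play no role in this lemma (they are what makes \emph{Faulty Execution} work), and no ``location-by-location inspection'' is needed---that would amount to re-deriving the level-$r$ properties you already assumed. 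Once the outer syndrome is known to be correct, the distance-$3$ Hamming decoder removes the single-qubit error on each copy, and the output carries no $(r+1)\error$.
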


\begin{proof}
    Of the input $r$-blocks to the $(r+1)$-$\ec$, at most $2$ adjacent ones contain $(r+1)$-errors. Recall that within $(r+1)$-$\ec$, the first operation to be performed is a layer of $r$-$\ec$s (see \cref{fig:layerofrecs}, LHS), whose $r$-$\rec$s are all Good (by assumption of the Faulty Inputs property). Since each $r$-$\ec$ is in a Good $r$-$\rec$, it contains at most one non-independent pair of bad $(r-1)$-$\rec$s, and thus we can apply [$r$-$\ec$, Arbitrary Inputs] of \cref{properties:prop2}.

    [$r$-$\ec$, Arbitrary Inputs] implies this layer of $r$-$\ec$s converts the $2$ faulty $r$-blocks into arbitrary code-states of $C_r$, possibly up to a single $r\adj$ each. Which, in turn, correspond to logical single-qubit errors on the underlying Hamming codes within $C_{r+1}$, due to the interleaving structure of the code definition in \cref{equation:code_definition} (see \cref{fig:layerofrecs}, RHS); as well as two corrupted cat qubits. 
    
    Since all subsequent $r$-$\rec$s in the $(r+1)$-$\ec$ are also Good (again, by assumption), by the correctness guarantee \cref{lemma:goodcorrect} we are guaranteed the output $r$-blocks contains at most one $r\adj$. That is to say, all the measurements are performed correctly, all the encoded cat states are correctly prepared, and in particular all stabilizer measurements of the underlying Hamming codes are performed correctly. Since the Hamming code is distance $3$, one can recover and correct the logical single-qubit errors on the Hamming codes. After applying a Pauli correction, this results in the desired $(r+1)$-block, with no $(r+1)$-error, proving [$(r+1)$-$\ec$, Faulty Inputs].
\end{proof}

\begin{figure}
    \centering
    \includegraphics[width=1.0\linewidth]{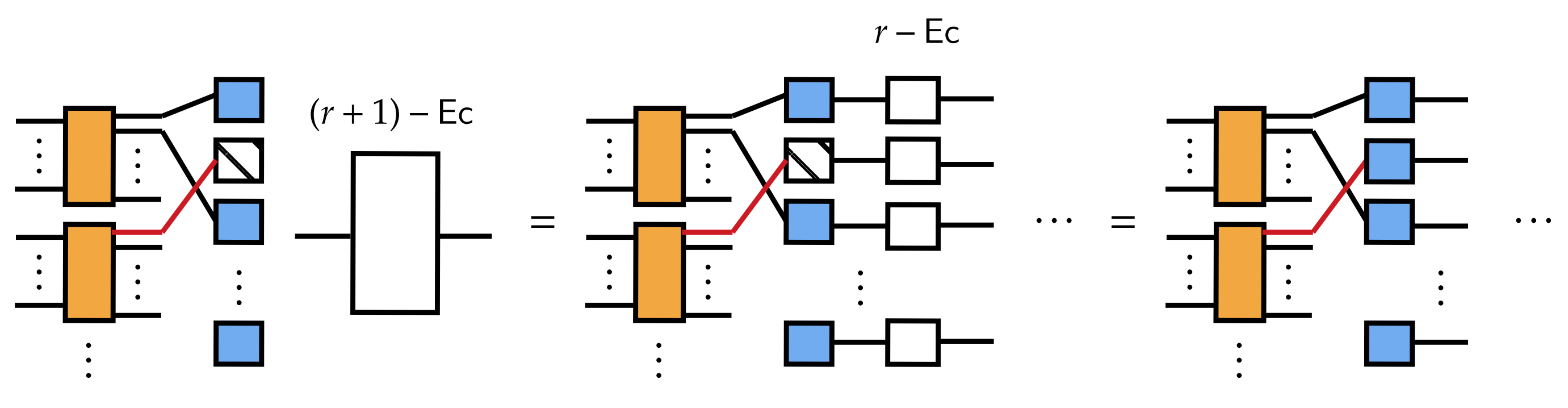}
    \caption{The first operation within a $(r+1)$-$\ec$ is a layer of $r$-$\ec$s. By [$r$-$\ec$, Arbitrary Inputs], this converts arbitrary input $r$-blocks (white stripes) to code-states of $C_r$ with $\leq 1$ $r\adj$ (blue); However, an arbitrary single qubit error is applied to the underlying Hamming codes (red). }
    \label{fig:layerofrecs}
\end{figure}

Let us now turn our attention to the case of $(r+1)$-$\meas$, under faulty inputs. As we discuss, we begin similarly to the above $(r+1)$-$\ec$ case. However, crucial in the fault tolerance of $(r+1)$-$\meas$ will be to ensure that the measurement outcome can be reliably obtained in the presence of faulty-inputs; for that purpose, we appeal to the fact that $(r+1)$-$\meas$ is built by alternating 3 Hookless measurements $(r+1)$-$\ho$ with error correction rounds $(r+1)$-$\ec$. We show that in the presence of faulty inputs (but perfect execution), only the first $(r+1)$-$\ho$ is corrupted, while the redundancy in the next two ensure correctness.

\begin{lemma}
    [$(r+1)$-$\meas$, Faulty Inputs]\label{lemma:meas-faultyinputs} Assume \cref{properties:prop1}, \cref{properties:prop2}  hold at every level $k\leq r$. Then, \emph{$(r+1)$-$\meas$, Faulty Inputs} holds at level $r+1$. 
\end{lemma}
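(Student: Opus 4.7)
The plan is to walk left-to-right through $(r+1)$-$\meas$, which is an alternating sequence of three $(r+1)$-$\ho$ hookless measurements separated by two internal $(r+1)$-$\ec$ rounds, and to argue that the first internal $(r+1)$-$\ec$ scrubs the input clean, after which the two remaining hookless measurements produce correct outcomes and preserve the $\leq 1$ $(r+1)\adj$ property. Throughout I will invoke the inductive hypothesis that \cref{properties:prop1,properties:prop2} hold at every level $k \leq r$, together with the just-proved [$(r+1)$-$\ec$, Faulty Inputs] (\cref{lemma:ecfaultinputs}).

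The first task is to handle $\ho_1$ on the faulty input. Up to two adjacent $r$-blocks inside the $(r+1)$-block are corrupted, but every $r$-$\rec$ inside $\ho_1$ is good by assumption, so by \cref{lemma:goodcorrect} applied at level $r$ each such $r$-$\rec$ is correct. In particular, when a corrupted $r$-block first meets an internal $r$-$\ec$, the arbitrary-input guarantee \cref{properties:prop2} collapses it back into a code-state of $C_r$ with $\leq 1$ $r\adj$; by the interleaving in \cref{equation:code_definition} this shows up in the $(r+1)$-view as at most a single-qubit Pauli error on the underlying Hamming code of $C_{r+1}$, i.e.\ still $\leq 1$ $(r+1)\adj$. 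The subsequent cat-state preparation and parity $r$-$\meas$s are good and hence correct, so no further $(r+1)\error$s can be spawned, even though the bit returned by $\ho_1$ may be wrong because of the residual single-qubit Hamming-code error.

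The first internal $(r+1)$-$\ec$ then receives an input with $\leq 1$ $(r+1)\adj$ and contains no bad $r$-$\rec$, so by \cref{lemma:ecfaultinputs} its output has no $(r+1)\error$s. The second $(r+1)$-$\ho$ now acts on a clean input; rerunning the previous bookkeeping with trivial corruption shows its output still has $\leq 1$ $(r+1)\adj$ and, crucially, its measurement bit is correct because the underlying Hamming code is error-free. The second internal $(r+1)$-$\ec$ and third $(r+1)$-$\ho$ repeat the same pattern, so a 2-out-of-3 majority vote over the three $\ho$ outcomes returns the correct logical bit (the first may be wrong, the last two are right), and the final $(r+1)$-blocks each carry $\leq 1$ $(r+1)\adj$, which is what the statement requires.

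The main obstacle is the first step: ruling out that a corrupted input to $\ho_1$ proliferates into new $(r+1)\error$s. I expect this to rest on marrying two ingredients. First, the structural hooklessness of $\ho$---the cat state is built by a 1D $ZZ$-repetition-code construction so that no single cat-side mishap can damage multiple data blocks---ensures that cat-qubit dynamics cannot create correlated $r$-errors across data blocks. Second, \cref{properties:prop2} lets me, for bookkeeping purposes, replace each corrupted $r$-block by a code-state of $C_r$ carrying at most one residual $r\adj$ before any cross-block interaction occurs. Once that replacement is justified, every subsequent cross-block interaction is between proper code-space states under good $r$-$\rec$s, and the non-proliferation of $(r+1)\error$s reduces to the goodness/correctness calculus already developed in \cref{section:threshold}.
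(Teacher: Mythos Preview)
Your proposal is correct and follows essentially the same approach as the paper: apply [$r$-$\ec$, Arbitrary Inputs] to the first layer of $r$-$\ec$s inside the leading $(r+1)$-$\ho$ to reduce the corrupted $r$-blocks to single-qubit Hamming errors, use goodness/correctness of all $r$-$\rec$s to propagate $\leq 1$ $(r+1)\adj$ through, then invoke \cref{lemma:ecfaultinputs} on the first internal $(r+1)$-$\ec$ so that the remaining two $(r+1)$-$\ho$ return correct outcomes for the majority vote. Your final paragraph on hooklessness is extra commentary the paper does not need here, but it is not incorrect.
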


\begin{proof}
    Recall that an $(r+1)$-$\meas$ consists of alternating 3 rounds of Hookless measurements $(r+1)$-$\ho$, with error correction $(r+1)$-$\ec$ rounds. The first layer of operations within $(r+1)$-$\ho$ is a layer of $r$-$\ec$s. Thus, similarly to the proof of \cref{lemma:ecfaultinputs}, by [$r$-$\ec$, Arbitrary Inputs] the adjacent faulty $r$-blocks within the input $(r+1)$-block are converted to arbitrary code-states of $C_r$, each up to a $r\adj$; i.e. there are logical single-qubit errors on the underlying Hamming codes within $C_{r+1}$, but each $C_r$ block is in the code-space up to a $r\adj$ (\cref{fig:layerofrecs}).

    Again, since all subsequent $r$-$\rec$s are Good, \cref{lemma:goodcorrect} ensures that at the output of the $(r+1)$-$\meas$ there remains most one $r\adj$; also that the cat states within the $(r+1)$-$\ho$s are correctly prepared. Next, we show that the measurement information can be reliably recovered. 

    For this purpose, note that the first of three $(r+1)$-$\ho$ is performed on Hamming code code-states with at most one single-qubit error, and therefore the outcome is possibly flipped. Fortunately, \cref{lemma:ecfaultinputs}, property [$(r+1)$-$\ec$, Faulty Inputs] tells us that the subsequent $(r+1)$-$\ec$ corrects the underlying single-qubit errors on the Hamming codes, resulting in an $(r+1)$-block with no $(r+1)$-errors. The next two $(r+1)$-$\ho$ are therefore fault-free, which ensures a correct measurement outcome after taking majority. See \cref{fig:rmeas_proof} for a diagram of the flow of $(r+1)$-errors after each step. This concludes the proof of [$(r+1)$-$\meas$, Faulty Inputs].
\end{proof}

\begin{figure}
    \centering
    \includegraphics[width=1\linewidth]{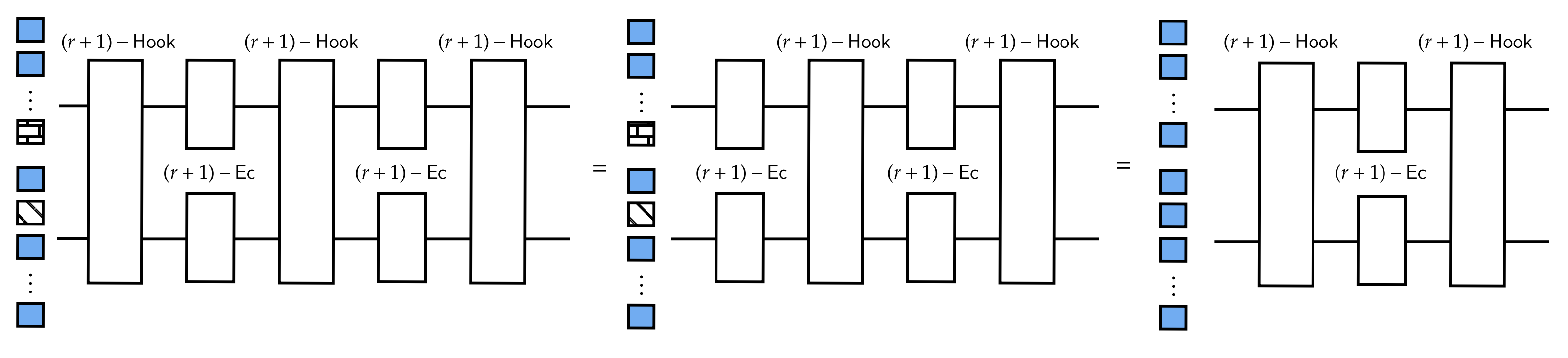}
    \caption{The $(r+1)$-error propagation in the proof of [$(r+1)$-$\meas$, Faulty Inputs]}
    \label{fig:rmeas_proof}
\end{figure}

\subsection{Interlude: $r$-$\ho$, under Faulty Execution}
\label{section:rho_execution}

To study property ($(r+1)$-$\ec$, Faulty Execution) and ($(r+1)$-$\meas$, Faulty Execution), we need to understand the structure of the circuit in the presence of a bad $r$-$\rec$ (or non-independent pair), and, in particular, the effect of such a fault on a Hookless measurement. The crux of the argument will be to understand this bad rectangle as simulating a faulty-measurement on single-qubits of the underlying hamming codes.  

Before doing so, however, we require a short fact on the structure of the cat preparation circuit via repeated ZZ measurements. 

\begin{fact}
    [Cat State preparation via the Repetition Code]
    \label{fact:ghz_error_propagation} Consider the dissipative implementation of Shor's measurement gadget in \cref{fig:hook_dissipative}, in the presence of an error channel applied to 2 adjacent cat-qubits. Then, the output state has errors on at most 2 adjacent data-qubits, but, there is no guarantee on whether the measurement is performed correctly. 
\end{fact}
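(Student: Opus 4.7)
The plan is to exploit the structural locality of the dissipative circuit of \cref{fig:hook_dissipative}: a cat qubit $c_k$ is touched only by (i) $ZZ$ parity measurements with its nearest-neighbor cat qubits $c_{k\pm 1}$ during the three hardening rounds, (ii) the joint $P_{q}\otimes X_{c_k}$ parity measurement that couples it to the data qubits of its own code-block $k$, and (iii) a final single-qubit readout. No operation in the entire gadget couples $c_k$ to a data qubit living outside code-block $k$, and this locality is what will confine the propagation.

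I would first reduce the adversarial channel on the two adjacent cat qubits to a two-qubit Pauli fault $P_{c_k}\otimes Q_{c_{k+1}}$, using that the surrounding circuit is Clifford and the outputs are decoded in the Pauli frame. I would then commute this Pauli fault forward through the remaining operations. Passing through any subsequent $ZZ$ between neighboring cat qubits leaves the support unchanged and at worst flips the recorded $ZZ$ syndrome. Passing through the joint parity measurement $P_{q}\otimes X_{c_k}$ of code-block $k$: the $X$-component of $P_{c_k}$ commutes with $X_{c_k}$ and is absorbed by the final cat readout, while the $Z$-component anticommutes and is equivalent to recording the flipped outcome of that parity measurement; in either case, any induced Pauli correction inherited from the projector lives entirely inside code-block $k$, since that projector acts non-trivially only there. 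The symmetric analysis on $c_{k+1}$ confines its effect to code-block $k+1$.

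Assembling these local traces, the residual Pauli correction on the data register is supported on code-blocks $k$ and $k+1$, which are adjacent by hypothesis, yielding errors on at most two adjacent data qubits as claimed. At the same time, flipped cat-readout or parity-measurement outcomes can propagate into the classical post-processing and thereby corrupt the final reported bit, which is exactly what breaks any guarantee on correctness of the measurement. The main care I would take is in the commutation bookkeeping through the joint $P\otimes X$ measurement --- specifically, verifying that no indirect chain of commutations can route a fault on $c_k$ into a data-qubit correction in code-block $k-1$ or $k+2$; the structural observation that every gate touching $c_k$ is either intra-block-$k$ or nearest-neighbor on the cat line is precisely what rules this out.
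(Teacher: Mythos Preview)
Your approach and the paper's share the same backbone: reduce the arbitrary two-qubit channel to its Pauli expansion and then use that Pauli errors do not spread in support when pushed through Pauli-product measurements. Where you diverge is in how you handle the classical \emph{feedback} step of the hardening rounds, and that is where your argument has a gap.

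Your locality claim --- ``every gate touching $c_k$ is either intra-block-$k$ or nearest-neighbor on the cat line, so nothing can route a fault on $c_k$ into code-block $k-1$ or $k+2$'' --- is sound for the \emph{quantum} operations. But the three rounds of $ZZ$ measurements are followed by a classically computed Pauli correction on the cat line, and that correction is decided by a (non-local) repetition-code decoder acting on possibly flipped syndromes. Nothing in your commutation bookkeeping prevents the decoder from placing an $X$ correction on, say, $c_{k-1}$ or $c_{k+2}$; your structural observation about which qubits each gate touches simply does not constrain where the classical feedback lands. The paper closes this gap directly: it argues that two adjacent $X$ faults flip only the two outer $ZZ$ checks (the shared one is untouched since $XX$ commutes with $ZZ$), and that three rounds suffice for the repetition-code decoder to pin down the error to $\{c_k,c_{k+1}\}$ up to trivial degeneracy, so the feedback never lands outside those two positions.

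There is also a cheaper patch available, which your writeup gestures at but never states: the repetition-code feedback is always an $X$ on some cat qubit $c_j$, and $X_{c_j}$ commutes with the only data-touching operation at position $j$, namely $P_{q_j}\otimes X_{c_j}$. So even a misplaced feedback cannot leak onto $d_j$; it is absorbed by the final $Z$ readout of $c_j$ and only further corrupts the reported bit. Either route closes the argument, but as written your proposal leaves the feedback step unaddressed, and that is precisely the step the paper singles out as ``it remains to show that the inferred feedback operation does not increase the error weight.''
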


It is instructive to consider the case of Pauli errors first: Pauli errors on two-qubit Pauli measurements have the effect of possibly flipping the measurement
outcome, but do not propagate to other adjacent qubits. The repeated measurements ensure that the output of the cat state preparation phase is simply a genuine cat-state with 2 adjacent Pauli errors. 

\begin{proof}
    Following the intuition above, we consider the error channel through its Krauss decomposition as a linear combination (superposition) of Pauli errors. After each measurement in the circuit, the relative phases in the Krauss decomposition change but their support does not propagate. In this manner, before any feedback, the state at the end of the cat state preparation circuit has at most two Pauli errors. It remains to show that the inferred feedback operation does not increase the error weight. 

    Here, we simply use basic guarantees of the repetition code circuit. The effect of any Pauli $X$ error, even in superposition, is to flip parity measurement outcomes. Observe that any single Pauli $X$ has the effect of flipping the parity of the checks incident on it; any two adjacent Pauli $X$'s have the effect of flipping the parity of the checks above and below said qubits (but not the shared check, as $XX$ commutes with $ZZ$). Since the circuit is implementing the repetition code, by inspection, can readily identify the location of the $\leq 2$ $X$ errors up to trivial degeneracies. The feedback operation thereby doesn't increase the number of corrupted data-qubits. 
\end{proof}

In the above we remark that the error channels may collapse the cat state into computational basis strings. However, the repetition is ensuring we are able to decode the resulting state back into the span of $\ket{0}^n, \ket{1}^n$ \textit{or}, if the noise occurs in the last layer, simply view the resulting state as a cat state with noise on two qubits. In the former case, the measurement on the data isn't even performed at all.

   \begin{lemma}
    [$(r+1)$-$\ho$, Faulty Execution]\label{lemma:rhook} Assume \cref{properties:prop1}, \cref{properties:prop2} hold at every level $k\leq r$. Suppose the input $(r+1)$-blocks to an $(r+1)$-$\ho$ contain no $(r+1)$-errors, and $(r+1)$-$\ho$ contains at most a non-independent pair of bad $r$-$\rec$s. Then, the output state contains at most one $(r+1)\adj$, and may return an arbitrary measurement outcome. 
\end{lemma}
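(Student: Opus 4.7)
The plan is to dissect the $(r+1)$-$\ho$ circuit into two phases — the cat-state preparation phase (three rounds of $r$-$\meas$ of the logical $ZZ$ between the reserve qubits of nearest-neighbor $r$-blocks) and the in-block parity phase ($r$-$\meas$ of $P\otimes X$ between each data qubit and the cat qubit inside the same $r$-block) — and then perform a case analysis on where the non-independent bad pair of $r$-$\rec$s sits. Throughout, I would use that every other $r$-$\rec$ in the $(r+1)$-$\ho$ is good, so by \cref{lemma:goodcorrect} at level $r$ they preserve the code-space of $C_r$ in each involved $r$-block (allowing at most one $r\adj$ each) and execute their measurement correctly.

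\emph{Case A: the bad pair lies in the cat-state preparation phase.} Here I would invoke the recursive analogue of \cref{fact:ghz_error_propagation}: at the level of the encoded cat qubits, good $r$-$\rec$s behave exactly like perfect logical $ZZ$ measurements, so the three-round repetition-code structure still confines the damage to two adjacent cat qubits and yields an unreliable but single $ZZ$-outcome failure. These two adjacent cat qubits sit inside two adjacent $r$-blocks, and by the interleaving $2\otimes$ in \cref{equation:code_definition} these two $r$-blocks belong to distinct copies of the outer Hamming code of $C_{r+1}$, so the resulting corruption is exactly one $(r+1)\adj$. All subsequent in-block $r$-$\meas$ are good $r$-$\rec$s acting on $r$-blocks that are either clean or carry only this already-counted corruption, so [$r$-$\meas$, Faulty Inputs] and [$r$-$\ec$, Faulty Inputs] at level $r$ prevent the emergence of any new $(r+1)$-error.

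\emph{Case B: the bad pair lies in the in-block parity measurement phase.} Since in-block parities are intrinsic to a single $r$-block, a non-independent pair sharing $r$-$\ec$ gadgets is geographically confined to at most one $r$-block plus at most one neighbor via a shared flanking $r$-$\ec$. Applying [$r$-$\ec$, Arbitrary Inputs] of \cref{properties:prop2} to the $r$-$\ec$s that bracket the bad pair collapses the damaged region into at most two adjacent corrupted $r$-blocks, i.e., a single $(r+1)\adj$, and allows the $PX$ outcome to be arbitrary. All remaining $r$-$\rec$s are good, so \cref{lemma:goodcorrect} and property ($r$-$\meas$, Faulty Inputs) together prevent the corruption from spreading. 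A straddling (phase-mixed) pair is handled identically, because non-independence forces them to share an $r$-$\ec$ gadget and the same localization argument applies.

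The main obstacle I expect is Case A, namely promoting \cref{fact:ghz_error_propagation} from a statement about the base-level $ZZ$ repetition-code circuit on physical qubits to a statement about the recursive, encoded version where the two-qubit measurements are implemented by $r$-$\meas$. The inductive key is that a single bad $r$-$\rec$ inside the encoded repetition circuit acts, at the logical level, precisely like a single faulty $ZZ$ parity measurement in the base fact; the repetition-code decoding of syndromes is then identical, the Pauli feedback is still supported on $\le 2$ adjacent cat qubits, and the only new ingredient — that those two cat qubits live in two adjacent $r$-blocks — is precisely what the interleaving $2\otimes$ in the code definition was designed to absorb into a single $(r+1)\adj$.
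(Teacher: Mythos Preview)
Your phase-based decomposition (Case A cat-state preparation, Case B in-block parities) is reasonable and uses the same ingredients as the paper, but there is a real gap in Case A that the paper's proof closes with an argument you omit.

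In Case A you write that after the bad $r$-$\rec$ the subsequent good $r$-$\rec$s ``behave exactly like perfect logical $ZZ$ measurements'' and then invoke [$r$-$\meas$, Faulty Inputs] and [$r$-$\ec$, Faulty Inputs] to argue the $(r+1)\adj$ does not spread. But those Faulty-Inputs properties require each input $r$-block to carry at most one $r\adj$; a bad $r$-$\rec$ can leave its two output $r$-blocks \emph{arbitrarily} corrupted (that is precisely what an $(r+1)$-error is), and on such an input a subsequent good $ZZ$ $r$-$\meas$ might well corrupt the clean neighbor, so you cannot simply ``promote'' \cref{fact:ghz_error_propagation} at this point. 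The paper fixes this by looking at the two \emph{trailing} $r$-$\ec$s of the bad $r$-$\rec$ itself: each of them can contain at most one bad $(r-1)$-$\rec$, because otherwise the $r$-$\rec$ that follows would be bad and share that $r$-$\ec$, producing a second non-independent bad $r$-$\rec$ --- contradicting the hypothesis of the lemma. Hence [$r$-$\ec$, Arbitrary Inputs] applies to those trailing $r$-$\ec$s, bringing the two corrupted $r$-blocks back to the code-space with $\leq 1$ $r\adj$ each (though with arbitrary encoded logical state). Only \emph{after} this step are the subsequent good $r$-$\rec$s handling inputs on which Faulty-Inputs properties are valid, and only then does the encoded \cref{fact:ghz_error_propagation} legitimately confine the error. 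You invoke Arbitrary Inputs in Case B but not in Case A, where it is equally necessary.

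A second, smaller gap: for the non-independent pair, the two bad $r$-$\rec$s together can touch up to three $r$-blocks (e.g.\ two $ZZ$ $r$-$\meas$ on $(B',B)$ and $(B,B'')$ sharing the $r$-$\ec$ on $B$), and you need to argue only two of them become $(r+1)$-errors. The paper does this by observing that the \emph{earlier} bad $r$-$\rec$ must be good once the shared $r$-$\ec$ is removed, so outside that shared $r$-$\ec$ it contains at most one bad $(r-1)$-$\rec$; this bounds the damage on the non-shared block $B'$ to $\leq 1$ $r$-error, hence no $(r+1)$-error there. Your ``geographically confined'' phrasing gestures at this but does not supply the actual bound. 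The paper's proof is more uniform than yours: it treats the single-bad case first (regardless of which phase it occurs in), extracts the trailing-$r$-$\ec$ argument once, and then reduces the non-independent pair to the single-bad case via the observation above.
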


\begin{proof}
    We refer the reader back to \cref{fig:hook_recursive} for the recursive definition of the Hookless measurement. Following \cite{aliferis2005quantum}, let us first consider the case in which there is only a single bad $r$-$\rec$. We prove that the $\leq$ two adjacent output $r$-blocks of said $r$-$\rec$ may be arbitrarily corrupted, (and thus correspond to $(r+1)$-errors), however, they propagate to the output of the $(r+1)$-$\ho$ without further increasing the number of $(r+1)$-errors. This gives the desired claim. 
    
    To do so, we first claim that at the output of said single bad $r$-$\rec$, the corresponding $r$-blocks have been mapped to an arbitrary encoded state, with at most $1$ $r\adj$; corresponding to single-qubit corruptions on the underlying Hamming code and 2 errors on adjacent cat qubits (this is akin to \cref{fig:layerofrecs}). To see this, recall that the bad $r$-$\rec$ must contain at least $2$ independent bad $(r-1)$-$\rec$s; let us consider their locations. By assumption, each of the two last $r$-$\ec$s in the bad $r$-$\rec$ must contain at most one bad $(r-1)$-$\rec$. Otherwise, the subsequent $r$-$\rec$ would also be bad (and non-independent). This enables us to apply the property [$r$-$\ec$, Arbitrary Inputs] of \cref{properties:prop2} on the output $r$-$\rec$s, and obtain the claimed guarantee on the output $r$-blocks.

    After that bad $r$-$\rec$, we have effectively imparted an arbitrary error map on the 2 cat state qubits encoded into the corresponding $r$-blocks. During the encoded cat state preparation circuit in $(r+1)$-$\ho$, we are effectively simulating $ZZ$ measurements on those cat state qubits, which via \cref{fact:ghz_error_propagation} do not propagate the errors. The resulting output state thereby remains with a single pair of adjacent $(r+1)$-errors. 

    Let us now revisit the case of a non-independent pair of bad $r$-$\rec$s. We claim that of the $\leq$ three output $r$-blocks, only two of them contain $(r+1)$-errors. Indeed, by definition, the former bad $r$-$\rec$ must be Good if the shared $r$-$\ec$ is removed. Therefore, the former must contain at most one other bad $(r-1)$ outside of the shared $r$-$\ec$, this ensures that the non-shared $r$-block contains at most one $r$-error; and no $(r+1)$-error. In turn the $r$-blocks of the latter bad $r$-$\rec$ can be treated analogously to the case of a single bad $r$-$\rec$.
\end{proof}

\subsection{$r$-$\ec$ and $r$-$\meas$, under Faulty Execution}
\label{section:faultyexecution-proofs}

Equipped with the behavior of the Hookless measurement, we are now in a position to prove property ($(r+1)$-$\ec$, Faulty Execution) and ($(r+1)$-$\meas$, Faulty Execution).

\begin{lemma}
    [$(r+1)$-$\ec$, Faulty Execution]\label{lemma:rec-execution} Assume \cref{properties:prop1}, \cref{properties:prop2} hold at every level $k\leq r$. Then, \emph{$(r+1)$-$\ec$, Faulty Execution} holds at level $r+1$. 
\end{lemma}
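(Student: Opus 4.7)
My plan is to mirror the strategy used for \cref{lemma:meas-faultyinputs}, but with the failure event now lying inside the execution of $(r+1)$-$\ec$ rather than on its input. By the gadget's definition (\cref{fig:rec}), $(r+1)$-$\ec$ is just three consecutive Hookless measurement rounds of the outer $H_{r+5}$ stabilizers, with no higher-level error-correction structure separating them; the $r$-$\rec$s inside $(r+1)$-$\ec$ all live inside one of these three $(r+1)$-$\ho$ rounds. The hypothesis that $(r+1)$-$\ec$ contains at most a non-independent pair of bad $r$-$\rec$s thus confines the single ``failure event'' to essentially one round. I would case-split on whether this event is present, and if so, where.

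If no bad $r$-$\rec$ exists, every $r$-$\rec$ is Good, so by \cref{lemma:goodcorrect} every $r$-$\rec$ is Correct, and the clean input propagates to a clean output. Otherwise, let $i\in\{1,2,3\}$ be the round of $(r+1)$-$\ho$ containing the failure event. The state entering round $i$ is $(r+1)$-error-free (by applying \cref{lemma:goodcorrect} to rounds $1,\dots,i-1$), so \cref{lemma:rhook} ($(r+1)$-$\ho$, Faulty Execution) applies: round $i$ outputs a state with at most one $(r+1)\adj$, at the cost of an arbitrary syndrome outcome.

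It remains to show that the fault-free rounds $i+1,\dots,3$ do not inflate the error count past one $(r+1)\adj$. Here I would copy the opening move of \cref{lemma:meas-faultyinputs}: the leading layer of $r$-$\ec$s inside the next $(r+1)$-$\ho$ sees the two corrupted $r$-blocks that encode the $(r+1)\adj$, and by [$r$-$\ec$, Arbitrary Inputs] of \cref{properties:prop2} converts them into code-states of $C_r$ each with at most one $r\adj$; these correspond to single-qubit Pauli errors on the two underlying $H_{r+5}$ qubits touched by the $(r+1)\adj$. Beyond this layer, every $r$-$\rec$ is Good, so \cref{lemma:goodcorrect} makes the cat-state preparations and parity measurements of the round run correctly and introduce no fresh $(r+1)$-errors. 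The $(r+1)\adj$ simply persists through rounds $i+1,\dots,3$, giving the claimed output bound.

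The main obstacle I expect is the edge case where the non-independent pair of bad $r$-$\rec$s straddles two consecutive $(r+1)$-$\ho$ rounds through a shared boundary $r$-$\ec$, since then neither round has a clean input and \cref{lemma:rhook} is not directly applicable to the second one. I would handle it by replaying the dissection already used in the proof of \cref{lemma:rhook}: non-independence forces the pair to share exactly one $r$-$\ec$, and the non-shared sides of each bad $r$-$\rec$ must themselves be Good, so [$r$-$\ec$, Arbitrary Inputs] can be applied to the $r$-blocks emerging on either side of the shared $r$-$\ec$. The combined effect is still a single localized corruption producing at most one $(r+1)\adj$. Finally, although the lemma itself is only about the data's error count, I would remark that at least two of the three rounds record the correct Hamming-code syndrome, so a classical majority vote suffices to feed later rectangles an accurate syndrome history.
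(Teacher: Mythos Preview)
Your structural analysis --- apply \cref{lemma:rhook} to the single faulty $(r+1)$-$\ho$, then use [$r$-$\ec$, Arbitrary Inputs] together with \cref{lemma:goodcorrect} to argue that the resulting $(r+1)\adj$ persists unchanged through the remaining Good $(r+1)$-$\ho$s --- matches the paper's first half. The gap is in your last step.

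You treat the syndrome decoding as a side remark that merely ``feeds later rectangles an accurate syndrome history,'' separate from the data's error count. It is not separate: the Pauli correction inferred from the three rounds is part of the output of $(r+1)$-$\ec$ (this is exactly what makes [$(r+1)$-$\ec$, Faulty Inputs] produce an output with \emph{zero} $(r+1)\error$s), so applying a wrong correction introduces a second $(r+1)\error$ on the data. The paper devotes the second half of its proof to this point, and crucially does \emph{not} use a majority vote. Your claim that ``at least two of the three rounds record the correct Hamming-code syndrome'' fails when the bad $r$-$\rec$ sits in the second round: round~1 correctly reports the trivial syndrome of the clean input, while round~3 correctly reports the nontrivial syndrome of the $(r+1)\adj$ that the fault just created --- two different ``correct'' answers that disagree. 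If round~2 then reports some third garbage syndrome, there is no majority; breaking ties toward round~2 would apply a Pauli to the wrong $r$-block and leave two non-adjacent $(r+1)\error$s at the output, violating the conclusion. The paper's decoding rule is instead: if rounds~1 and~2 agree, apply that (which under the Faulty-Execution hypothesis is always ``no correction,'' so the single $(r+1)\adj$ survives harmlessly); if they disagree, the fault is known to be in one of the first two rounds, so round~3 is trusted. A short case analysis on which round holds the fault then shows the corrected output always has at most one $(r+1)\adj$.
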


\begin{proof}

    WLOG, the bad $r$-$\rec$ (or non-independent pair thereof) lies within the Hookless measurement $(r+1)$-$\ho$ of a stabilizer of the underlying Hamming codes. Following  [$(r+1)$-$\ho$, Faulty Execution] of \cref{lemma:rhook}, we are guaranteed at the output of that $(r+1)$-$\ho$, the output state contains at most one $(r+1)\adj$, and may return an arbitrary measurement outcome. 

    Fortunately, all subsequent $r$-$\rec$s are Good. [$r$-$\ec$, Arbitrary Inputs] then implies the subsequent layer of $r$-$\ec$s collapses the adjacent $(r+1)$-errors into single-qubit errors on the underlying Hamming code. Moreover, (1) via the correctness lemma \cref{lemma:goodcorrect} the output state (before any Pauli correction) contains these same these single-qubit errors, and thus also contains at most one $(r+1)\adj$; (2) all subsequent cat state preparation steps are successful after that faulty $(r+1)$-$\ho$, and thus all subsequent stabilizer measurements correctly measure the syndrome of the single-qubit errors.

    It only remains to show that the Pauli correction which is decoded, doesn't increase the number of errors. In contrast to the proof of \cref{lemma:ecfaultinputs}, we are not guaranteed all the stabilizer measurements will agree.\footnote{Since the fault doesn't necessarily lie in the beginning of the $(r+1)$-$\ho$ Hookless measurements.}  Nevertheless, for this purpose, we can divide into cases on the pattern of faults.

    \begin{itemize}
        \item If all the $(r+1)$-$\ho$ measurements corresponding to the same Hamming code stabilizer measurements agree, then either no correction is needed or the correct Pauli correction is inferred, and the resulting state has no $(r+1)$-errors.

        \item If all stabilizer measurements of the first two rounds of stabilizer measurements agree, then no correction is applied (by assumption, $(r+1)$-$\ec$ is fault-less on input), and the output state still has at most 1 $(r+1)\adj$.

        \item If not all stabilizer measurements of the first two rounds of stabilizer measurements agree, then the fault must have occurred during either of these rounds. This ensures the last round is correct, and those outcomes are used to apply the correction. The resulting state has no $(r+1)$-errors.
    \end{itemize}

    In the first and third point above we use that the Hamming code is distance $3$, thus the correct single qubit operations on the Hamming codes are applied.

\end{proof}

\begin{lemma}
    [$(r+1)$-$\meas$, Faulty Execution]\label{lemma:rmeas-execution} Assume \cref{properties:prop1}, \cref{properties:prop2} hold at every level $k\leq r$. Then, \emph{$(r+1)$-$\meas$, Faulty Execution} holds at level $r+1$. 
\end{lemma}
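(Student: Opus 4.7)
The plan is to imitate the proofs of \cref{lemma:meas-faultyinputs} and \cref{lemma:rec-execution}. An $(r+1)$-$\meas$ gadget consists of three $(r+1)$-$\ho$ measurements alternating with $(r+1)$-$\ec$ rounds, and two things must be shown: (i) the output $(r+1)$-blocks carry at most one $(r+1)\adj$ each, and (ii) the majority vote over the three hookless outcomes agrees with the true logical measurement value. I would proceed by a case analysis on the location of the (at most one non-independent pair of) bad $r$-$\rec$s within the $(r+1)$-$\meas$ gadget, distinguishing whether the fault sits inside one of the three $(r+1)$-$\ho$s or inside one of the intervening $(r+1)$-$\ec$ rounds.

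If the bad region lies inside a particular $(r+1)$-$\ho$, I would invoke \cref{lemma:rhook} to conclude that its output state has at most one $(r+1)\adj$ while its reported outcome is arbitrary. The immediately following $(r+1)$-$\ec$ is fault-free, so by [$(r+1)$-$\ec$, Faulty Inputs] (\cref{lemma:ecfaultinputs}) it maps the state back to the codespace with no $(r+1)$-errors. All subsequent $(r+1)$-$\ho$s and $(r+1)$-$\ec$s then run on $(r+1)$-error-free inputs with fault-free execution and therefore return the correct outcome; at most one of the three outcomes can disagree, so the majority is correct, and the final $(r+1)$-$\ec$ leaves the output fault-free.

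If instead the bad region sits inside an $(r+1)$-$\ec$, I would apply \cref{lemma:rec-execution} to conclude that its output contains at most one $(r+1)\adj$. If this $(r+1)$-$\ec$ is not the last one in the gadget, the immediately following $(r+1)$-$\ho$ acts on a faulty input and may return an incorrect outcome, but the next fault-free $(r+1)$-$\ec$ then eliminates the $(r+1)$-error via [$(r+1)$-$\ec$, Faulty Inputs]. All other $(r+1)$-$\ho$s see $(r+1)$-error-free, fault-free inputs and produce the correct result, so again at most one of the three outcomes disagrees and the majority is right. If the faulty $(r+1)$-$\ec$ is the final one, its output is allowed to carry the single residual $(r+1)\adj$ permitted by the conclusion, while all three $(r+1)$-$\ho$s already ran on $(r+1)$-error-free, fault-free inputs.

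The main obstacle will be the bookkeeping: for each sub-case one must simultaneously verify that the output's $(r+1)$-errors remain confined to a single $(r+1)\adj$ \emph{and} that at most one of the three hookless outcomes can be incorrect. The non-independent-pair case deserves particular care, since a shared $r$-$\ec$ between two bad $r$-$\rec$s could straddle an $\ho$/$\ec$ boundary; following the same reasoning used in the proof of \cref{lemma:rhook}, however, the effect on the $(r+1)$-$\meas$ circuit is no worse than a single bad $r$-$\rec$ localized at the shared $r$-$\ec$, so the argument above applies essentially verbatim.
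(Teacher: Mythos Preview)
Your proposal is correct and follows essentially the same approach as the paper: a case split on whether the bad $r$-$\rec$(s) lie in an $(r+1)$-$\ho$ or an $(r+1)$-$\ec$, invoking \cref{lemma:rhook} in the former case and \cref{lemma:rec-execution} in the latter, then using [$(r+1)$-$\ec$, Faulty Inputs] to clean up so that at most one of the three hookless outcomes is corrupted. Your write-up is in fact somewhat more thorough than the paper's (you explicitly handle the final-$\ec$ edge case and the boundary-straddling non-independent pair), but the logical skeleton is identical.
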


\begin{proof}
    Recall that $(r+1)$-$\meas$ consists of 3 alternating rounds of Hookless measurements $(r+1)$-$\ho$ and error-correction rounds $(r+1)$-$\ec$. If the bad $r$-$\rec$ (or non-independent pair thereof) is contained in one of the $(r+1)$-$\ec$ rounds, then from [$(r+1)$-$\ec$, Faulty Execution] of \cref{lemma:rec-execution}, after that $(r+1)$-$\ec$, the output state contains at most a single $(r+1)\adj$. From the proof of \cref{lemma:rhook}, the subsequent $(r+1)$-$\ho$ may output an incorrect result, but it only propagates the $(r+1)\adj$. The proceeding $(r+1)$-$\ec$ has no bad $r$-$\rec$s, and thereby  [$(r+1)$-$\ec$, Faulty Inputs] ensures the other two $(r+1)$-$\ho$ will be correct.

    In turn, if the bad $r$-$\rec$s are contained in a $(r+1)$-$\ho$, \cref{lemma:rhook} ensures the the output state of that $(r+1)$-$\ho$ contains at most at most a single $(r+1)\adj$, and reports an arbitrary measurement outcome. If proceeded by a $(r+1)$-$\ec$ with no bad $r$-$\rec$s, then [$(r+1)$-$\ec$, Faulty Inputs] ensures the other two $(r+1)$-$\ho$ are correct. This concludes the proof of [$(r+1)$-$\meas$, Faulty Execution].
\end{proof}

\subsection{$r$-$\ec$ under Arbitrary Inputs}
\label{section:rec-inputs} 

We are now in a position to prove \cref{properties:prop2}. 

\begin{lemma}
    [$(r+1)$-$\ec$, under Arbitrary Inputs]\label{lemma:rec-arbitrary} Assume \cref{properties:prop1}, \cref{properties:prop2} hold at every level $k\leq r$. Then, \emph{$(r+1)$-$\ec$, under Arbitrary Inputs} holds at level $r+1$. 
\end{lemma}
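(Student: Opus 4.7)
The plan is to mirror the three-part structure used for \cref{lemma:ecfaultinputs} ($(r+1)$-$\ec$, Faulty Inputs), but with the first step strengthened by an application of [$r$-$\ec$, Arbitrary Inputs] (\cref{properties:prop2} at level $r$, available by the inductive hypothesis), and with the remainder reduced either to direct fault-free execution or to \cref{lemma:rec-execution}. Concretely, the first operation inside $(r+1)$-$\ec$ is a layer of $r$-$\ec$s, applied in parallel, one per $r$-sub-block. Since the entire $(r+1)$-$\ec$ contains at most one non-independent pair of bad $r$-$\rec$s, each individual $r$-$\ec$ in this first layer contains at most one non-independent pair of bad $(r-1)$-$\rec$s. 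Applying [$r$-$\ec$, Arbitrary Inputs] at level $r$ to each such $r$-$\ec$, I conclude that after the first layer every $r$-sub-block lies in the $C_r$ code space with at most one residual $r\adj$, and its encoded logical qubit may carry an arbitrary state. In particular no $r$-sub-block carries two non-adjacent $r$-errors, so the post-first-layer state contains no $(r+1)$-errors; the per-block residual $r\adj$s show up at the $(r+1)$-level only as single-qubit Pauli noise on the outer $H_{r+5}$ qubits, which can be absorbed into the already-arbitrary state of those qubits.

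Next, I would split on the location of the bad $r$-$\rec$(s). In Case A, the bad pair lies inside the first layer of $r$-$\ec$s; then the remaining two Hookless rounds of $(r+1)$-$\ec$ are entirely fault-free. By \cref{lemma:goodcorrect} every subsequent $r$-$\rec$ is correct, so all cat states are prepared correctly, all outer $H_{r+5}$ stabilizer measurements return their correct outcomes, and the decoded Pauli correction projects the arbitrary $H_{r+5}$ state into the $C_{r+1}$ code space, yielding a valid $(r+1)$-block that carries an arbitrary encoded logical state and contains no $(r+1)$-errors. In Case B, the bad pair lies past the first layer; then the first layer is fault-free relative to $(r+1)$-$\ec$, and by Step 1 its output has no $(r+1)$-errors. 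The remaining portion of $(r+1)$-$\ec$ together with its single non-independent bad $r$-$\rec$ pair now fits exactly the hypotheses of \cref{lemma:rec-execution} ([$(r+1)$-$\ec$, Faulty Execution]), which I invoke directly to conclude that the output has at most one $(r+1)\adj$.

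The main obstacle will be Case A: one must track how the residual single-qubit Hamming code errors left by Step 1 (coming from the per-block $r\adj$s of [$r$-$\ec$, Arbitrary Inputs]) interact with the subsequent, fault-free Hookless stabilizer measurements. Because the $H_{r+5}$ state is already arbitrary, any such residual single-qubit error is a mere shift of that arbitrary state and is harmlessly absorbed into the (necessarily ``random'' but self-consistent across the three repeated rounds) syndrome outcomes and the resulting Pauli correction, so that the projection lands inside the $C_{r+1}$ code space. This is the $(r+1)$-level analog of the ``output may contain an arbitrary encoded logical state'' clause of [$r$-$\ec$, Arbitrary Inputs], and is the only place where the argument requires something beyond the direct template of \cref{lemma:ecfaultinputs}.
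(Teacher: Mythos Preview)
Your overall strategy—first drop to the $r$-level code space via [$r$-$\ec$, Arbitrary Inputs], then handle the Hamming-level correction—is in the right spirit, but two steps do not go through as written.

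First, the deduction in Step~1 is not valid. The hypothesis of the lemma bounds bad $r$-$\rec$s, not bad $(r-1)$-$\rec$s; there is no level-crossing implication of the form you assert. Concretely, a single bad $r$-$\rec$ may owe its badness to several \emph{independent} bad $(r-1)$-$\rec$s all concentrated inside one first-layer $r$-$\ec$. The hypothesis ``at most one non-independent pair of bad $r$-$\rec$s'' is then satisfied, yet that particular $r$-$\ec$ fails the precondition of \cref{properties:prop2} at level $r$, and [$r$-$\ec$, Arbitrary Inputs] cannot be applied to it. So you cannot invoke \cref{properties:prop2} uniformly across the first layer before locating the bad $r$-$\rec$(s). (Relatedly, since an $r$-$\rec$ is an $r$-$\meas$ flanked by $r$-$\ec$s, your dichotomy ``bad pair inside the first layer of $r$-$\ec$s'' versus ``past it'' is not cleanly posed: bad $r$-$\rec$s straddle layers.)

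Second, the reduction in Case~B to \cref{lemma:rec-execution} fails. Even granting Step~1, after that first layer each $r$-sub-block encodes an \emph{arbitrary} outer-Hamming qubit, so the $(r+1)$-block is not a $C_{r+1}$ code state. The proof of \cref{lemma:rec-execution} uses precisely this at the line ``no correction is applied (by assumption, $(r+1)$-$\ec$ is fault-less on input)''; that step is simply wrong when the Hamming syndrome is nontrivial. The paper avoids both issues by casing first on which of the three repetitions of $(r+1)$-$\ho$ measurements carries the bad $r$-$\rec$(s), applying [$r$-$\ec$, Arbitrary Inputs] only inside a repetition whose $r$-$\rec$s are all Good, and then using two fault-free, mutually consistent rounds (or, in the middle case, the agree/disagree test between rounds one and two) to determine and apply the correct Hamming-level Pauli correction directly.
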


\begin{proof}
    Recall $(r+1)$-$\ec$ consists of 3 repetitions of hookless measurements of all the stabilizers $(r+1)$-$\ho$. The $(r+1)$-$\ec$ may contain a bad pair of non-independent $r$-$\rec$; here we divide into cases on its location and its effect on the outcomes of the $(r+1)$-$\ho$. 

   First, suppose the bad pair of non-independent $r$-$\rec$s lies in the first repetition. Then, we have two consecutive, consistent rounds of $(r+1)$-$\ho$ (of all stabilizers) which consist entirely of Good $r$-$\rec$s, acting on an arbitrary input. Following the proofs above, since each $(r+1)$-$\ho$ first consists of a layer of $r$-$\ec$s, after said layer, the output state consists of a generic state encoded into the copies of $C_r$, up two non-adjacent $r$-errors, by [$r$-$\ec$, Arbitrary Inputs]. Next, the subsequent $r$-$\meas$ are contained in Good $r$-$\rec$s, and therefore the cat states are prepared correctly, and the stabilizer measurements are implemented correctly. This projects the encoded state in the $r$-blocks into an arbitrary (but known) syndrome subspace of the underlying Hamming codes, which can be brought back to the code-space via a Pauli correction. The resulting state has no $(r+1)$-errors.

   Next, suppose the bad pair of non-independent $r$-$\rec$s lies in the third repetition. Similar reasoning to the above ensures that the first two repetitions are consistent and project the state encoded into the $r$-blocks into a known syndrome subspace of $C_{r+1}$, which can be corrected (by computing majority of the syndrome measurements). From \cref{lemma:rhook}, the last repetition of stabilizer measurements $(r+1)$-$\ho$, in the presence of bad $r$-$\rec$s, incurs an additional pair of adjacent $(r+1)$-errors. Fortunately, these errors do not propagate, analogously to \cref{lemma:rec-execution}.

   Finally, suppose the bad pair of non-independent $r$-$\rec$s lies in the middle repetition. Here we must further divide into cases on the measurement outcomes of $(r+1)$-$\ho$. 

   \begin{itemize}
       \item If all the stabilizer measurements in the first two repetitions agree, then the syndrome of the state encoded within the $r$-blocks (before the fault) has been correctly determined. If we could apply this correction before the fault, that would result in a codestate of $C_{r+1}$ with no $(r+1)$-errors. Instead, we must apply it after $(r+1)$-$\ec$, where (again analogously to \cref{lemma:rec-execution}) the $(r+1)$-errors introduced by the fault remain on the same $r$-blocks and we are left with an $(r+1)$-$\ec$ with at most one $(r+1)\adj$.

       \item If all the stabilizer measurements in the first two repetitions do not agree, then we are guaranteed that the third repetition of stabilizer measurements is correct. Those outcomes are applied to infer the underlying syndrome, resulting in an output state with no $(r+1)$-errors.
   \end{itemize}

\end{proof}

\section{Fault-tolerance at Level 0}
\label{section:base_case}

We dedicate this section to showing that the bottom level of the construction, $C_0$, admits the desired error-propagation properties. We refer the reader back to \cref{fig:hook_unitary} for the unitary implementation of the Hookless measurement.

\subsection{Faulty Inputs}

The Hookless measurement circuit, $0$-$\ho$, is comprised of a network of $\cnot$ gates which implement the repeated $ZZ$ measurements. It will be helpful to recap the Pauli propagation properties of $\cnot$ gates.

\begin{fact}
    [$\cnot$ Pauli Propagation]\label{fact:cnot_prop} We use the following circuit identities:
    \begin{gather}
        (Z\otimes \mathbb{I}) \cnot_{1, 2}  = \cnot_{1, 2} (Z\otimes \mathbb{I})\quad ( \mathbb{I}\otimes X) \cnot_{1, 2}  = \cnot_{1, 2} (\mathbb{I} \otimes X) \\
        (\mathbb{I}\otimes Z) \cnot_{1, 2}  = \cnot_{1, 2} (Z\otimes Z)\quad ( X\otimes \mathbb{I}) \cnot_{1, 2}  = \cnot_{1, 2} (X \otimes X) 
    \end{gather}
\end{fact}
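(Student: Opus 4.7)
The plan is simply direct verification of each of the four conjugation identities on computational basis states, relying on linearity to extend to arbitrary states. Recall that $\cnot_{1,2}\ket{a,b} = \ket{a,\,a\oplus b}$ for $a,b\in\{0,1\}$, while the Paulis act as $Z\ket{x} = (-1)^x\ket{x}$ and $X\ket{x} = \ket{x\oplus 1}$. Each identity then reduces to a one-line check on a generic basis state $\ket{a,b}$.

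For the top row, the ``transparent'' cases: $(Z\otimes \mathbb{I})\cnot_{1,2}\ket{a,b} = (-1)^a\ket{a,a\oplus b}$, which matches $\cnot_{1,2}(Z\otimes\mathbb{I})\ket{a,b}$; and $(\mathbb{I}\otimes X)\cnot_{1,2}\ket{a,b} = \ket{a,a\oplus b\oplus 1} = \cnot_{1,2}\ket{a,b\oplus 1}$. For the bottom row, the genuine propagation rules: $(\mathbb{I}\otimes Z)\cnot_{1,2}\ket{a,b} = (-1)^{a\oplus b}\ket{a,a\oplus b} = (-1)^a(-1)^b\ket{a,a\oplus b}$, which matches $\cnot_{1,2}(Z\otimes Z)\ket{a,b}$ (``$Z$ hops from target to control''); and $(X\otimes\mathbb{I})\cnot_{1,2}\ket{a,b} = \ket{a\oplus 1,a\oplus b} = \cnot_{1,2}\ket{a\oplus 1,b\oplus 1} = \cnot_{1,2}(X\otimes X)\ket{a,b}$ (``$X$ hops from control to target'').

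There is no real obstacle: these are the defining conjugation relations of the Clifford action of $\cnot$ on the single-qubit Pauli group, and together with $Y = iXZ$ and the anticommutation $XZ = -ZX$ they determine propagation of every single-qubit Pauli through $\cnot$ up to a global sign. The actual content of the section, which these identities enable, is tracking how $X$ and $Z$ faults spread through the 1D $\cnot$ network that implements $0$-$\ho$ in \cref{fig:hook_unitary}, in particular certifying that the decomposition of a next-nearest-neighbor $\cnot$ into four adjacent $\cnot$'s still yields data-fault distance $3$.
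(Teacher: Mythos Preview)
Your verification is correct. The paper states this as a \textbf{Fact} without proof, treating these as standard Clifford conjugation identities; your direct check on computational basis states is the canonical way to verify them and is entirely adequate.
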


Let us begin to analyse the error-propagation properties of $C_0$ with the simplest case - the case of faulty inputs - where a $0$-block $C_0$ has just two adjacent Pauli errors, and they are input into a fault-tolerant error correction or measurement gadget.

\begin{lemma}
    [$0$-$\ec$, $0$-$\meas$, Faulty Inputs]\label{lemma:base-fi}Suppose the physical qubits of an input $0$-block contain at most $2$ adjacent errors. If input to a $0$-$\ec$ or a $0$-$\meas$  which contains no faulty gates, then, the output state contains no errors, and the measurement returned is correct.
\end{lemma}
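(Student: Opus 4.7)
The plan is to leverage two structural features of $C_0 = H_4 \otimes [[3,1,1]]$: the interleaving of the $H_4$ data qubits with two helper qubits each, so that $H_4$ data qubits sit at mutually non-adjacent positions along the 1D line, and the distance-3 correction property of $H_4$ itself. Any pair of Pauli errors on adjacent physical qubits of $C_0$ must therefore include at most one that falls on an $H_4$ data qubit; the remaining error(s) land on helper (measurement or entangling) qubits. Since both $0$-$\ec$ and $0$-$\meas$ begin by resetting their helper qubits in order to prepare the Shor-style hardened cat state, and since the gadget is assumed fault-free, any Pauli initially present on a helper qubit is overwritten at the first reset and cannot subsequently propagate onto the data.

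With the effective input corruption reduced to at most a single Pauli error on one $H_4$ data qubit, I would then trace this error through each fault-free unitary Hookless measurement of \cref{fig:hook_unitary} using the $\cnot$ propagation identities of \cref{fact:cnot_prop}. A single $X$ or $Z$ on a data qubit commutes past the fault-free $\cnot$ network, possibly depositing $X$ or $Z$ mass on cat-state qubits it touches and hence possibly flipping the reported parity; however, because the gadget and the cat-state preparation by repetition code are fault-free, no new Pauli mass can be deposited on any other $H_4$ data qubit. Thus every $0$-$\ho$ in the gadget emerges with the same single data-qubit error still in place, and only the classical outcome it reports may be incorrect.

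For $0$-$\ec$ this already suffices: three consecutive fault-free rounds of hookless syndrome extraction return mutually consistent syndrome readings that correctly localize the single data error, and the distance-3 Hamming decoder applies the correct Pauli correction, yielding an error-free output. For $0$-$\meas$, the first $0$-$\ho$ may report an incorrect outcome due to the surviving data error, but the immediately following fault-free $0$-$\ec$ returns the block to the codespace by the argument just given; the remaining two $0$-$\ho$ rounds then act on an error-free $0$-block and must return the correct outcome, so majority vote across the three rounds yields the correct measurement.

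The main obstacle will be verifying Pauli propagation in the unitary base-case implementation, where next-nearest-neighbor $\cnot$s in the cat-state routing are decomposed into four nearest-neighbor $\cnot$s that necessarily touch intervening data or helper qubits. I must check by direct inspection of each four-$\cnot$ block that deterministic propagation of a single data-qubit error never deposits a compensating nontrivial Pauli on a \emph{different} $H_4$ data qubit; this should follow from the fact that each intervening helper qubit begins in a reset state, is touched by an even number of $\cnot$s in the decomposition, and is ultimately measured out, so any transient Pauli mass on it either cancels or is absorbed into a recorded classical outcome rather than returning to the code.
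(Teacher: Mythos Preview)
Your approach is essentially the same as the paper's. Both arguments first reduce the two adjacent input errors to a single data-qubit error (by observing that the helper qubits are reset/measured at the start of the first $0$-$\ho$, and that data qubits are non-adjacent in the $[[3,1,1]]$ layout), then trace that single Pauli through the fault-free $\cnot$ network, and finally invoke the three-round repetition plus the distance-3 Hamming decoder for $0$-$\ec$, and the $\ec$-sandwiching for $0$-$\meas$.

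The one place where the paper is crisper is exactly the obstacle you flag. Rather than leaving the four-$\cnot$ pass-through as something to be checked by inspection, the paper notes directly that the data qubit is touched by an \emph{even} number of $\cnot$s in the decomposition, so a $Z$ on the data qubit deposits an even number of $Z$'s on the adjacent cat/entangling line (which cancel), and the $Z$ simply rides along the data wire to the end; the same parity argument handles $X$. In particular, the paper concludes that \emph{all} stabilizer syndrome measurements in $0$-$\ec$ are reported correctly (not merely consistently), so your phrasing ``possibly flipping the reported parity'' is slightly too pessimistic for the $\ec$ case---the syndrome is the true syndrome of the single data error in every round. Otherwise your plan matches the paper's proof.
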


\begin{proof}
    Note that the first $0$-$\ho$ circuit begins with measurements on the entangling and cat-state qubits, so if the $0$-block contains at most $2$ adjacent errors then after said layer it can only contain a single error on a data qubit. Now, let us recollect the $\cnot$ error propagation properties \cref{fact:cnot_prop} and the structure of the circuit \cref{fig:hook_unitary}. Pauli $Z$ errors on the data qubits propagate down to the cat state qubit during each application of a $\cnot$ gate; Since there are an even number of them, the Pauli $Z$ on the data qubit simply propagates to the end. Similar reasoning shows Pauli $X$ errors also simply propagate to the end. 

    We conclude all syndrome measurements during the $0$-$\ec$ are performed correctly, obtaining the desired property [$0$-$\ec$, Faulty Inputs]. Moreover, while the first $0$-$\ho$ in an $0$-$\meas$ may report the wrong outcome (due to the data-error), the subsequent two are screened by a $0$-$\ec$ and thereby are fault-less. 
\end{proof}

\subsection{Faulty Execution}

The following lemma computes the ``data damage distance" of the unitary implementation of the Hookless measurement, and quantifies the resilience of $0$-$\ho$ against faulty execution. This is analogous to \cref{lemma:rhook} at the physical level.

\begin{lemma}
    [$0$-$\ho$, Faulty Execution]\label{lemma:0ho}
    Suppose the physical qubits of two adjacent $0$-blocks contain no $0\error$s. If input to a $0$-$\ho$ which contains at most one 2-qubit fault, then each output $0$-block of the $0$-$\ho$ has at most $1$ $0\adj$. 
\end{lemma}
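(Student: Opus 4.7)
The plan is to perform a case analysis on the location and type of the single 2-qubit fault in the $0$-$\ho$ circuit of \cref{fig:hook_unitary}, propagating the resulting Pauli operator forward through the remaining $\cnot$ gates via \cref{fact:cnot_prop}. Recall $0$-$\ho$ consists of three rounds of a unitary repetition-code circuit on the cat-state/entangling qubits (with their nearest-neighbor $ZZ$ parity checks), where each next-nearest-neighbor $\cnot$ that would straddle a data qubit is replaced by a 4-$\cnot$ gadget that temporarily swaps through that data qubit. Since the input has no $0\error$s, the starting Pauli tableau on the data qubits is trivial, and every error at the output comes from propagating the single 2-qubit fault.

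I would enumerate four classes of fault location: (a) a fault on a $\cnot$ that is internal to a single $0$-block and does not touch a data qubit; (b) a fault on a $\cnot$ inside a 4-$\cnot$ bridge over a data qubit; (c) a fault on a $\cnot$ or parity-measurement coupling two adjacent $0$-blocks; (d) a fault on a single-qubit measurement or reset. Case (a) is essentially the analysis used in \cref{lemma:base-fi} in reverse: a 2-qubit Pauli on adjacent cat/entangling qubits propagates through at most one further $\cnot$ into the data qubit directly below, depositing at most one $Z$ (or $X$) on that data qubit plus one on the cat-state qubit, which is an adjacent pair, hence a $0\adj$. Case (d) is immediate: the fault can corrupt at most one measurement outcome and deposits no further Pauli in the data register.

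The main obstacle is case (b), the 4-$\cnot$ decomposition, because the data qubit participates as a transient ancilla: a single fault in the middle of the gadget can leave the data qubit excited even though a fault-free execution restores it, and so a priori one worries that the surrounding rounds could then spread that data-qubit error further. I plan to write out the gadget's four $\cnot$s explicitly, push the single 2-qubit Pauli fault to the end of the gadget using \cref{fact:cnot_prop}, and check by inspection that the only Pauli operators that survive are supported on the data qubit together with one of its two nearest cat/entangling neighbors (again an adjacent pair), while the far cat-state qubit is restored. Because no further gate in the circuit couples that data qubit back to a non-adjacent qubit within the same block, this pair cannot spread to a third data qubit.

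Finally, case (c) handles the boundary: a 2-qubit fault on a $\cnot$ bridging two adjacent $0$-blocks deposits one Pauli in each block, and propagating each one forward as in cases (a)/(b) gives at most one adjacent pair of errors per block, i.e.\ at most one $0\adj$ on each output $0$-block, which matches the lemma statement. Combining the four cases, I conclude that regardless of where the single 2-qubit fault occurs, each of the two adjacent output $0$-blocks carries at most one $0\adj$.
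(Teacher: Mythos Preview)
Your case analysis by gate location is a reasonable alternative organization to the paper's analysis by qubit type (data / entangling / cat), and the forward Pauli propagation via \cref{fact:cnot_prop} is the right tool. However, there is a genuine gap: you never analyze the effect of the fault on the \emph{Pauli feedback} computed from the repetition-code parity measurements. The $0$-$\ho$ circuit is not purely unitary; after the three rounds of $ZZ$ checks the decoder infers a cat-state correction and applies it. A single fault can flip several parity measurement outcomes (in the paper's terms, an $X$ error on a cat-state qubit ``flips all subsequent parity measurements above and below''), and the resulting incorrect feedback can deposit a Pauli at a location that your forward propagation never touched. Your case (d) says a corrupted measurement ``deposits no further Pauli in the data register,'' but that is precisely what needs to be argued: the feedback rule, applied to the corrupted syndrome, must not place an $X$ on a data qubit non-adjacent to the original fault. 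The paper devotes a separate paragraph to this, dividing on whether the flipped parities let the majority decoding identify the fault or instead leave a residual $X$ on the cat state adjacent to an already-faulty data qubit.

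Two smaller points. First, you treat the fault as a fixed two-qubit Pauli, but the lemma allows an arbitrary two-qubit channel; the paper handles this by a Kraus decomposition and then reduces two adjacent Paulis of the same type to a single Pauli via the $\cnot$ identities, which you should also spell out. Second, invoking ``\cref{lemma:base-fi} in reverse'' for case (a) is not quite right: that lemma tracks an error present \emph{before} the circuit under fault-free execution, whereas here the error is injected mid-circuit and the remaining rounds of parity checks (and their feedback) still act on it. The propagation is different and needs its own check.
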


In the below, we develop a painstaking case analysis of two-qubit faults of the repetition code circuit of \cref{fig:hook_unitary}, using $\cnot$ gates to implement next-nearest-neighbor gates. We refer the reader back to \cref{fig:hook_unitary} for a diagram of the unitary Hookless measurement.

Our approach first analyzes the effect of a single Pauli error on the cat-state preparation circuit and measurement outcome then, we understand the effect of the Pauli feedback performed to correct the cat state, and the final measurement. The case of two, \textit{adjacent} Pauli errors is then reduced to that of a single Pauli error, due to the structure of the circuit. Finally, we generalize the argument to superpositions of Pauli errors.

\begin{proof}
    As discussed, we begin by dividing into cases on the location and type of a single Pauli error, whether on a data, cat-state, or ``entangling" qubit.  Note that all $\cnot$ gates on cat-state qubits are controlled on said qubits, all $\cnot$ gates on entangling qubits are targeted on said qubits. 

    \begin{enumerate}
        \item If the error occurs on a data qubit. $Z$ errors propagate down to $Z$ errors on the cat-state, where they remain unmodified until the end of the circuit and flip the final measurement outcome. $X$ errors propagate up to $X$ errors on the entangling qubits, and flip at most one parity measurement.

        \item If the error occurs on an entangling qubit. $X$ errors simply propagate and flip the next parity measurement. Depending on their location, Z errors either propagate to 1) $Z$ errors on the cat state qubits immediately above and below, flipping their corresponding measurement outcomes; 2) the data qubit and cat state qubit immediately below.

        \item If the error occurs on a cat-state qubit. $Z$ errors simply propagate to the end of the circuit, and flip the final measurement outcome. $X$ errors propagate both to 1) the entangling qubit immediately below, where they flip all subsequent parity measurements. 2) to the data-qubit above, and subsequently the next entangling qubit immediately above. This applies an error to the data-qubit, and flips all the subsequent parity measurements above and below the original cat-state qubit.
        \end{enumerate}

        In each of three cases, at most one error is imparted to a data-qubit. However, it remains to ensure the effect of the Pauli feedback performed to correct the cat state doesn't increase the error weight on the data qubits. It suffices the understand the effect of $X$ errors above, as those are the ones which flip the $ZZ$ measurements. 

        In cases 1 and 2 above, the $X$ error has the effect of flipping a single ZZ measurement, and has no effect on the cat state. By majority, the relevant Pauli feedback is correctly computed. In case 3 above, an $X$ error flips all subsequent parity measurements on the two neighboring entangling qubits. Regardless of the location of this fault, either the fault is correctly identified and the cat state contains no $X$ error, or the cat state contains at most one $X$ error adjacent to a possibly already faulty data-qubit. In either case, the resulting output state has at most one data error. 

        By \cref{fact:cnot_prop}, two adjacent Pauli errors of the same type $X^{\otimes 2}$, $Z^{\otimes 2}$ before/after a $\cnot$ gate are equivalent to a single Pauli error after/before the same gate. If they are of different types $X\otimes Z$, their propagation is independent and we also reduce to the cases above. Case by case one identifies at most one data-qubit can be effected. 
        
        Finally, generic error channels are first decomposed into their Krauss decompositions, and treated as linear combinations of Pauli errors. If proceeded by a measurement (i.e. if the error occurs in the middle of the circuit, and not the end), then the superposition over errors collapses accordingly. In which case, the output state is a superposition over errors on a single data-qubit as desired. 
\end{proof}

We can now prove the base case properties of $0$-$\ec$, $0$-$\meas$ under faulty execution. 

\begin{lemma}
    [$0$-$\ec$, $0$-$\meas$, Faulty Execution]\label{lemma:base-execution} Suppose the physical qubits of two input $0$-blocks contain no $0\error$s. If input to a $0$-$\ec$ or a $0$-$\meas$ which contains at most one fault, then the output state contains at most 1 $0\adj$, and the measurement returned is correct.
\end{lemma}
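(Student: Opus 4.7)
The plan is to divide into cases based on the location of the single fault, and reduce each case to the two lemmas already established: \cref{lemma:base-fi} ([$0$-$\ec$, $0$-$\meas$, Faulty Inputs]) and \cref{lemma:0ho} ([$0$-$\ho$, Faulty Execution]). Recall that $0$-$\ec$ consists of three repetitions of Hookless measurements of all $H_4$ stabilizers, and $0$-$\meas$ alternates three $0$-$\ho$s (of the logical observable) with $0$-$\ec$s in between. Since by assumption each gadget contains at most one fault, the fault lives inside exactly one $0$-$\ho$ (or, for $0$-$\meas$, inside exactly one of the intermediate $0$-$\ec$s, which I handle by recursion into the $0$-$\ec$ part of the statement).

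For the $0$-$\ec$ claim, I would first apply \cref{lemma:0ho} to the single faulty $0$-$\ho$: the output state carries at most one $0\adj$ on the underlying data qubits, and the reported outcome of that one stabilizer measurement may be arbitrary. All other $0$-$\ho$s in the $0$-$\ec$ are fault-free and receive an input with at most $2$ adjacent Pauli errors, so \cref{lemma:base-fi} implies that their Pauli errors propagate cleanly to the end of the circuit without spreading, and each such $0$-$\ho$ returns the correct syndrome. Because each of the $H_4$ stabilizers is measured three times, at most one of the three reported bits per stabilizer is corrupted; a majority vote recovers the true syndrome, and the Pauli correction dictated by this syndrome then removes the residual single-qubit data error, so the output has no $0\error$s (and in particular at most one $0\adj$, as required).

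For the $0$-$\meas$ claim, I would split on whether the fault lies inside a $0$-$\ho$ or inside one of the interleaved $0$-$\ec$s. If the fault lies inside a $0$-$\ho$, \cref{lemma:0ho} again produces at most one $0\adj$ on the data while possibly corrupting that Hookless measurement outcome; the neighboring $0$-$\ec$s are now fault-free with a single-$0\adj$ input, so by \cref{lemma:base-fi} they correctly decode and remove the data error, and the remaining two $0$-$\ho$s of the $0$-$\meas$ execute on a clean input and report the correct observable. Majority over the three reported values then yields the correct outcome. If instead the fault lies inside an interleaved $0$-$\ec$, I invoke the $0$-$\ec$ part of the lemma (already proved in the previous paragraph) to conclude that this sub-$0$-$\ec$ outputs at most one $0\adj$; the remaining structure is then fault-free with at most a single-$0\adj$ input, and \cref{lemma:base-fi} handles it as before.

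The main obstacle I expect is bookkeeping the interaction between the data error created by the faulty $0$-$\ho$ and the classical feedback/majority voting of the decoder, particularly ensuring that a single wrong syndrome bit from the faulty round cannot fool the majority rule into applying an incorrect Pauli correction that \emph{creates} a second adjacent error. This is ultimately controlled by the distance-$3$ property of $H_4$ together with the three-fold temporal redundancy: at most one of the three syndrome copies can disagree with the others, so the majority-voted syndrome equals the true syndrome of the single data error that \cref{lemma:0ho} guarantees, and the resulting correction is exact rather than miscalibrated.
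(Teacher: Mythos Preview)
Your overall strategy matches the paper's: case-split on which $0$-$\ho$ carries the fault, invoke \cref{lemma:0ho} there, and feed the resulting $0\adj$ through the remaining fault-free $0$-$\ho$s via \cref{lemma:base-fi}. The paper's proof is in fact just a pointer to the identical case analysis carried out in \cref{lemma:rec-execution} and \cref{lemma:rmeas-execution}.

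There is, however, a real gap in how you handle the decoder. First, a minor overclaim: for $0$-$\ec$ you conclude the output ``has no $0\error$s'', but if the fault sits in the \emph{last} repetition, the first two rounds (correctly) report trivial syndrome, no correction is applied, and the $0\adj$ produced by \cref{lemma:0ho} survives to the output. The lemma only promises at most one $0\adj$, and this case is exactly why.

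Second, and more serious, your per-stabilizer ``majority vote'' does not work when the fault lies in the \emph{middle} round. The fault creates a data error $E$ partway through round~2, so the \emph{honest} stabilizer measurements in that round are split: those performed before the fault see a clean state, those after see $E$. Bitwise majority of (round~1 all-zero, round~2 partially zero and partially the syndrome of $E$, round~3 the syndrome of $E$) can produce a hybrid syndrome that the Hamming decoder interprets as a single-qubit error on the \emph{wrong} data qubit; applying that correction leaves two non-adjacent data errors, violating the conclusion. Your last paragraph correctly flags this bookkeeping as the crux, but the resolution you propose (``at most one of the three bits per stabilizer is corrupted'') is false: none of the other bits is corrupted, they just measure different states.

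The rule actually used (see the three bullet points in the proof of \cref{lemma:rec-execution}) compares whole rounds rather than individual bits: if rounds~1 and~2 agree, use that syndrome (so here no correction is applied and the output carries the one $0\adj$); if they disagree, the fault was in one of the first two rounds and round~3 is used. Both branches leave at most one $0\adj$.
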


\begin{proof}
    Equipped with \cref{lemma:0ho}, the proof is the same as that of \cref{lemma:rec-execution} of [$r$-$\ec$, Faulty Execution] and \cref{lemma:rmeas-execution} [$r$-$\meas$, Faulty Execution]. As a sketch, we divide into cases on the location of the $0$-$\ho$ which contains the fault in execution. [$0$-$\ec$, Faulty Execution] follows by applying \cref{lemma:0ho} and \cref{fact:cnot_prop} to ensure the fault only propagates to a single $0\adj$ at the output of the faulty $0$-$\ho$; careful consideration as before implies repeating the stabilizer measurements thrice is sufficient to infer the error up to degeneracy. [$0$-$\meas$, Faulty Execution] follows by applying a combination of [$0$-$\ec$, Faulty Execution], [$0$-$\ec$, Faulty Inputs] with [$0$-$\ho$, Faulty Execution] and \cref{fact:cnot_prop} analogously to \cref{lemma:rmeas-execution}. 
\end{proof}

\subsection{Arbitrary Inputs}

\begin{lemma}
    [$0$-$\ec$, Arbitrary Inputs] Suppose the physical qubits of an input $0$-block lie in an arbitrary quantum state. If input to a $0$-$\ec$ which contains at most one fault, then the output state corresponds to a code-state of $C_0$ up to a $0\adj$. 
\end{lemma}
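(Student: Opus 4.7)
The plan is to mirror the proof of Lemma \ref{lemma:rec-arbitrary} at the base of the recursion, leveraging two basic facts about the unitary $0$-$\ho$ circuit of \cref{fig:hook_unitary}. First, a fault-free $0$-$\ho$ applied to an \emph{arbitrary} input state implements an honest projective measurement of its target $H_4$ stabilizer and propagates no Pauli weight onto the data qubits; this follows by tracking $\cnot$ propagation (Fact \ref{fact:cnot_prop}) through the cat-state preparation, hardening, and measurement steps. Second, Lemma \ref{lemma:0ho} extends verbatim from codestate inputs to arbitrary inputs, because its proof relies only on Pauli commutation through the $\cnot$ network, which is oblivious to the initial state of the data qubits. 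Consequently, a single fault inside any $0$-$\ho$ introduces at most one $0\adj$ on the data qubits, regardless of the input, and may only corrupt the reported outcome of the single stabilizer being measured in that call.

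With these two tools, I would split into cases based on which of the three rounds of hookless stabilizer measurements in $0$-$\ec$ contains the single fault. If the fault lies in the first round, rounds 2 and 3 are fault-free and agree on a common syndrome, namely the honest syndrome of the post-fault state; the decoder reads this syndrome by majority, applies the corresponding Pauli correction, and brings the state into the $C_0$ codespace up to exactly the $0\adj$ caused by the fault. The case of a fault in the third round is symmetric: rounds 1 and 2 agree and project the arbitrary input honestly into a syndrome subspace, yielding the right correction, after which the third round introduces at most one $0\adj$ that propagates trivially to the output.

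The remaining case is a fault in the middle round, which I would subdivide on whether the syndromes reported by rounds 1 and 2 agree. If they do, the inferred syndrome is that of the pre-fault state; applying the corresponding Pauli correction once $0$-$\ec$ concludes is equivalent, up to the support of the $0\adj$, to applying it before the fault, which would have produced a clean codestate. If rounds 1 and 2 disagree, the fault must have occurred in round 2, so round 3 is fault-free and can be trusted; the applied correction again brings the state into the codespace up to a single $0\adj$.

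The main obstacle is the bookkeeping around the fact that a $0\adj$ introduced mid-$0$-$\ec$ shifts the syndromes reported by subsequent fault-free rounds. I would dispatch this by explicitly writing the post-fault state as the pre-fault state composed with an adjacent Pauli of weight at most two, so that the syndromes of later rounds differ from earlier ones by a fixed, decoder-inferable shift supported on the $0\adj$; this is exactly the argument used in \cref{lemma:rec-execution} to ensure the final correction does not spread the fault. Everything else is a direct transcription of the higher-level argument of \cref{lemma:rec-arbitrary}, with the role of [$r$-$\ec$, Arbitrary Inputs] replaced by the two base-level facts about $0$-$\ho$ above.
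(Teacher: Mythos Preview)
Your proposal is correct and follows essentially the same route as the paper's proof: both arguments case-split on which of the three stabilizer-measurement rounds contains the fault, use the two fault-free rounds to pin down the syndrome (subdividing the middle case on whether rounds 1 and 2 agree), and invoke \cref{lemma:0ho} to bound the data damage from the faulty round by a single $0\adj$. Your write-up is slightly more explicit in justifying that the base-level $0$-$\ho$ facts hold for arbitrary inputs, and you state marginally looser per-case bounds than the paper (e.g., ``up to a $0\adj$'' where the paper obtains ``no $0\error$s'' when the fault is in the first round), but these are still sufficient for the lemma as stated.
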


\begin{proof}
    We divide into cases on the location of the faulty $0$-$\ho$. If the fault lies in the first sequence of stabilizer measurements, then the subsequent two are faultless and consistent, and by majority correctly project and correct the state into a code-state of $C_0$ with no $0\error$s. If the fault lies in the last sequence of stabilizer measurements, then the first two are faultless and consistent, and by majority by majority correctly project and correct the state into a code-state of $C_0$. By \cref{lemma:0ho}, the faulty $0$-$\ho$ may create a $0\adj$, but they won't propagate. 

    If the fault lies in the middle sequence of stabilizer measurements, similar reasoning to \cref{lemma:rec-execution} applies. If the middle sequence and the first sequence differ, then we are guaranteed the last sequence is fault-free and acts on an arbitrary input; using those measurement outcomes ensures the output state is a code-state of $C_0$ with no $0\adj$. Conversely, if the middle sequence and the first sequence agree, those syndromes can be used to correct the state back to a code-state of $C_0$; however, up to a $0\adj$ created by \cref{lemma:0ho} which propagates to the end. 
\end{proof}

\section{Proof of \cref{theorem:main}}
\label{section:proof_of_main}

In this section we combine the ingredients developed in the previous sections, and prove \cref{theorem:main}. To define our code $M_n$, we will ``chunk" the $n$ physical qubits of $M_n$ into blocks of size $b$, and independently use each block to represent an instance of the code $C_r$ of \cref{section:interleaved_memory}. As we discuss, this subdivision enables us to decrease the time-overhead of the construction without sacrificing rate nor significantly sacrificing logical error rate.

\begin{theorem}
    $\forall n>n_0$ and constant $\delta\in (0, 1/3]$, there exists a quantum memory $M_n$ satisfying:

    \begin{enumerate}

        \item $M_n$ is a $[[n, > n/20, \exp(\Theta(\log^{\delta} n))]]$ stabilizer code.
        \item $M_n$ is implemented by a stabilizer circuit using nearest-neighbor gates on a line of qubits while subjected to uniform depolarizing circuit noise.

        \item Below some depolarizing noise threshold $p$ independent of $n$, the per-circuit-cycle logical error rate of $M_n$ is $$ \exp(-\exp(\Omega(\log^{\delta} n))).$$

         \item Each circuit cycle has circuit depth $T_n = \exp(\Theta(\log^{3\delta}n))$.
    \end{enumerate}
\end{theorem}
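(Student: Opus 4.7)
The plan is to tile the line of $n$ physical qubits with $\lfloor n / n_r \rfloor$ independent copies of the constant-rate code $C_r$ of \cref{section:interleaved_memory}, where the concatenation level $r$ is tuned to the desired parameter $\delta$. Concretely, recall that $n_r = 2^{\Theta(r^2)}$; I will choose $r = \Theta(\log^{\delta} n)$, so that each block has size $n_r = 2^{\Theta(\log^{2\delta} n)}$, which is sub-polynomial in $n$ and hence easily divides $n$ up to lower-order terms. Since each $C_r$ is already implemented by a 1D nearest-neighbor stabilizer circuit (by the fault-tolerance construction of \cref{section:fault-tolerance}), laying the blocks end-to-end on the line preserves nearest-neighbor connectivity, and there is no interaction between blocks during the memory operation.

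The static parameters of $M_n$ then follow immediately from the parameters of $C_r$ proved in \cref{theorem:code_parameters}. The number of logical qubits is $\lfloor n/n_r \rfloor \cdot k_r$, and since $k_r > n_r / 20$ this is $> n/20$ up to negligible boundary loss. The code distance of each block is $3^r = \exp(\Theta(\log^\delta n))$, and since the blocks are disjoint, the minimum weight of a Pauli error that can damage the data is also $\exp(\Theta(\log^\delta n))$, giving item (1). For item (4), since each block's syndrome extraction cycle runs in depth $T_r = 2^{O(r^3)} = \exp(\Theta(\log^{3\delta} n))$ by \cref{lemma:time-overhead}, and the blocks can be executed fully in parallel (they share no qubits and no gates), the total circuit depth per cycle is $T_n = \exp(\Theta(\log^{3\delta} n))$.

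For the logical error rate (item 3), I would apply the threshold theorem \cref{lemma:threshold} within each block. Below the threshold $p^*$, the probability that a designated $r$-$\rec$ in a given block is bad is at most $2^{-2^{cr}} = \exp(-\exp(\Omega(\log^\delta n)))$. A standard percolation/union-bound argument over the $\poly(n_r)$ rectangles per block and the $\lfloor n/n_r \rfloor$ blocks gives a per-cycle logical error rate upper bounded by
\begin{equation}
n \cdot \exp(-\exp(\Omega(\log^{\delta} n))) \;=\; \exp(-\exp(\Omega(\log^\delta n))),
\end{equation}
since the leading $\log n$ prefactor is absorbed into the double-exponential. Item (2) is immediate from the fact that each block's circuit is nearest-neighbor 1D and blocks are placed side-by-side.

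The proof is almost entirely a bookkeeping exercise on top of the machinery already built in \cref{section:interleaved_memory,section:fault-tolerance,section:threshold,section:propagation_proofs,section:base_case}; the only mild subtlety is verifying that the chosen block size $n_r$ is indeed achievable with an integer $r$ for all $n > n_0$ (which holds by density of the sequence $n_r$ on a doubly-logarithmic scale, with some rounding at the boundary), and verifying the matching upper bound $\exp(O(\log^\delta n))$ on the distance (which follows from the fact that a single block of distance $3^r$ can be broken by a Pauli error of that weight, establishing tightness up to constants in the exponent). The only conceptual point worth emphasizing is the trade-off: increasing $\delta$ simultaneously boosts the distance and logical error rate while inflating the circuit depth as its cube, and the constraint $\delta \leq 1/3$ arises so that $T_n$ remains sub-exponential (in fact quasi-polynomial) in $n$.
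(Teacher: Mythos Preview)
Your proposal is correct and follows essentially the same approach as the paper: partition the line into $\lfloor n/n_r\rfloor$ independent copies of $C_r$ with $r=\Theta(\log^\delta n)$, read off the rate, distance, and depth from \cref{theorem:code_parameters} and \cref{lemma:time-overhead}, and bound the per-cycle logical error rate by a union bound over all $r$-$\rec$s using \cref{lemma:threshold}. The paper additionally spells out the chain of references to \cref{section:base_case} and \cref{section:propagation_proofs} that establish \cref{properties:prop1}, \cref{properties:prop2} (which underlie \cref{lemma:goodcorrect} and \cref{lemma:good-decodable}), but you already gesture at this by citing those sections as ``machinery already built.''
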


\begin{proof}

    $M_n$ will be comprised of adjacent instances of $C_r$, where we pick $r = \log^{\delta} n$. \cref{theorem:code_parameters} on the code parameters tells us the $r$th level of concatenation $C_r$ has blocklength $b = 2^{r^2/2+O(r)}$ and rate $ > 1/20$. The rate of $M_n$ is therefore 

    \begin{equation}
        > \bigg\lfloor\frac{n}{b}\bigg\rfloor \cdot b\cdot \text{Rate}(C_r)\cdot \frac{1}{n} \geq \bigg(1 - \frac{b}{n}\bigg) \cdot \text{Rate}(C_r) > \frac{1}{20}
    \end{equation}

    \noindent for sufficiently large $n$. The time-overhead for $C_r$ is presented in \cref{lemma:time-overhead}. Assuming the classical system is capable of parallel operations, the instances of $C_r$ can also be decoded in parallel.

    The results of \cref{section:base_case}, namely \cref{lemma:base-fi} and \cref{lemma:base-execution}, prove that the level $0$ code, $C_0$, satisfies the desired error-propagation properties \cref{properties:prop1}, \cref{properties:prop2}. Using $C_0$ as a base case, the results of \cref{section:propagation_proofs}, namely \cref{lemma:ecfaultinputs}, \cref{lemma:rec-execution}, \cref{lemma:meas-faultyinputs}, \cref{lemma:rmeas-execution}, and \cref{lemma:rec-arbitrary}, prove $C_r$ satisfies \cref{properties:prop1}, \cref{properties:prop2} at all $r\geq 1$.

    From the percolation argument in \cref{lemma:threshold}, we know that below some threshold noise rate $p^*$, the probability per-circuit-cycle there exists a Bad $r$-$\rec$ within the execution of a single $C_r$ block decays doubly exponentially with the concatenation level $r$. The total number of $r$-$\rec$'s per circuit-cycle is $\leq \lceil n/b \rceil = O(n)$. By a union bound, and from the relationship between $n$ and $r$, we conclude the probability any $C_r$ block has a bad $r$-$\rec$ is:

    \begin{equation}
        n\cdot 2^{-2^{\Theta(r)}} = \exp\bigg( - \exp\bigg( \Theta(\log^\delta n)\bigg)\bigg)
    \end{equation}

    \noindent assuming $\delta$ a constant, and $n$ sufficiently large. 

    Finally, by the correctness \cref{lemma:goodcorrect} and the proof above of \cref{properties:prop1}, if every $r$-$\rec$ in the circuit is Good, then every $r$-$\rec$ is correct; consequently, by \cref{lemma:good-decodable} the logical information is decodable. 
\end{proof}

\section{Fault-tolerant Computation}
\label{section:the_computer}

\subsection{Outline}
\label{subsection:computation-outline}

To begin, we recollect that the fault-tolerant measurement scheme of \cref{section:fault-tolerance} enables us to perform arbitrary Pauli-product measurements; which, in turn, enables us to implement arbitrary stabilizer operations. It remains to show how to perform any non-Clifford operation, to complete a universal set of quantum gates. 

We proceed by showing how to implement a magic state distillation procedure, which via gate-injection, enables us to implement $T$ gates. To do so, we need to perform a minor modification to the layout of our memory, which will decrease its rate by a negligible amount. In between the top, $r$-level code-blocks which store data-qubits, we place a ``ladder" of code blocks $(C_0, C_1, \cdots, C_{r-1})$ adjacent to each other (See \cref{fig:ladder}). 

This ladder of code-blocks will serve to inject noisy $T$ gates from the physical level ($C_0$), all the way into a logical qubit of $C_r$. Although we defer details to \cref{section:tstate-distillation}, roughly speaking the protocol is based on teleporting logical qubits within an $i$-block, into an $(i+1)$-block, all the way up the ladder. Once sufficiently many noisy $T\ket{+}$ states are encoded into the top-level $r$-block, a magic state distillation routine is run (inside $C_r$) to create a high-fidelity $T\ket{+}$ state. 

The only remaining detail is how to perform operations between top-level $r$-blocks, which no longer are adjacent. For this purpose, we develop a basic shuffling/shuttling protocol in \cref{section:shuttling}, which moves an $r$-block over (noisy) $\ket{0}$ qubits until adjacent to the relevant other $r$-block.

\begin{figure}[h]
    \centering
    \includegraphics[width=0.7\linewidth]{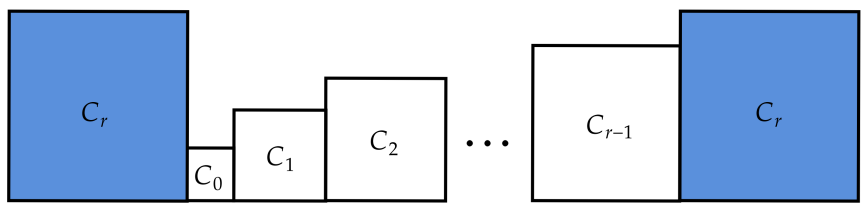}
    \caption{The ladder of code-blocks used to teleport $T$ gates into an $r$-block. Pictured in blue are $r$-blocks used to store data-qubits. In white, the blocks used to create and distill $T$ states. }
    \label{fig:ladder}
\end{figure}

\subsection{Code-block Shuffling}
\label{section:shuttling}

Here we describe a simple scheme to move the top-level code-blocks $C_r$ over a distance of $b$. Simply put, we interleave swap gates with error-correction rounds. We recall that at the physical level, $C_0$ consists of data-qubits, ``entangling" and ``measurement" qubits, the latter two initialized to $\ket{0}$. These helper qubits assist in the shuttling protocol: at each layer of swap gates, we apply swap gates only between the data-qubits and the entangling qubit immediately to their left (or to their right). In doing so, we can move all the data-qubits simultaneously; the effect of the swapping is then to simply swap the role of the entangling and measurement qubits. 

\begin{lemma}
    [Runtime of the Shuttling Protocol]\label{lemma:shuttling_runtime} The runtime to move a $C_r$ block over a linear distance of $b$ qubits, all initialized to $\ket{0}$, is $b\cdot \exp (\Theta(r^3))$.
\end{lemma}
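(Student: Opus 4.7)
The plan is to decompose the shuttling protocol into $b$ elementary ``shift'' cycles, each of which moves the $C_r$ block by a single physical qubit position, and then bound the cost of one cycle using the time-overhead estimate of \cref{lemma:time-overhead}. An elementary shift consists of (i) one physical layer of parallel $\mathsf{SWAP}$ gates, one per data qubit of the $C_r$ block, each acting between the data qubit and its adjacent helper qubit (which sits in $\ket{0}$), followed by (ii) one full error-correction round $r$-$\ec$. Because $\mathsf{SWAP}\ket{\psi}\ket{0} = \ket{0}\ket{\psi}$, this layer translates the entire block rigidly by one site while leaving fresh $\ket{0}$ ancillas behind it, and the layer has depth one since the $\mathsf{SWAP}$s act on disjoint pairs.

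To justify that wrapping each shift in a single $r$-$\ec$ suffices, I would appeal directly to the error-propagation infrastructure already developed. Under local stochastic noise, the physical $\mathsf{SWAP}$ layer acts as a constant-depth 0-level operation that, together with the noise channel, can corrupt the incoming $r$-block arbitrarily. The key point is that the subsequent $r$-$\ec$ is, by assumption, itself internally Good, so property [$r$-$\ec$, Arbitrary Inputs] of \cref{properties:prop2} guarantees that its output is a code-state of $C_r$ up to at most a single $r\adj$. By \cref{lemma:1error->decodable}, this residual error is correctable, so each elementary shift preserves the logical state of the block and fits cleanly into the $r$-$\rec$ framework of \cref{section:threshold}.

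The runtime count is then immediate: each of the $b$ shifts contributes $O(1)$ depth for the $\mathsf{SWAP}$ layer plus $T_r$ depth for the error-correction round, and by \cref{lemma:time-overhead} we have $T_r = \exp(\Theta(r^3))$. Summing gives a total depth of $b\cdot(O(1) + T_r) = b\cdot \exp(\Theta(r^3))$, as claimed. The main obstacle I anticipate is the bookkeeping subtlety around the boundary: when the block shifts, its ``new'' physical qubits were previously outside any $C_r$ instance and held (noisy) $\ket{0}$ states, so one must be careful that feeding this composite region into $r$-$\ec$ still matches the arbitrary-inputs setup. This is handled by viewing the freshly exposed $\ket{0}$ qubits on the trailing edge as just another instance of ``arbitrary input'' at the sub-block level, which $r$-$\ec$ is designed to tolerate; once this framing is set up, no additional error-propagation analysis beyond what is already proved in \cref{section:propagation_proofs} is required.
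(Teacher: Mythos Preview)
Your runtime argument---$b$ elementary shifts, each costing one constant-depth $\mathsf{SWAP}$ layer plus one $r$-$\ec$ of depth $T_r = \exp(\Theta(r^3))$ by \cref{lemma:time-overhead}---is correct and is exactly the paper's (implicit) approach: the paper simply states the protocol as ``interleave swap gates with error-correction rounds'' and moves on without a formal proof, so your decomposition matches.

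However, your second paragraph on fault-tolerance is both unnecessary for this lemma (which only asserts a runtime bound) and contains a real error. You invoke [$r$-$\ec$, Arbitrary Inputs] of \cref{properties:prop2} after each $\mathsf{SWAP}$ layer, but that property explicitly says ``the output may contain an arbitrary encoded logical state''---so it does \emph{not} preserve the logical information, and your conclusion that ``each elementary shift preserves the logical state of the block'' does not follow from it. The correct way to argue correctness of shuttling (were it needed) is to observe that a single noisy $\mathsf{SWAP}$ layer is a level-$0$ operation introducing only $0$-errors, which are handled by the $0$-$\ec$ gadgets at the bottom of the subsequent $r$-$\ec$; the block never leaves the regime covered by the Faulty-Inputs / Faulty-Execution properties of \cref{properties:prop1}, and Arbitrary Inputs should not be invoked at all. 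For the present lemma, you can simply drop the correctness paragraph entirely.
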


In the context of the ladder construction, the linear distance $b$ is the sum of blocklengths of $C_i$, $i\leq r$, which is $\exp(\Theta((r-1)^2)) = o(|C_r|)$.

\subsection{The $T$ State-Distillation Protocol}
\label{section:tstate-distillation}

\subsubsection{Phase 1: $T$ State Teleportation}

We devise the following algorithm to teleport a single noisy $T\ket{+}$ state into a code-block $C_r$. Roughly speaking, it suffices to show how to teleport a $T\ket{+}$ state encoded into a code-block $C_i$, into a logical qubit of an adjacent $C_{i+1}$ block. To do so, we proceed in 3 steps. 
\begin{enumerate}
    \item \textbf{Cat State Preparation.} Establish a multi-partite cat state, between a logical qubit of $C_i$, and the reserved qubits of each $C_i$ block within $C_{i+1}$. We do so using the repeated $ZZ$ measurement scheme within the protocol for $r$-$\ho$, described in \cref{section:fault-tolerance}. (\cref{fig:tgates_phase1}a)

    \item \textbf{Bell State Preparation.} Prepare a logical EPR pair across the different code-blocks $C_i, C_{i+1}$, where one of two qubits is encoded into the $C_i$ block, and the other is encoded into $C_{i+1}$. To do so, it suffices to perform a logical $XX$ measurement across the two code-blocks. Equipped with the cat states, it suffices to perform logical measurements $i$-$\meas$/$(i+1)$-$\meas$ within the individual code-blocks. 

    \item \textbf{$T$-State Teleportation.} Given a logical EPR pair across the different code-blocks $C_i, C_{i+1}$ and a $T$ state encoded into $C_i$, Pauli measurements and feedforward suffice to teleport the state into $C_{i+1}$. (\cref{fig:tgates_phase1}b)
\end{enumerate}

Applying the scheme above sequentially to each pair of adjacent codes in the ladder $C_0, C_1, \cdots, C_{r}$, teleports a single $T\ket{+}$ state into a logical qubit of the top-most code $C_r$. We claim (and prove shortly) that in the presence of a small constant amount of noise $p\leq p^*$, the resulting fidelity of the logical $T\ket{+}$ state is $1-\Theta(p)$. After the teleportation up the ladder concludes, we repeatedly measure all its qubits in the computational basis; until the next teleportation phase on said block commences.

\begin{figure}[h]
    \begin{subfigure}[b]{0.6\textwidth}
\centering
    \includegraphics[width = 1\linewidth]{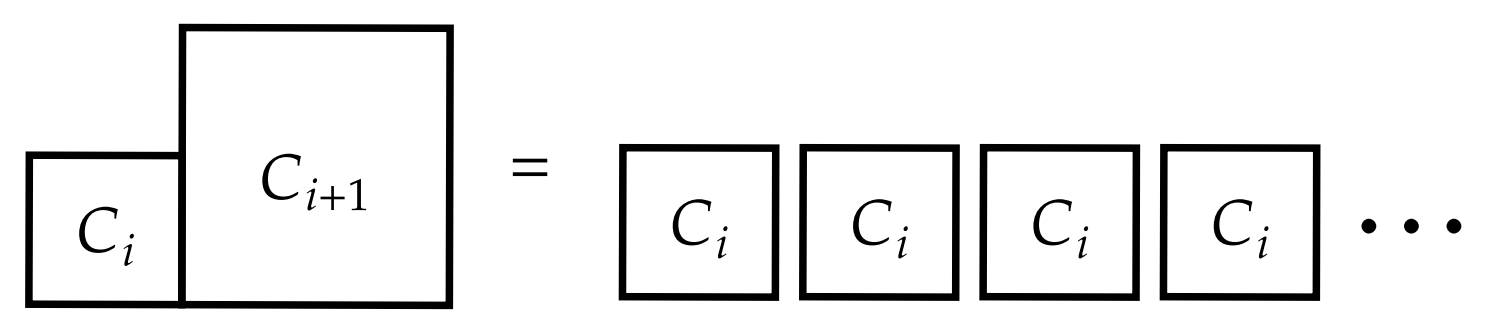}
    \caption{An adjacent pair of $C_i, C_{i+1}$ blocks.}
\end{subfigure}
\begin{subfigure}[b]{0.4\textwidth}
\centering
    \includegraphics[width = 0.6\linewidth]{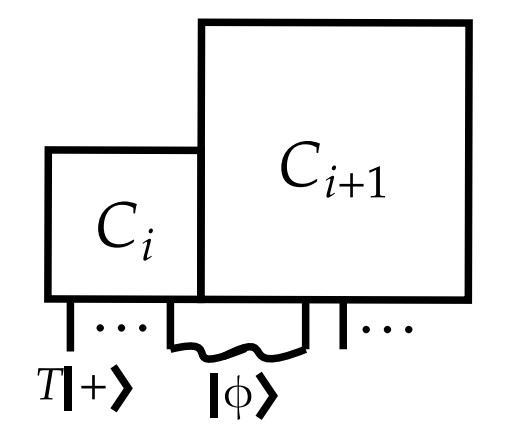}
    \caption{Teleporting $T\ket{+}$ from $C_i$ into $C_{i+1}$.}
\end{subfigure}
\caption{Phase 1 - A protocol to teleport noisy $T$ states into a top level code-block $C_r$.}
\label{fig:tgates_phase1}
\end{figure}

\begin{lemma}
    [Runtime of the Teleportation Phase]\label{lemma:runtime_teleportation} The runtime to teleport a single $T\ket{+}$ state encoded into $C_0$ at the bottom of the ladder, into the $C_r$ block at the top of the ladder, is $\exp (\Theta(r^3))$.
\end{lemma}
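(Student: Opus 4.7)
The plan is to decompose the teleportation protocol into the $r$ sequential ``rungs'' of the ladder, analyze the depth of each rung in terms of the level-$(i{+}1)$ fault tolerant measurement time $T_{i+1}$, and then sum the resulting series.

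\textbf{Step 1 (per-rung analysis).} Fix a rung $i \to i{+}1$, as described in \cref{subsection:computation-outline}. The rung consists of three sub-phases. The cat state preparation is an instance of the dissipative Shor gadget of \cref{section:hookless_definition}, using the reserve logical qubits of $C_i$ as cat qubits and performing $O(1)$ rounds of nearest-neighbor $ZZ$ parities between adjacent $C_i$-blocks inside $C_{i{+}1}$; each such parity is an $i$-$\meas$, interleaved with $i$-$\ec$ rounds. Odd/even pairs can be done in two parallel layers, so the depth of this sub-phase is $O(T_i \cdot \text{(block count within } C_{i+1}))$. The logical $XX$ across $C_i$ and $C_{i+1}$ (Bell preparation) and the subsequent Pauli-measurement teleportation are a hookless $(i{+}1)$-level measurement mediated by the cat state, framed by $(i{+}1)$-$\ec$ rounds; their depth is $\Theta(T_{i+1})$.

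\textbf{Step 2 (bounding the rung depth by $T_{i+1}$).} By the recursion in \cref{lemma:time-overhead}, $T_{i+1} = \Theta(n_{i+1} \cdot T_i)$ with $n_{i+1} = 2^{\Theta(i)} \cdot n_i$. Hence the cat state preparation depth $O(n_{i+1}/n_i \cdot T_i) = O(T_{i+1})$ is absorbed, and the whole rung runs in depth
\[
 \Theta(T_{i+1}) = \exp\!\bigl(\Theta((i+1)^3)\bigr).
\]

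\textbf{Step 3 (summing the ladder).} Adding the rung depths and the $O(T_r)$ terminal error-correction sweep on $C_r$,
\[
 \text{total depth} \;=\; \sum_{i=0}^{r-1} \Theta(T_{i+1}) \;=\; \sum_{i=1}^{r} \exp\!\bigl(\Theta(i^3)\bigr).
\]
Because $\exp(\Theta(i^3))$ grows super-geometrically in $i$, the sum is dominated by its final term, giving $\Theta(T_r) = \exp(\Theta(r^3))$ for the upper bound. The matching lower bound is free: the single rung $C_{r-1}\to C_r$ requires an $(r)$-level fault tolerant measurement, which has depth $\Omega(T_r) = \exp(\Omega(r^3))$.

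\textbf{Anticipated obstacle.} The routine part is the geometric-series calculation; the real care is in Step 1, in justifying that the cross-level operations (a cat state whose endpoints live on a single $C_i$ and on the $C_i$-sub-blocks of an adjacent $C_{i+1}$) do not inflate the depth beyond $T_{i+1}$. This reduces to the structural observation already made in \cref{subsection:computation-outline}: all $ZZ$ parities inside the cat state are same-level ($i$-$\meas$) measurements between nearest-neighbor reserved qubits, and only the final logical $XX$ straddles two levels, where its depth is set by the slower $(i{+}1)$-level side. Once this bookkeeping is in place, the asymptotic estimate is immediate.
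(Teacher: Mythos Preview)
Your argument is correct and matches the paper's approach: bound the work at each level of the ladder by a constant number of level-$i$ (and level-$(i{+}1)$) fault-tolerant measurements, then observe the series $\sum_i \Theta(T_i)$ is dominated by its last term $T_r = \exp(\Theta(r^3))$. The paper compresses all of this into one sentence (``the protocol consists of a constant number of $i$-$\meas$ and $i$-$\ec$ at each layer $0\le i\le r$''), whereas you spell out the per-rung decomposition and explicitly check the lower bound via the final rung; that extra care is fine.

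One small wording inconsistency worth cleaning up: in Step~1 you say ``odd/even pairs can be done in two parallel layers'' and then immediately write the depth as $O(T_i \cdot \text{block count})$. The parallelism remark implies $O(T_i)$, not $O(T_i\cdot n_{i+1}/n_i)$; you are giving a looser bound than your own sentence justifies. It doesn't matter for the final estimate (both are $\le O(T_{i+1})$), but as written the two clauses contradict each other.
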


This is simply since the protocol consists of a constant number of $i$-$\meas$ and $i$-$\ec$ at each layer $0\leq i\leq r$.

\subsubsection{Phase 2: $T$ Gates via Distillation and Gate Injection}

We require the following protocol for magic state distillation. 

\begin{fact}[\cite{bravyi2005distillation}]\label{fact:sd_fact}
    Let $\rho$ be a single qubit mixed state with some constant fidelity $f \geq f^*$ with $T\ket{+}$. Then, there exists a stabilizer circuit which on input $\polylog(\delta^{-1})$ copies of $\rho$, outputs a single qubit with fidelity $1-\delta$ with $T\ket{+}$.
\end{fact}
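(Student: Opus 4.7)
The plan is to invoke the 15-to-1 distillation protocol of Bravyi and Kitaev~\cite{bravyi2005distillation}, based on the transversal $T$-gate property of the quantum Reed--Muller code $[[15,1,3]]$. In one round of the protocol, we encode $15$ noisy copies of $T\ket{+}$ into the Reed--Muller code by measuring its stabilizers, post-select on trivial syndrome outcomes, and then decode to extract a single output copy of $T\ket{+}$ with improved fidelity. Since the encoding and decoding are purely Clifford, and the transversal $T$ on the encoded state corresponds to applying $T$ to each of the $15$ inputs, this protocol uses only the resources we have: Clifford operations plus noisy copies of $\rho$.

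Next I would analyze the single-round error suppression. By first twirling the noisy input state $\rho$ into a Pauli mixture of $T\ket{+}$ using Clifford operations that stabilize $T\ket{+}$, the analysis reduces to tracking a classical stochastic process on Pauli errors. Because the Reed--Muller code has distance $3$, any weight-$\leq 1$ Pauli error pattern on the $15$ inputs either triggers a non-trivial syndrome (and is rejected by post-selection) or is corrected by decoding. A combinatorial accounting over the remaining, uncorrectable patterns then shows that if each input copy has infidelity $p$, the output copy conditioned on trivial syndromes has infidelity $O(p^3)$; the implicit constant depends only on the code.

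Finally I would iterate. Starting from $p_0 = 1 - f \leq 1 - f^*$ and applying $k$ rounds recursively, the infidelity decays as $p_k = O(p_{k-1}^3)$, so $p_k$ is doubly-exponentially small in $k$. Setting $p_k \leq \delta$ requires $k = O(\log\log\delta^{-1})$ rounds. Since each round multiplies the number of input copies consumed by $15$, the total count of noisy $T$ states used is $15^k = \polylog(\delta^{-1})$. Post-selection failures can be absorbed into a constant multiplicative overhead, since the trivial-syndrome probability is bounded away from zero whenever $f \geq f^*$, and failed rounds can be repeated an $O(1)$ number of times in expectation.

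The main obstacle I anticipate is making the cubic error suppression rigorous for an arbitrary mixed state input $\rho$, rather than just a stochastic Pauli mixture applied to the ideal $T\ket{+}$. The twirling step is the technical linchpin: one must exhibit a finite set of Clifford gates that both stabilizes $T\ket{+}$ and, when averaged uniformly, projects $\rho$ onto the convex hull of $T\ket{+}$ plus Pauli noise, all while preserving the fidelity $f$. Once the input is twirled, the remainder of the argument reduces to classical error-correction accounting on the Reed--Muller code, which is tractable by direct enumeration.
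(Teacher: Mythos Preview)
The paper does not prove this statement at all: it is stated as a \texttt{Fact} with a citation to~\cite{bravyi2005distillation} and used as a black box in the distillation scheme of \cref{section:tstate-distillation}. There is nothing to compare your argument against in the paper itself.

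That said, your proposal is a correct and standard sketch of the proof as it appears in the cited literature. The 15-to-1 Reed--Muller protocol, the Clifford twirling reduction to stochastic Pauli noise, the cubic suppression $p \mapsto O(p^3)$ from the distance-3 property, and the doubly-exponential convergence under iteration (hence $15^{O(\log\log\delta^{-1})} = \polylog(\delta^{-1})$ input copies) are exactly the ingredients of the Bravyi--Kitaev argument. Your identified obstacle---making the twirling rigorous---is real but well-known: the Clifford stabilizer of $T\ket{+}$ is a finite cyclic group, and averaging over it dephases $\rho$ in the eigenbasis of the stabilizing Clifford without changing the fidelity, which is precisely what is needed.
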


Leveraging the teleportation protocol of Phase 1, we encode $\polylog(\delta^{-1})$ $T$ states into the top-level $C_r$ block, each with a small constant amount of independent noise. \footnote{The block-length of $C_r$ will later be chosen to be $\exp (\poly\log\log(n/\delta))$, which is asymptotically larger than $\polylog(\delta^{-1})$.} It remains to distill these noisy $T$ states into one high-fidelity $T$ state, and then via state injection apply a $T$ gate to the data-qubits in the computation.

\begin{enumerate}
    \item \textbf{$T$ State Distillation.} Using logical stabilizer operations and Pauli feed-forward as described in \cref{subsection:computation-outline}, we implement the magic state distillation protocol of \cref{fact:sd_fact} within $C_r$. 

    \item \textbf{Shuttling $C_r$ blocks.} The $r$ blocks containing the data-qubits, and the $T$ state, are not adjacent. We move the code-blocks sideways using the shuttling protocol of \cref{section:shuttling} until the relevant $r$ blocks are adjacent.

    \item \textbf{$T$ Gate Injection.} Via logical controlled $Z$ measurements and a Clifford correction within the $r$ blocks, implement the $T$ gate injection.
\end{enumerate}

The scheme above imparts a single $T$ gate to a single qubit in an adjacent data $C_r$ block. We repeat this protocol a number of times proportional to the number of logical qubits in $C_r$, to implement a single layer of $T$ gates.

\subsection{Time and Space Overhead}

\begin{lemma}\label{lemma:comp_time_overhead}
    The time overhead to perform a single layer of logical $T$ gates and nearest-neighbor $2$ qubit gates to per-gate-error $\delta$, over data qubits encoded into a linear arrangement of $C_r$ blocks, is $\exp(\Theta(r^3))\cdot \polylog\frac{1}{\delta}$.
\end{lemma}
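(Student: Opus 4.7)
My plan is to decompose the runtime of a single logical layer into four contributions and to bound each by $\exp(\Theta(r^3))\cdot \polylog(1/\delta)$: (i) teleportation of $N=\polylog(1/\delta)$ noisy $T\ket{+}$ states up each ladder, (ii) execution of the distillation circuit inside $C_r$, (iii) shuttling of the distilling $C_r$ block into adjacency with its target data block, and (iv) the $T$-state injection together with the other nearest-neighbor two-qubit Clifford gates of the layer.

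For (i), \cref{fact:sd_fact} says $N=\polylog(1/\delta)$ noisy encoded $T\ket{+}$ states suffice to distill one output of fidelity $1-\delta$, and \cref{lemma:runtime_teleportation} gives each ladder traversal cost $\exp(\Theta(r^3))$. Because a single ladder has a unique top $C_r$ that must accumulate the $N$ states on distinct logical qubits, I would schedule these $N$ traversals serially, obtaining $N\cdot \exp(\Theta(r^3))=\polylog(1/\delta)\cdot \exp(\Theta(r^3))$. For (ii), the Bravyi--Kitaev distillation circuit is a stabilizer circuit on $N$ qubits, which I would implement as $O(N)$ logical Pauli-product measurements $r$-$\meas$ inside $C_r$, each of depth $T_r=\exp(\Theta(r^3))$ by \cref{lemma:time-overhead}, giving the same total. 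For (iii), the distilling and target data blocks are separated by the ladder, whose linear extent is $\sum_{i<r}|C_i|=\exp(\Theta(r^2))=o(|C_r|)$; \cref{lemma:shuttling_runtime} then yields shuttling time $\exp(\Theta(r^2))\cdot \exp(\Theta(r^3))=\exp(\Theta(r^3))$. For (iv), $T$-injection and any nearest-neighbor logical Clifford reduce to a constant number of $r$-$\meas$ measurements between adjacent $C_r$ blocks, each of depth $\exp(\Theta(r^3))$.

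To lift this per-gate cost to a full logical layer I would exploit parallelism: by the layout of \cref{fig:ladder}, every data-carrying $C_r$ block has its own dedicated ladder, so distillation and injection for distinct logical $T$ gates in the layer proceed concurrently without resource contention, and nearest-neighbor Cliffords between data blocks are performed in an independent sub-layer of $r$-$\meas$ calls. Summing (i)--(iv) under this parallelism yields the advertised bound $\exp(\Theta(r^3))\cdot \polylog(1/\delta)$.

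The main obstacle I anticipate is scheduling and bookkeeping rather than any new ingredient: one has to verify that the $N$ sequential ladder teleportations do not collide with the distillation-phase $r$-$\meas$ already executing on the top $C_r$ (so that the overall schedule is indeed additive), and that the shuttling of a distilling $C_r$ block past intermediate ancilla qubits --- reusing those qubits as swap scratch space, as in \cref{section:shuttling} --- leaves the threshold guarantees of \cref{section:threshold} undisturbed on the adjacent data blocks. Neither issue should affect the asymptotics, but both require a careful pass to state cleanly.
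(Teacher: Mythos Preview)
Your decomposition into teleportation, distillation, shuttling, and injection/Cliffords matches the paper's, and you invoke the same lemmas (\cref{lemma:runtime_teleportation}, \cref{lemma:time-overhead}, \cref{lemma:shuttling_runtime}, \cref{fact:sd_fact}) in the same way. The per-gate costs you derive in (i)--(iv) are correct.

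There is, however, a real gap in your ``lift to a full layer'' step. Your parallelism argument asserts that distinct logical $T$ gates in a layer can proceed concurrently because each data $C_r$ block has its own ladder. But a single data $C_r$ block carries $\Theta(|C_r|)$ logical qubits, and a logical layer may apply a $T$ gate to \emph{each} of them. All of those $T$ gates contend for the same ladder and the same distilling block, so they must be serialized. The same holds for logical Cliffords: each $r$-$\meas$ occupies an entire $C_r$ block, so the $\Theta(|C_r|)$ two-qubit gates within one block in a given logical layer cannot be performed in parallel --- they are sequential, as the paper states explicitly. Your argument therefore undercounts by a factor of $|C_r|$.

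The fix is immediate and is exactly what the paper does: multiply your per-gate bound by the number of logical qubits per block, $|C_r|=\exp(\Theta(r^2))$, and observe that $\exp(\Theta(r^2))\cdot\exp(\Theta(r^3))=\exp(\Theta(r^3))$, so the factor is absorbed and the stated bound survives. Once you replace the parallelism claim with this sequential count, your proof and the paper's coincide.
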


\begin{proof}
    Each 2-qubit Clifford operation within each code-block must be performed sequentially, and their time-overhead is precisely that of the error-correction routine: $T_r =\exp(\Theta(r^3))$. The runtime of all the gates in the block is then $|C_r|\cdot T_r =\exp(\Theta(r^3))$. The runtime of qubit shuttling (\cref{lemma:shuttling_runtime}) is subsumed by that of the distillation step, discussed below.

     By \cref{fact:sd_fact}, to distill a single $T$ gate to error $\delta$, we need to teleport $\polylog\frac{1}{\delta}$ noisy $T$ states, and then run a $\polylog\frac{1}{\delta}$-sized logical circuit within $C_r$. We invoke this protocol $\leq |C_r|$ times, to perform a layer of $T$ gates within a data-block of $C_r$. The resulting time-overhead is $|C_r|\cdot \polylog\frac{1}{\delta}\cdot \exp(\Theta(r^3)) = \exp(\Theta(r^3))\cdot \polylog\frac{1}{\delta}$ by \cref{lemma:runtime_teleportation}.   
    
\end{proof}

\begin{lemma}\label{lemma:space_overhead}
    So long as the target logical error rate satisfies $d/\epsilon \leq \exp\exp O(\log^{1/3} m)$, the rate of the resulting computer is $\geq 1/20$.
\end{lemma}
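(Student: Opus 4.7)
The plan is to first choose the concatenation level $r$ just large enough to meet the accuracy budget, and then to compare the physical-qubit overhead of the data $C_r$ blocks storing the $m$ logical qubits against the overhead of the ladders of smaller $C_0, C_1, \ldots, C_{r-1}$ blocks wedged between adjacent data blocks.

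First, I would invoke the per-cycle logical error bound from \cref{theorem:main} and take a union bound over the $d$ circuit-cycles: the hypothesis $d/\epsilon \leq \exp\exp O(\log^{1/3} m)$ then allows choosing $r = \Theta(\log^{1/3} m)$, so that the data-block size $n_r = \exp(\Theta(r^2)) = \exp(\Theta(\log^{2/3} m))$ is sub-polynomial in $m$.

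Next, I would separately count physical qubits in the two layers of the layout. By \cref{theorem:code_parameters}, $\text{Rate}(C_r)$ is bounded above $1/18$ for $r$ large (and in fact converges to roughly $6\%$), so the $\lceil m/k_r\rceil$ data blocks together use at most $m/\text{Rate}(C_r) + n_r \leq 18m + o(m)$ physical qubits. For the ladder overhead, each ladder has size $L_r = \sum_{i<r} n_i$, and the recursion $n_{i+1} = 2(2^{i+5}-1)n_i$ makes the ratio $n_i/n_{i+1}$ decay faster than $2^{-i}$, so the geometric-like sum telescopes to $L_r = n_{r-1}(1+o(1)) = n_r \cdot O(2^{-r})$. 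Since there are $O(m/n_r)$ ladders along the line, the total ladder overhead is $O(m \cdot 2^{-r}) = o(m)$, comfortably absorbed into the $2m$ slack between $18m$ and $20m$.

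The main obstacle I anticipate is purely bookkeeping: confirming that the lower bound $\text{Rate}(C_r) > 1/18$ really is uniform in $r$ beyond some $r_0$, so that the additive $o(m)$ ladder overhead fits into the $2m$ gap below $20m$. This follows from the explicit convergent product formula for $\text{Rate}(C_r)$ derived in the proof of \cref{theorem:code_parameters}, combined with the observation that we only need the bound to hold for $m$ (and therefore $r$) sufficiently large, which is exactly the regime imposed by the hypothesis.
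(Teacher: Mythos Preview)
Your overall strategy matches the paper's: fix $r$ so that the block size $n_r$ is $o(m)$, then show the extra physical qubits introduced by the computer layout are an $o(1)$ fraction. Two points deserve comment.

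First, your parametrization of $r$ differs from the paper's. You set $r=\Theta(\log^{1/3} m)$ by appealing to the memory error bound in \cref{theorem:main}; the paper instead sets $r=\log\log\delta^{-1}$ with $\delta=\poly(\epsilon/(md))$, so that the block length is $\exp(\Theta(\log^2\log\delta^{-1}))$. Under the hypothesis $d/\epsilon\le\exp\exp O(\log^{1/3}m)$ these agree up to constants and both give $n_r=o(m)$, so your space estimate survives, but the paper's choice is the one actually used in the scheme (it is what drives the time overhead in \cref{theorem:main-ft}), and your proof should describe that scheme rather than a different one.

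Second, and more substantively, your bookkeeping misses one overhead that the paper does track. Each data $C_r$ block must reserve $\polylog(\delta^{-1})$ of its \emph{logical} qubits to receive the teleported noisy $T\ket{+}$ states and to run the distillation circuit of \cref{fact:sd_fact}; those logical slots are not available for user data. The paper therefore writes the rate as
\[
\bigg(\text{Rate}(C_r)-\frac{\polylog(\delta^{-1})}{b}\bigg)\cdot\bigg(1-\frac{\exp(O((r-1)^2))}{b}\bigg),
\]
whereas you only account for the second factor (the ladder). The missing term is again $o(1)$, so the conclusion $\ge 1/20$ is unaffected, but without it your accounting is incomplete.
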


\begin{proof}
    The outline for the computer architecture follows the memory construction of \cref{section:proof_of_main}. The $m$ logical qubits are divided into blocks of size $b$, and encoded separately into instances of $r$-blocks. $r$ is chosen such that the target logical error rate per gate is $\delta = \poly(\epsilon/md)$, i.e. $r = \log\log \delta^{-1}$, s.t. the block-length is $\exp \Theta(\log^2\log \delta^{-1})$. The constraint on the target logical error rate ensures this blocklength is $o(m)$.
    
    Within each block, an extra $\polylog\delta^{-1}$ logical qubits are reserved for magic state distillation, and an extra $\exp(O((r-1)^2))$ qubits are reserved for the $T$ state teleportation. The resulting rate is
    \begin{gather}
        > \frac{b}{m}\cdot  \bigg\lfloor\frac{m}{b}\bigg\rfloor \cdot \bigg( \text{Rate}(C_r)- \frac{\polylog(\delta^{-1})}{b}\bigg) \cdot \bigg(1-\frac{\exp(O((r-1)^2))}{b}\bigg) \\ \geq \text{Rate}(C_r) - \polylog(\delta^{-1})\cdot e^{-\Theta(r^2)}-e^{-\Theta(r)} = \text{Rate}(C_r) - o(1) \geq 1/20 
    \end{gather}

    \noindent where at last we leverage \cref{theorem:code_parameters}.    
\end{proof}

\subsection{Fault-tolerance}

We begin with a simple correctness lemma, which states that if at each level $i$ in the teleportation protocol, there are no $i\error$s, then the $T\ket{+}$ state is teleported correctly. 

\begin{lemma}
    [$i$-block teleportation]\label{lemma:teleportation-correctness} Assume \cref{properties:prop1}, \cref{properties:prop2}. Let $C_i$ be an $i$-block with an encoded $T\ket{+}$ state, and let $C_{i+1}$ be the adjacent $(i+1)$-block in the ladder. Assume that every $i$-block within the pair $(C_i, C_{i+1})$ contains no $i\adj$, and that the entire teleportation circuit contains no bad $i$-$\rec$s. Then, the output $i$-blocks contain no bad $i$-$\rec$s. 
\end{lemma}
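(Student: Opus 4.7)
The plan is to decompose the teleportation protocol into the three phases described in Section~\ref{section:tstate-distillation} (cat-state preparation, encoded Bell pair preparation, and Pauli teleportation) and to analyze each separately by invoking the error-propagation machinery already established at level $i$. The high-level observation is that every subcircuit in the teleportation protocol is a composition of $i$-$\meas$ and $i$-$\ec$ gadgets (with some of them acting across the boundary between $C_i$ and $C_{i+1}$), so by the hypothesis that no bad $i$-$\rec$ appears and by \cref{lemma:goodcorrect} (Good $\Rightarrow$ Correct), every $i$-$\rec$ in the circuit is correct. This will let me propagate the ``no $i\adj$'' invariant on each $i$-block through the entire protocol.

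For Phase~1, the cat state is prepared by repeated nearest-neighbor logical $ZZ$-measurements on the reserved qubits, which is structurally identical to the inner cat-state subroutine of $r$-$\ho$ analyzed in \cref{section:hookless-recursive}. Since every $i$-$\rec$ is correct, each $ZZ$-measurement outcome is returned reliably and each participating $i$-block leaves Phase~1 with at most one $i\adj$; the majority rule on the repetition code (in the spirit of \cref{fact:ghz_error_propagation}) then correctly identifies and corrects any residual defects in the cat state. For Phase~2, the logical $XX$ measurement across $C_i$ and $C_{i+1}$ is implemented via parallel $i$-$\meas$/$(i+1)$-$\meas$ gadgets mediated by the cat state from Phase~1; by correctness again, all constituent parity outcomes are reliable and the post-measurement state is an encoded Bell pair with at most one $i\adj$ per block, which is then removed by the surrounding $i$-$\ec$ rounds via [$i$-$\ec$, Faulty Inputs]. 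Finally, Phase~3 consists of Pauli-basis logical measurements and classical feedforward, which under the good-rec assumption return the correct bits; applying the corresponding Pauli corrections completes the teleportation and leaves the target $(i+1)$-block encoding $T\ket{+}$, with every $i$-sub-block free of $i$-errors.

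The main obstacle I anticipate is the cross-level bookkeeping: unlike everything in \cref{section:propagation_proofs}, the teleportation protocol couples code-blocks at two different concatenation depths, so one must justify that the $(i+1)$-block's reserved $i$-level qubits ``look like'' ordinary $C_i$ logical qubits from the perspective of the $i$-$\meas$ gadgets acting on them. My approach is to fix a bijection between the reserved $i$-level qubits inside $C_{i+1}$ and boundary logical qubits of a virtual $C_i$ block, so that every mixed-level gadget call in the protocol can be rewritten as a purely $i$-level operation between adjacent $C_i$ instances; the correctness properties of \cref{properties:prop1} and \cref{properties:prop2} at level $i$ then apply verbatim. Once this reduction is in place, the proof reduces to essentially the same layer-by-layer tracking of $r\adj$s used in \cref{lemma:goodcorrect}, and the conclusion that the output $i$-blocks contain no $i\error$s (so that subsequent $i$-$\rec$s remain correct) follows.
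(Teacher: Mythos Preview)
Your proposal is correct and matches the paper's approach; in fact, the paper's proof is a single sentence stating that the lemma ``follows immediately from the error-propagation properties \cref{properties:prop1}, \cref{properties:prop2}.'' Your phase-by-phase decomposition and the cross-level bookkeeping concern are considerably more detailed than what the paper provides, but the underlying argument---every gadget in the teleportation circuit is an $i$-level $\meas$ or $\ec$, every $i$-$\rec$ is good hence correct by \cref{lemma:goodcorrect}, so the no-$i\adj$ invariant propagates through all three phases---is exactly the intended reasoning.
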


This follows immediately from the error-propagation properties \cref{properties:prop1}, \cref{properties:prop2}. This correctness lemma implies a bound on the fidelity of a single $T\ket{+}$ state teleportation.

\begin{lemma}
    [$T\ket{+}$ state teleportation] Assume properties \cref{properties:prop1}, \cref{properties:prop2}. There exists a constant threshold noise rate $p^*\in (0, 1)$, such that for all $p\leq p^*$ and concatenation level $r\geq 0$, the teleportation protocol of \cref{section:tstate-distillation} encodes a $T\ket{+}$ state into $C_r$ with fidelity $1-\Theta(p)$. 
\end{lemma}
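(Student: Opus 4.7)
The plan is to decompose the infidelity into two contributions: the noise incurred during the initial injection of $T\ket{+}$ at the physical level $C_0$, and the additional noise accumulated while teleporting the encoded state up the ladder $C_0 \to C_1 \to \cdots \to C_r$. We will show that the first contribution is $\Theta(p)$ by construction, while the second contribution is at most $O(p)$, so that the two sum to $\Theta(p)$.

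First I would handle the base case. The physical injection of $T\ket{+}$ into $C_0$ is performed by a constant-depth Clifford-plus-magic-state circuit acting on $O(1)$ qubits of a single $C_0$ block, subject to local stochastic noise at rate $p$. By a direct computation on this constant-sized circuit one obtains a mixed state $\rho_0$ satisfying $\langle T_+ | \rho_0 | T_+ \rangle \geq 1 - \Theta(p)$, where the encoded $T\ket{+}$ on $C_0$ is the reference state. This gives both the matching lower and upper bounds of the claimed $1-\Theta(p)$ fidelity, since at least one location contributes probability $\Omega(p)$ of error.

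Next I would bound the fidelity loss during the ladder teleportation inductively in $i$. By the correctness lemma~\ref{lemma:teleportation-correctness}, if no bad $i$-$\rec$ appears within the level-$i$ teleportation circuit and if the input $i$- and $(i{+}1)$-blocks carry no $i\adj$, then the protocol faithfully transports the encoded logical state from $C_i$ into $C_{i+1}$ (i.e., the encoded-state noise carried in from level $i$ is mapped to the corresponding encoded-state noise on level $i{+}1$, with no additional logical error introduced). Thus the only source of additional infidelity at level $i$ is the event that some $i$-$\rec$ in the level-$i$ portion of the teleportation circuit is bad. By the threshold theorem~\ref{lemma:threshold}, a single $i$-$\rec$ is bad with probability at most $2^{-2^{c\cdot i}}$; a union bound over the $V_i = 2^{\poly(i)}$ many $i$-$\rec$s in the teleportation circuit at level $i$, and then over all levels $1 \leq i \leq r$, gives a total failure probability of
\begin{equation}
\sum_{i=1}^{r} V_i \cdot 2^{-2^{c\cdot i}} \leq C\cdot p
\end{equation}
for sufficiently small $p \leq p^*$, where $C$ is absorbed into the base case of the threshold recursion (recall that $p_{r^*}$ is set proportionally to $p$, so $V_i \cdot p_i = O(p)$ for small $i$ and decays doubly-exponentially thereafter, making the series convergent and $O(p)$-bounded).

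Combining, with probability $1 - O(p)$ no bad $i$-$\rec$ appears at any level, and conditioned on this event the state encoded in $C_r$ at the end of the protocol is exactly the logical image of the physical state $\rho_0$, hence has fidelity $1-\Theta(p)$ with the encoded $T\ket{+}$. The only delicate step is verifying that the inductive hypothesis in lemma~\ref{lemma:teleportation-correctness} — that the input $i$-blocks carry no $i\adj$ — is preserved across levels; this will be the main obstacle, since the ``no $i\adj$'' condition must be re-established on the output blocks after every $C_i \to C_{i+1}$ step. This follows from properties~\ref{properties:prop1}, which guarantee that the post-teleportation $r$-$\ec$s at level $i{+}1$ restore the invariant whenever no additional bad $i$-$\rec$ is encountered, so the induction carries through cleanly conditioned on the union-bound event above.
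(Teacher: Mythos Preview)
Your proposal is correct and follows essentially the same route as the paper: invoke Lemma~\ref{lemma:teleportation-correctness} to reduce correctness of the ladder teleportation to the absence of bad $i$-$\rec$s at every level, then apply the threshold bound of Lemma~\ref{lemma:threshold} and union-bound over the $2^{\poly(i)}$ many $i$-$\rec$s at each level, summing a doubly-exponentially decaying series to get $O(p)$. You are more explicit than the paper in two places---separating out the $\Theta(p)$ infidelity from the physical-level injection into $C_0$ (which supplies the matching lower bound), and spelling out that the ``no $i\adj$'' precondition of Lemma~\ref{lemma:teleportation-correctness} is maintained inductively via Properties~\ref{properties:prop1}---but the core argument is the same.
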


\begin{proof}
    By \cref{lemma:teleportation-correctness}, the $T\ket{+}$ is teleported correctly across all code-blocks in the ladder if at each step $i$ there does not exist any bad $i$-$\rec$s. The $i$th teleportation step consists of a $2^{\poly(i)}$ number of $i$-$\rec$s. This tells us $T\ket{+}$ is teleported correctly with probability all but
    \begin{gather}
        \mathbb{P}[\exists i: \exists \text{ Bad } i-\rec \text{ during }i-\text{block teleportation}]  \leq \\\leq  \sum_i 2^{\poly(i)} \cdot p_i^2  \leq  \sum_i  2^{\poly(i)} \cdot 2^{-2^{ci}} \leq \Theta(p).
    \end{gather}

    Where we used \cref{lemma:threshold} on the percolation of bad $r$-$\rec$s.
\end{proof}

\begin{corollary}
   [Magic State Distillation]\label{lemma:state-distillation} Assume properties \cref{properties:prop1}, \cref{properties:prop2}, and condition on the absence of any bad $(r-1)$-$\rec$s during the teleportation and distillation protocols. Then, for all noise rates $p\leq p^*$, the protocols distills a single $T\ket{+}$ state to fidelity $1-\delta$ within $C_r$.
\end{corollary}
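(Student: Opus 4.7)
The plan is to compose the per-teleportation fidelity guarantee from the preceding lemma with the Bravyi–Kitaev distillation routine of \cref{fact:sd_fact}, then argue that conditioning on the absence of bad $(r-1)$-$\rec$s makes the logical circuit inside $C_r$ behave as the idealized (noiseless) distillation protocol, up to a constant per-copy depolarizing error on each injected $T\ket{+}$.

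First, I would invoke the preceding lemma $\polylog(\delta^{-1})$ times, once per copy of $T\ket{+}$ that the distillation routine requires. Each invocation uses a fresh traversal of the ladder $C_0 \to C_1 \to \cdots \to C_r$, so the pattern of bad $r$-$\rec$s (or lack thereof) in different teleportations are independent events on disjoint spacetime regions of the circuit. Consequently, each individual teleported copy lives in $C_r$ as a logical state $\rho_j$ whose trace distance from $T\ket{+}\bra{+}T^\dagger$ is $\Theta(p)$. Picking $p \leq p^*$ small enough that $1 - \Theta(p) \geq f^{*}$ (the distillation threshold fidelity) ensures the hypothesis of \cref{fact:sd_fact} is satisfied input-by-input.

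Second, I would run the distillation circuit itself as a logical stabilizer circuit inside $C_r$, using the Clifford functionality from \cref{subsection:computation-outline}: each logical Pauli-product measurement is implemented by $r$-$\meas$ flanked by $r$-$\ec$s, and Pauli corrections are tracked classically. Here is where the conditioning in the hypothesis does its work: by \cref{lemma:goodcorrect}, the absence of any bad $(r-1)$-$\rec$s during the distillation phase means every enclosing $r$-$\rec$ is Good, hence Correct; by \cref{lemma:good-decodable}, the logical channel implemented by this stretch of circuit is the ideal Clifford circuit of the distillation routine, with no additional logical noise introduced beyond the already-accounted-for noise on the injected copies. Thus the output of the routine in $C_r$ is exactly the distilled state promised by \cref{fact:sd_fact} applied to the $\rho_j$'s, i.e., a logical qubit of fidelity $1-\delta$ with $T\ket{+}$.

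The main subtlety — and essentially the only place where the argument is not mechanical — is justifying that the $\polylog(\delta^{-1})$ teleported copies are close enough to i.i.d.\ noisy $T\ket{+}$ states for \cref{fact:sd_fact} to apply. The previous lemma only bounds marginal fidelity, not the joint state. I would handle this by noting that (i) the injection circuits for different copies act on disjoint physical qubits and use disjoint slices of the ladder, so the local stochastic noise on different ladder traversals is independent by definition of the noise model, and (ii) the residual correlations from the shared classical control system are Pauli by-products that can be tracked and commuted through the Clifford distillation circuit into classical feedback on the output. This lets the error on each copy be modeled as an i.i.d.\ $\Theta(p)$-fidelity mixed state, which is precisely the input model that \cref{fact:sd_fact} was designed for, completing the proof.
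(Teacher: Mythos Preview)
Your proposal follows the same high-level plan as the paper --- teleport $\polylog(\delta^{-1})$ noisy copies up the ladder, argue the conditioning on no bad $(r-1)$-$\rec$s makes the $r$-level distillation circuit ideal, then invoke a Bravyi--Kitaev distillation fact --- and your use of \cref{lemma:goodcorrect} and \cref{lemma:good-decodable} for the second step is exactly right.

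The one place you and the paper diverge is in justifying the input hypothesis of the distillation routine. You argue the teleported copies are i.i.d.\ $\Theta(p)$-noisy mixed states and apply \cref{fact:sd_fact} directly. Two remarks: (i) your claim that the injection circuits ``act on disjoint physical qubits and use disjoint slices of the ladder'' is not quite accurate --- the protocol reuses the \emph{same} ladder sequentially, resetting it between runs, so independence comes from disjoint \emph{time} slices under local stochastic noise, not disjoint space; (ii) \cref{fact:sd_fact} as stated asks for copies of a single mixed state $\rho$, so you still owe a symmetry argument that the $\rho_j$ are identically distributed, not merely independent. The paper sidesteps both issues: it observes that each teleportation is, independently, either \emph{exactly} $T\ket{+}$ or arbitrary (with failure probability $\Theta(p)$), so with overwhelming probability a $1-\Theta(p)$ fraction of the copies are perfect; it then invokes a strengthened adversarial-input version of the distillation fact (stated inside the proof as \cref{fact:sd_adversarial}) that only requires a large fraction of exact $T\ket{+}$ states among otherwise arbitrary inputs. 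This buys robustness to non-identical marginals and to correlations among the failed copies, at the cost of citing a slightly stronger distillation guarantee.
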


\begin{proof}
    Conditioned on the absence of any bad $(r-1)$-$\rec$s within the various teleportation protocols, then $\polylog(\delta^{-1})$ $T\ket{+}$ states are teleported to within $C_r$ with fidelity $1-\Theta(p)$. In fact, since the existence of bad $\rec$s during the teleportation of each $T\ket{+}$ state is independent, we have the stronger guarantee that a $1-\Theta(p)$ fraction of the states teleported are exactly $T\ket{+}$ with probability all but $\exp(-\polylog(\delta^{-1}))$. 
    
    We invoke a slight strengthening of the $T\ket{+}$ distillation protocol of \cref{fact:sd_fact}, which ensures the resulting $T\ket{+}$ is produced with fidelity $1-\delta$ given sufficiently many noiseless $T\ket{+}$ states (but the remaining states may be adversarially chosen).

    \begin{fact}[\cite{bravyi2005distillation}, under adversarial inputs]\label{fact:sd_adversarial}  Given an $k=\polylog(\delta^{-1})$ qubit state, where an unknown $(1-f) > (1-f^*)$ fraction of the qubits are exact $T\ket{+}$ states, the protocol of \cref{fact:sd_fact} outputs a single qubit with fidelity $1-\delta$ with $T\ket{+}$.
\end{fact}
\end{proof}

We are now in a position to conclude the proof of our Fault-tolerance theorem. 

\begin{proof} 

[of \cref{theorem:main-ft}]
    We consider the circuit execution in layers. We first convert the logical circuit into an architecture of $d \cdot \polylog(\delta)$ layers of nearest neighbor 2-qubit stabilizer gates, and single qubit $T$ gates.
    
    By \cref{lemma:state-distillation}, the correctness guarantees of \cref{section:propagation_proofs}, and an appropriate choice of $\delta$, conditioned on the absence of any bad $(r-1)$-$\rec$s, each stabilizer gate is implemented perfectly, and each $T$ gate is implemented with fidelity $\delta$. The entire logical circuit is then implemented with error $\delta\cdot m\cdot d\cdot \polylog(\delta)$, again under the conditioning. Via the percolation argument \cref{section:percolation}, the existence of a bad $r$-$\rec$ in the entire logical circuit (including the code-block shuffling) is $$\leq d\cdot m\cdot \polylog(\delta) \cdot 2^{2^{-\Theta(r)}} \leq \delta d\cdot m\cdot \polylog(\delta) ,$$

    \noindent under the appropriate choice of $r = \log\log \delta^{-1}$. A choice of $\delta = \poly(\epsilon/nd)$ concludes the proof of correctness.

    The time and space overheads derived in \cref{lemma:comp_time_overhead}, \cref{lemma:space_overhead} conclude the parameters of the result. 
\end{proof}

\suppress{

\subsection{Decoding Stabilizer Measurements}

Broadly speaking, the decoder will operate separately at each level of concatenation. One instance of the decoder will be responsible for decoding level-0 errors, by analyzing detection events caught by all level-0 hookless measurements across all copies of $C_0$. These detection events include noticing that a stabilizer of $C_0$ has changed, as well as noticing faults within the repetition code circuit used to create the cat states for performing hookless measurements. After the level-0 instance of the decoder completes, and its corrections have been accounted for, another instance of the decoder will be responsible for decoding level-$1$ errors, and so on.\footnote{This decoder will analyze detection events caught by level-$1$ hookless measurements. Note that only the outer stabilizers of $C_1$ are used by the level-$1$ decoding process, the inner stabilizers were handled by the level-0 decoding.}

Every error in the noise model at each layer will produce some pattern of detection events (its “symptoms”), and change some of the logical observables (its “syndrome”). Because the fault distance of the
circuit is 3, there will be no errors that have different syndromes but identical symptoms. \footnote{Some
errors will have identical symptoms and syndromes; for the purposes of decoding we consider these
errors to be the same error.}

The decoding algorithm for each level of concatenation operates simply by looking at each detection event produced by the circuit, and considering the set of all detection events in the neighborhood (see Section 2.1) of the detector that produced the detection event. The decoder maintains lookup tables for each detector, identifying the detector’s neighborhood and the known patterns of detection events that correspond to single errors in that neighborhood. Since the stabilizer circuit has data-fault and measurement-fault distance of $3$ (\cref{lemma:}), if exactly one error occurred, then detection events in the neighborhood will uniquely identify that error, and the decoder will include the identified error in the predicted set of errors. Conversely, there are multiple nearby errors, the lookup can fail to uniquely identify the error. When this happens, the decoder simply ignores the detection event.

If an error occurs and no other error causes detection events in its neighborhood, the decoder will correctly decode that error. Said another way: the decoder will only mispredict when errors touch each other. This is key to the fault tolerance of the system.

Once all detection events have been processed in this way, the prediction of which logical
operators are flipped is produced by accumulating the syndromes from the predicted set of errors.
Beware that each error may be identified multiple times, once for each of its detection events.
When this happens the error must only be corrected once, which is why I say predicted set of
errors instead of predicted list of errors.

The simple clustering decoder I’ve described has the property that,

To decode my construction, I use a simple clustering decoder. It will have the property that it
corrects any errors that don’t “touch” each other, in a way I’ll make precise. 

}

\section{Contributions and Acknowledgments}
\label{section:contributions}

Craig created the initial construction and wrote some informal arguments for its correctness. Thiago formalized the proof of correctness and wrote the majority of the paper.

The authors thank Shankar Balasubramanian, Marharyta Davydova,  Louis Golowich, Matt McEwen, and Quynh Nguyen for valuable comments. 

\section{Conclusion}
\label{section:conclusion}

In this paper, we showed that geometrically local stabilizer circuits are fundamentally more powerful than geometrically local stabilizer codes.
In particular, we showed that stabilizer circuits maximally violate the Bravyi-Poulin-Terhal bound~\cite{bravyi2010stabilizerbound2}, by constructing a 1D-local circuit implementing a fault tolerant quantum memory with a constant coding rate. 

There are many ways that it might be possible to improve on our construction.
For example: the threshold could be improved \cite{Yoshida2024ConcatenateCS} and the coding rate could be improved. Most notably, the quasi-polylog time overhead in our construction, inherited from the tower of hamming codes of \cite{Yamasaki_2024}, could be particularly limiting. Typically, a fault tolerant construction is only considered efficient if it has polylogarithmic overheads. More generally, what are the true limits on the \emph{spacetime} overhead of fault tolerant stabilizer circuits in low dimensions?

A particularly interesting improvement would be to achieve fault tolerant quantum computation with constant spatial overhead under even stricter constraints, such as requiring the operations to be \textit{translationally invariant}, akin to the results of \cite{Gcs1983}.

Our results suggest that bounds proven for stabilizer codes do not necessarily generalize to stabilizer circuits.
For example, stabilizer codes need to be at least three dimensional to support constant depth non-Clifford gates~\cite{Bravyi2013}.
Perhaps a stabilizer circuit can loosen this bound in some way.
In general, we recommend caution when assuming a bound proven for stabilizer codes will limit the behavior of real world quantum computers.

\printbibliography


\end{document}